\newcommand{\derivative}[2]{\frac{\mathrm{d}{#1}}{\mathrm{d}{#2}}}
\newcommand{\partialderivative}[2]{\frac{\partial{#1}}{\partial{#2}}}
\newcommand{\gradx}[1]{\mathrm{d}x^{#1}}
\newcommand{\grad}[1]{\mathrm{d}{#1}}
\newcommand{\complexnumbers}{\mathbb{C}}
\newcommand{\integernumbers}{\mathbb{Z}}
\newcommand{\realnumbers}{\mathbb{R}}
\newcommand{\unitelement}{\mathbbm{1}}
\newcommand{\unitelementof}[1]{\mathbbm{1}_{#1}}
\newcommand{\curlyA}{\mathcal{A}}
\newcommand{\curlyB}{\mathcal{B}}
\newcommand{\curlyH}{\mathcal{H}}
\newcommand{\curlyU}{\mathcal{U}}
\newcommand{\curlyh}{\mathscr{h}}
\newcommand{\curlyE}{\mathcal{E}}
\newcommand{\curlyF}{\mathcal{F}}
\newcommand{\curlyS}{\mathcal{S}}
\newcommand{\roundbrackets}[1]{\left({#1}\right)}
\newcommand{\bigroundbrackets}[1]{\bigl({#1}\bigr)}
\newcommand{\Bigroundbrackets}[1]{\Bigl({#1}\Bigr)}
\newcommand{\biggroundbrackets}[1]{\biggl({#1}\biggr)}
\newcommand{\curlybrackets}[1]{\left\{{#1}\right\}}
\newcommand{\bigcommutator}[2]{\bigl[{#1},{#2}\bigr]}
\newcommand{\biglangle}[1]{\bigl\langle{#1}\bigr\rangle}
\newcommand{\SUtwo}{\mathrm{SU}(2)}						
\newcommand{\Uone}{\mathrm{U}(1)}						
\newcommand{\sutwo}{\mathfrak{su}(2)}					
\newcommand{\G}{G}                                      
\newcommand{\g}{\mathfrak{g}}                            
\newcommand{\INT}[2]{\int\limits_{#1}^{#2}} 			
\newcommand{\SUM}[2]{\sum\limits_{#1}^{#2}} 			
\newcommand{\dx}[1]{\!\mathrm{d}{#1\,}}						
\newcommand{\dxd}[2]{\!\mathrm{d}^{#1}{#2\,}}				
\newcommand{\INTset}[1]{{\int\!}_{#1}}
\newcommand{\DAGGER}{^\dagger}
\newcommand{\pr}{^\prime}								
\newcommand{\AST}{^\ast}
\newcommand{\INV}{^{-1}}
\newcommand{\hodgestar}[1]{(\ast{#1})}
\newcommand{\commutator}[2]{\left[{#1},{#2}\right]}			
\newcommand{\innerproduct}[2]{\langle{#1,#2}\rangle}	
\newcommand{\innerproductof}[3]{\langle{#2,#3}\rangle_{#1}}
\newcommand{\diracdeltad}[2]{\delta^{(#1)}\left(#2\right)}				
\newcommand{\ket}[1]{\left|{#1}\right\rangle}				
\newcommand{\set}[2]{\left\{\left.{#1}\,\right\vert\,{#2}\right\}}
\newcommand{\slotdot}{\boldsymbol{\,\cdot\,}}
\newcommand{\Ltwo}[1]{L^2(#1)}
\newcommand{\ltwo}{\ell^2(\mathbb{C})}
\newcommand{\norm}[1]{\lVert{#1}\rVert}
\newcommand{\absvalue}[1]{\lvert{#1}\rvert}
\newcommand{\scpr}[2]{\langle#1\, \vert \, #2 \rangle}
\newcommand{\sscpr}[3]{\langle#1\, \vert \, #2 \, \vert \, #3\rangle} 
\newcommand{\anni}{a}
\newcommand{\crea}{a^\dagger}
\newcommand{\keS}[2]{\kappa({#1},{#2})}
\newcommand{\Fhol}{h_e^{\mathcal{F}}}
\newcommand{\FflexpS}{V_S^{\mathcal{F}}}
\newcommand{\weylF}[2]{W^\mathcal{F}\bigl({#1},{#2}\bigr)}
\newcommand{\pair}[2]{({#1},{#2})}
\newcommand{\kernelS}[2]{{K_S}{#1}(#2)}
\newcommand{\regf}[1]{f_\epsilon(#1)}
\newcommand{\regflux}[2]{{X_\epsilon}{#1}(#2)}
\newcommand{\regsfield}[2]{\phi_\epsilon{#1}(#2)}
\newcommand{\regFflux}[2]{E_\epsilon{#1}(#2)}
\newcommand{\reggenEV}[2]{F_\epsilon{#1}(#2)}
\newcommand{\cfluxSf}{\mathcal{E}_S(f)} 
\newcommand{\cfluxSfpr}{\mathcal{E}_{S^\prime}(f^\prime)}
\newcommand{\FfluxSf}{X_{S,f}\otimes\mathbbm{1}_{\mathcal{F}}+\mathbbm{1}_\mathrm{AL}\otimes\left(\phi(\Gamma(S,f))+\cfluxSf\right)} 
\newcommand{\FfluxSfpr}{X_{S^\prime,f^\prime}\otimes\mathbbm{1}_{\mathcal{F}}+\mathbbm{1}_\mathrm{AL}\otimes\left(\phi(\Gamma(S^\prime,f^\prime))+\cfluxSfpr\right)} 
\newcommand{\anniF}[1]{a(#1)}
\newcommand{\creaF}[1]{a^\dagger(#1)}
\newcommand{\sfield}[1]{\phi(#1)}
\newcommand{\weylcf}[2]{W^\mathcal{F}\bigl({#1},{#2}\bigr)}
\newcommand{\eformfactor}[3]{{F_{#1}}^{#2}(#3)}
\newcommand{\Sformfactor}[3]{{F_{#1}}_{#2}(#3)}
\newcommand{\lplanck}{\ell_\mathrm{P}}
\newcommand{\rep}[1]{\pi\bigl(#1\bigr)}
\newcommand{\repAL}[1]{\pi_\mathrm{AL}\bigl(#1\bigr)}
\newcommand{\repopen}[2]{\pi_{#1}(#2)}
\newcommand{\repcf}[1]{\pi_{\mathcal{F}}\bigl(#1\bigr)}
\newcommand{\HScf}{\mathcal{H}_{\mathcal{F}}}
\newcommand{\HSFock}{\mathcal{H}_{\mathcal{F}}}
\newcommand{\repF}[1]{\pi_{\mathcal{F}}\bigl(#1\bigr)}
\newcommand{\symp}[2]{\mathbf{\sigma}\bigl({#1},{#2}\bigr)}
\newcommand{\cov}[2]{\mathbf{\alpha}\bigl({#1},{#2}\bigr)}
\newcommand{\covs}[2]{\mathbf{\alpha}\left({#1},{#2}\right)} 
\newcommand{\covssym}{\mathbf{\alpha}}
\newcommand{\cons}[1]{\mathbf{\beta}\left({#1}\right)} 
\newcommand{\conssym}{\mathbf{\beta}}
\newcommand{\CCR}[1]{\mathrm{CCR}({#1})}
\newcommand{\CCRW}[1]{\mathrm{CCR^{Weyl}}({#1})}
\newcommand{\state}[1]{\varphi\bigl({#1}\bigr)}
\newcommand{\statewrt}[2]{\varphi_{#1}\bigl({#2}\bigr)}
\newcommand{\aqfsstate}[1]{\varphi\bigl({#1}\bigr)}
\newcommand{\aqfss}{\varphi}
\newcommand{\cyl}{\mathrm{Cyl}}
\newcommand{\espu}[1]{(e^{#1},S^{#1})}
\newcommand{\espd}[1]{(e_{#1},S_{#1})}
\newcommand{\oisec}[2]{\mathrm{I}\left({#1},{#2}\right)}	
\newcommand{\vacexp}[1]{\bigl\langle{#1}\bigr\rangle_\Omega}
\newcommand{\vacexpof}[2]{\bigl\langle{#2}\bigr\rangle_{\Omega_{#1}}}
\newcommand{\ALHS}{\mathcal{H}_\mathrm{AL}}
\newcommand{\ALvac}{\Omega_\mathrm{AL}}
\newcommand{\vac}[1]{\Omega_{#1}}
\newcommand{\AL}{\mathrm{AL}}
\newcommand{\orientedintersection}[2]{\mathrm{I}\left({#1},{#2}\right)}				
\newcommand{\cfluxS}{\mathcal{E}_{S}}
\newcommand{\flucS}{f_{S}}
\newcommand{\flucSpr}{f_{S^\prime}}
\newcommand{\weyl}[2]{W\bigl({#1},{#2}\bigr)}
\newcommand{\Cstar}{$C^\ast$}
\newcommand{\Star}{$\ast$}
\newcommand{\GNS}{(\pi,\mathcal{H},\Omega)}					
\newcommand{\GNSopen}[1]{(\pi_{#1},\mathcal{H}_{#1},\Omega_{#1})}	
\newcommand{\fluxSf}{E_S(f)}				
\newcommand{\fluxSfpr}{E_{S^\prime}(f^\prime)}
\newcommand{\fluxSfopen}[2]{E_{#1}(#2)}		
\newcommand{\fluxvfSf}{X_{S,f}}	
\newcommand{\fluxvfSfpr}{X_{S^\prime,f^\prime}}
\newcommand{\fluxvfSfopen}[2]{X_{#1,#2}}		
\newcommand{\fluxvfacSf}[1]{X_{S,f}[#1]}				
\newcommand{\fluxS}{E_S}				
\newcommand{\fluxvfS}{X_{S}}				
\newcommand{\holonomy}{h_e}				
\newcommand{\holonomyopen}[1]{h_{#1}} 
\newcommand{\holonomyPOE}{\mathcal{P}\exp\bigl(-\INTset{e} A\bigr)}
\newcommand{\eg}{e.\,g.}						
\newcommand{\ie}{i.\,e.}						
\newcommand{\afree}{\mathfrak{A}_\text{free}}
\newcommand{\hfalgebra}{\mathfrak{A}_\text{HF}}
\newcommand{\flux}{\mathfrak{F}}
\newcommand{\cauchy}{\sigma}
\newtheorem{theorem}{Theorem}[section]  
\newtheorem{proposition}[theorem]{Proposition}  
\newtheorem{corollary}[theorem]{Corollary}
\newtheorem{lemma}[theorem]{Lemma}  
\newtheorem{definition}[theorem]{Definition} 
\newtheorem{conjecture}[theorem]{Conjecture}
\begin{document}


\title{Towards Gaussian states for loop quantum gravity}

\author{Hanno Sahlmann}
 \email{hanno.sahlmann@gravity.fau.de}
\author{Robert Seeger}%
 \email{robert.seeger@gravity.fau.de}
\affiliation{Friedrich-Alexander-Universit\"at Erlangen-N\"urnberg (FAU)\\ Institute for Quantum Gravity, Staudtstra{\ss}e 7/B2, 91058 Erlangen, Germany}


\begin{abstract}
An important challenge in loop quantum gravity is to find semiclassical states -- states that are as close to classical as quantum theory allows. This is difficult because the states in the Hilbert space used in loop quantum gravity are excitations over a vacuum in which geometry is highly degenerate. Additionally, fluctuations are distributed very unevenly between configuration and momentum variables. Coherent states that have been proposed to balance the uncertainties more evenly can, up to now, only do this for finitely many degrees of freedom. Our work is motivated by the desire to obtain Gaussian states that encompass all degrees of freedom. We reformulate the $\Uone$ holonomy-flux algebra in any dimension as a Weyl algebra. We then define and investigate a new class of states on this algebra which behave as quasifree states on the momentum variables. Using a general result on representations of the holonomy-flux algebra, we define analogous representations also in the case of non-Abelian compact structure groups. For the case of $\SUtwo$, we study an explicit example of such a representation and the consequences for quantum geometry. This kind of state, with Gaussian fluctuations in the spatial geometry, seems well suited to investigate problems related to structure formation in cosmology. 

\end{abstract}

\maketitle

\tableofcontents

\section{\label{sec_introduction}Introduction}
In quantum field theory (QFT), the states can often be regarded as  excitations over a special state, such as a ground state or thermal state. This state encodes the physical circumstances such as the strength of the fluctuations or the temperature, and the excitations over it inherit many of its basic properties. In the following, we will loosely refer to such a special state as a \emph{vacuum}. 

Loop quantum gravity (LQG) is a QFT without classical background geometry.
The basic fields are a SU(2) connection one-form $A$ and a tensor density $E$  \cite{Ashtekar:1986yd, Barbero:1994ap}:
\begin{equation}\label{eq_ComAE}
    [A_a^i(x),E_j^b(y)]=i\hbar k \delta_a^b\delta_j^i\delta(x,y).   
\end{equation}
Here $k=8\pi G$, with Newton's constant $G$, and we have set the Barbero-Immirzi parameter to 1.  
The field algebra consists of parallel transporters and fluxlike variables 
\begin{equation}
\holonomy=\holonomyPOE, \qquad \fluxSf=\frac{1}{2}\INTset{S}f^j(x)E_j^a(x)\epsilon_{abc}\gradx{b}\wedge\gradx{c}
\end{equation}
where $e$ is a path and $S$ a surface in space. The commutation relations between these operators are purely topological in nature. The resulting algebra (along with its generalization to arbitrary dimension and structure group) is called holonomy-flux (HF) algebra.  Due to the nature of the commutation relations, the spatial diffeomorphisms act on the HF algebra as algebra automorphisms. The natural vacuum state, i.\,e., the one that is invariant under the action of the spatial diffeomorphisms, has very different properties from those in QFT on Minkowski space. Physically, this Ashtekar-Lewandowski (AL) state \cite{Ashtekar:1994mh} corresponds to a degenerate spatial metric $q_{ab}=0$ and a canonically conjugate extrinsic geometry with infinite fluctuations. This state is a natural ground state when general covariance is at the forefront. In fact, it is the unique diffeomorphism invariant state \cite{Fleischhack:2004jc, Lewandowski:2005jk} (see, however, \cite{Dziendzikowski:2009rv,Varadarajan:2007dk}).
The AL state is a special case in a whole family of states, all peaked on spatial geometry \cite{Koslowski:2007kh, Sahlmann:2010hn}. Apart from the AL state, these states are not invariant under spatial diffeomorphisms. 

At the other end of the spectrum, there is a construction of a vacuum state due to Dittrich and Geiller (albeit on a modified algebra) which is dual to the AL state in the sense that it is peaked on flat extrinsic geometry, while fluctuations in the spatial metric are maximal \cite{Dittrich:2014wpa,Bahr:2015bra}. See also \cite{Drobinski:2017kfm} for an elegant formulation of this idea in the Abelian case.  

When it comes to the description of classical spacetime, neither excitations above the AL vacuum nor above the Dittrich-Geiller vaccum are particularly suitable, due to the uneven distribution of fluctuations. AL excitations have been used to construct coherent states \cite{Thiemann:2000bw,Thiemann:2000ca,Thiemann:2000bx}, but these states have semiclassical properties only for finitely many degrees of freedom. Going over to infinitely many degrees of freedom leads to new measures on the space of connection fields \cite{Thiemann:2000by, Thiemann:2002vj}. 
It is also possible to transfer Gaussian measures from background dependent QFT to the space of connections used in LQG \cite{Varadarajan:1999it,Ashtekar:2001xp}, but the resulting Hilbert spaces so far do not support the holonomy flux algebra of LQG \cite{Sahlmann:2002xv}. 
Coming from a different angle, a series of works by Bianchi and collaborators explores states with high entanglement between neighboring regions of geometry \cite{Bianchi:2016tmw,Baytas:2018wjd,Bianchi:2018fmq}. Such states are also highly excited compared to the vacuum. As initially defined, they only comprise finitely many degrees of freedom. It is, however, possible to apply these ideas to Gaussian states on systems with infinitely many degrees of freedom \cite{Bianchi:2019pvv}.

The present work is also concerned with finding new states for LQG. For this it is necessary to first define the algebras carefully. We do this for the HF algebra in Sec. \ref{se:hfdef}. We should point out that we use a definition that is less strict than that of \cite{Lewandowski:2005jk,Stottmeister:2013qra}, in that it does not contain all relations among iterated commutators that are present in the AL representation. We also consider the case of the structure group $\Uone$ \cite{CorichiKrasnov98}.
This is interesting, because the relations of the HF algebra can be brought into the form of a Weyl algebra, by going over to the algebra elements \cite{Nekovar14}
\begin{equation}
\weyl{e}{S}=e^{\frac{i}{2}\orientedintersection{e}{S}}\holonomy e^{i\fluxS}.
\end{equation}
This formulation makes contact with free quantum fields on a fixed geometric background. 

As others before us, we are unable to find states for the HF algebra that are Gaussian with uncertainties split between the canonical variables. But we will describe a new type of state in which the flux operators have Gaussian fluctuations, whereas the properties of the holonomies are those of the AL representation. To be precise, the product of a holonomy and a flux has a vacuum expectation value 
\begin{equation}\label{eq:newstate}
    \langle 0 \lvert h_e\,  e^{iE_S(f)}\rvert 0 \rangle
    =\delta_{e,0}e^{-\frac{1}{2}\alpha_{S}(f,f)}e^{i\mathbf{\beta}_S(f)}, 
\end{equation}
where $\alpha_{S}$ is an $S$-dependent bilinear form and $\beta_S$ a linear one. 
This kind of state gives a new representation of the HF algebra, in which the spatial geometry fluctuates around an average value given by $\mathbf{\beta}_S$. If one chooses the covariance $\alpha$ to be vanishing, one obtains states of the type considered in \cite{Koslowski:2007kh, Sahlmann:2010hn,Varadarajan:2013lga,Campiglia:2013nva,Campiglia:2014hoa}. 

One motivation for the consideration of these states is the quantum origin of the primordial perturbations. The current observations of the cosmic microwave background (CMB) can be described by saying that primordial perturbations of the density and the spatial metric seem to be described by a Gaussian random field with a certain covariance. A state of the form \eqref{eq:newstate} describes a quantized spatial geometry that fluctuates around a background given by $\mathbf{\beta}$, with Gaussian correlations given by $\alpha$. Thus the states that we describe might be well suited to describe the quantum geometry of the early universe. 

We will start the discussion by reviewing some aspects of LQG  and the definition and properties of Weyl algebras, in Secs. \ref{se:lqg} and \ref{se:weyl}, respectively. 
We continue with a precise definition of the HF algebra for arbitrary dimension and compact gauge group in Sec. \ref{se:hfdef}, and we state and prove a result on modifications of its representations in Sec. \ref{sec_RepsExtFluxes}.
In Sec. \ref{sec_WeylAlgUone} we will discuss the $\Uone$ model in detail and show that its HF algebra derives from a Weyl algebra. New states for this algebra are then defined in Sec. \ref{se:aqfree}. 
Analogous states in the case of a non-Abelian gauge group are discussed in Sec. \ref{se:sutwo}, with the  definition of the new states in Sec. \ref{se:sutwoqfree}, and a discussion of the changes of the area spectrum due to the fluctuations in Sec. \ref{se:area}. We end with a short summary and outlook, Sec. \ref{se:outlook}. 

\subsection{Quantization of diffeomorphism invariant theories of connections}\label{se:lqg}
We consider the canonical formulation of a gauge theory on a globally hyperbolic ($D+1$)-dimensional spacetime with a Cauchy surface $\cauchy$. We assume the structure group $\G$ to be a compact Lie group. We chose the canonical position variable to be a $\G$ connection $A$ on $\cauchy$. The canonical momentum is then a densitized vector field $E$ taking values in $\g$. Ashtekar-Barbero variables in LQG \cite{Ashtekar:2004eh,Thiemann2007}
are a special case of this, with $\G=\SUtwo$ and $D=3$. 

In a setting without a fixed spacetime metric, natural variables beyond the point fields $A$ and $E$ are holonomies of $A$ along one-dimensional paths and integrals of $E$ over ($D-1$)-dimensional surfaces, an electrical flux. We write
\begin{equation}
\holonomy=\holonomyPOE
\end{equation}
for the  $\G$ valued holonomy along $e$.  For the fluxes of $E$ we smear against $\g$-valued functions $f$ with support on a ($D-1$)-dimensional, orientable hypersurface $S$, given by
\begin{equation}\label{eq_flux}
\fluxSf=\frac{1}{(D-1)!}\INTset{S}f^j(x)E_j^a(x)\epsilon_{ab_1\ldots b_{D-1}}\gradx{b_1}\wedge\ldots \wedge\gradx{b_{D-1}}.
\end{equation}
The canonical commutation relation of holonomies and fluxes is only sensitive to intersection points of the corresponding edge and surface. For a pair of edge and surface intersecting in only one of the end points of $e$ it is given by
\begin{equation}\label{eq_LQG_HolFluxCommRel}
\commutator{\holonomy}{\fluxSf}=\frac{\hbar k}{4i}\,\kappa(e,S)
\begin{cases}
\holonomy f(p) & \text{ for } p=e\cap S \text{~source of~}e\\
- f(p)\holonomy & \text{ for } p=e\cap S \text{~target of~}e 
\end{cases}.
\end{equation}
Here, $\kappa(e,S)$ encodes partly the relative orientation of edge and surface. It is $\pm1$ for an edge above or below the surface.  

Sums and products of holonomy functionals span the so-called cylindrical functions. The definition is as follows. Consider a graph embedded in $\cauchy$, i.e. a finite collection of edges that are allowed to build out vertices by intersections of their beginning and final points, {\ie} $\gamma=\curlybrackets{e_1,e_2,\dots,e_n}$. One further speaks of the set of edges $E(\gamma)$ and the set of vertices $V(\gamma)$ of a given graph $\gamma$. 
Now we call $\curlyA$ the space of smooth connections and look at functions on this space,
\begin{equation}\label{eq_CylFuncMap}
F:\curlyA\rightarrow\complexnumbers.
\end{equation}
Such a function is said to be smooth cylindrical with respect to a given graph $\gamma$ if there is a smooth function 
\begin{equation}
F_\gamma:{\G}^{\absvalue{E(\gamma)}}\rightarrow\complexnumbers,
\end{equation}
such that the function on $\curlyA$ can be expressed by this function of powers of $\G$, {\ie} by setting for $A\in\curlyA$
\begin{equation}
F(A)=F_\gamma\roundbrackets{\curlybrackets{\holonomy(A)}_{e\in{E(\gamma)}}}. 
\end{equation}
The smooth cylindrical functions form an algebra which we will call $\cyl$.\footnote{Note that the smooth cylindrical functions are sometimes referred to as $\cyl^\infty$, in contrast to our present notation which is in keeping with \cite{Lewandowski:2005jk}.} 
The fluxes $\fluxSf$ can be used to define the Hamiltonian vector fields  
\begin{equation}
\fluxvfSf=\commutator{\fluxSf}{\slotdot},
\end{equation} 
which act on $\cyl$.  

The \Star-Lie algebra on which the quantum theory may be based is the algebra generated by smooth cylindrical functions and the flux vector fields by commutators as multiplication, subject to complex conjugation as involution. The resulting quantum \Star-algebra is known as the HF algebra. For the precise definition that we will use, see Sec. \ref{se:hfdef}. 

There is a unique diffeomorphism invariant representation of the HF algebra \cite{Lewandowski:2005jk,Fleischhack:2004jc}. 
\begin{theorem}
For $D\geq 2$ the Ashtekar-Isham-Lewandowski state 
\begin{equation}
\statewrt{\mathrm{AL}}{F\fluxvfSfopen{S_1}{f_1}\fluxvfSfopen{S_2}{f_2}\dots\fluxvfSfopen{S_n}{f_n}}=
\begin{cases}
0 & n>0\\
\mu_0(F) & n=0
\end{cases}
\end{equation}
is the only diffeomorphism invariant state on the HF algebra. Here, the state acts on elements of $\cyl^\infty$ as 
\begin{equation}
\mu_0(F)= \INTset{\G^{\absvalue{E(\gamma)}}}\prod\limits_{e\in E(\gamma)}\dx{\mu_\mathrm{H}}(g_e) F_\gamma\roundbrackets{\curlybrackets{g_e}_{e\in E(\gamma)}}.
\end{equation} 
\end{theorem}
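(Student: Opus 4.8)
The plan is to establish uniqueness by an averaging argument, and existence by verifying directly that $\varphi_{\mathrm{AL}}$ is a well-defined positive normalized functional that is diffeomorphism invariant. For existence, the key observation is that the formula is well defined on $\cyl$ because the Haar measure is consistent under the projections associated to graph refinement: if a function is cylindrical with respect to two graphs $\gamma \subseteq \gamma'$, integrating out the extra edges with Haar measure reproduces the same number, using $\INTset{\G}\dx{\mu_\mathrm{H}}(g)=1$ and translation invariance to handle edges that are subdivided or inverted. Positivity and normalization on the cylindrical part are immediate since $\mu_0$ is an honest probability measure on $\curlyA$ (the Ashtekar-Lewandowski measure). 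The only subtlety is positivity of the full state on the HF algebra: a general algebra element is a sum $\sum_I F_I \fluxvfSfopen{S_1^I}{f_1^I}\cdots$, and one must check that $\statewrt{\mathrm{AL}}{a^\ast a}\geq 0$. Because the flux vector fields annihilate the state, only the purely cylindrical part of $a^\ast a$ survives, and one reduces to $\mu_0(\overline{F}F)\geq 0$, which holds. Diffeomorphism invariance then follows because a diffeomorphism maps a graph to a graph, maps $\fluxvfSfopen{S}{f}$ to $\fluxvfSfopen{\psi(S)}{\psi_\ast f}$, and leaves $\mu_0$ invariant by translation invariance of Haar measure — so the defining formula is manifestly covariant.

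For uniqueness, I would argue in two stages. First, show that any diffeomorphism invariant state $\varphi$ must annihilate every element containing at least one flux vector field. The mechanism is that for a fixed surface $S$ and smearing $f$ one can find diffeomorphisms $\psi$ that fix the relevant graph $\gamma$ (on which the cylindrical factor $F$ is based) pointwise in a neighborhood of its intersection with $S$, but act nontrivially elsewhere on $S$, so that $\psi$ fixes $F$ and $X_{S_i,f_i}$ acting on $\cyl$ but still lets one generate a continuum of mutually "independent" copies; invariance then forces the expectation to vanish, by the standard argument that a diffeomorphism invariant state cannot assign a nonzero value to an operator that can be displaced to infinitely many commuting copies (cf.\ the uniqueness proofs in \cite{Lewandowski:2005jk, Fleischhack:2004jc}). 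One has to be careful to handle the Leibniz-rule structure: $X_{S,f}$ acts as a derivation, so $\varphi(F\, X_{S,f})$ is not simply related to $\varphi(X_{S,f}\, F)$, but iterating commutators and using that higher brackets of fluxes are again expressible through $\cyl$-valued coefficients times fluxes keeps the argument closed within the algebra as defined in Sec.\ \ref{se:hfdef}. Second, once $\varphi$ is known to vanish on everything with $n>0$ fluxes, it is determined by its restriction to $\cyl$, and a diffeomorphism invariant state on $\cyl$ is a diffeomorphism invariant (regular, normalized, positive) measure on $\curlyA$; by the results invoked in the cited references the Ashtekar-Lewandowski measure $\mu_0$ is the unique such measure, which pins down $\varphi=\varphi_{\mathrm{AL}}$.

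The main obstacle is the first stage of uniqueness: producing, for an arbitrary algebra element, enough diffeomorphisms that simultaneously leave the cylindrical data untouched and move the flux contributions into genuinely independent position. For a single flux this is the familiar "move the surface off the graph" trick, but for products $\fluxvfSfopen{S_1}{f_1}\cdots\fluxvfSfopen{S_n}{f_n}$ acting on some $F$ one needs the intersection pattern of all the $S_i$ with the underlying graph to be generic, and one must use semianalyticity (or the piecewise-analytic/smooth category assumed for the surfaces and diffeomorphisms) to guarantee that intersections are finite and can be controlled. This is exactly where the detailed hypotheses on the category of surfaces and the dimension restriction $D\geq 2$ enter: in $D\geq 2$ a hypersurface has codimension one, so there is room to push it around while pinning it near the graph, whereas in $D=1$ surfaces are points and the argument degenerates. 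I expect the authors to either reproduce the relevant part of the argument of \cite{Lewandowski:2005jk} or simply cite it, since the present HF algebra is a subalgebra of the one treated there and invariance of a state restricts to invariance of its restriction.
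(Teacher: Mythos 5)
First, a point of comparison: the paper does not prove this theorem at all — it is quoted from the literature, with the proof residing in \cite{Lewandowski:2005jk} (and, in the Weyl-algebra setting, \cite{Fleischhack:2004jc}). So your sketch is being measured against the standard LOST-type argument, and while its overall skeleton (fluxes forced into the kernel, then reduction to a measure on $\overline{\curlyA}$) is the right one, the second stage as you state it contains a genuine error. Diffeomorphism invariance alone does \emph{not} single out $\mu_0$ among states on $\cyl$: the Dirac measure concentrated on the trivial connection (all holonomies equal to $\unitelement$) is positive, normalized and diffeomorphism invariant, as are many other measures supported on diffeo-invariant subsets of $\overline{\curlyA}$. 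The actual mechanism is that once the fluxes annihilate the GNS cyclic vector, one gets $0=\statewrt{}{\fluxvfSf\,F}=\statewrt{}{\fluxvfSf[F]}$ for every $F\in\cyl$, i.e.\ the measure kills all derivatives along the flux vector fields; since these generate the left/right translations on the copies of $\G$ attached to the edges, this invariance forces the Haar measure edge by edge and hence $\mu_0$. Without this step your uniqueness argument does not close.

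Two further points. Your first stage is also lighter than what is needed: "an invariant state cannot assign a nonzero value to an operator displaced to infinitely many commuting copies" must be implemented through positivity — a Cauchy--Schwarz estimate on $\statewrt{}{\fluxvfSf^{*}\fluxvfSf}$ using infinitely many diffeomorphic copies of $S$ that act identically on a fixed graph — not through invariance alone. And the way you import the cited result is backwards: the algebra of Sec.~\ref{se:hfdef} has \emph{fewer} relations than that of \cite{Lewandowski:2005jk}, so it surjects onto the latter rather than sitting inside it as a subalgebra; uniqueness there does not automatically transfer, one must check that the proof uses only the relations \eqref{eq:g1}--\eqref{eq:g4}. (Your existence check is essentially fine, except that reducing $\statewrt{\AL}{a^{*}a}$ to $\mu_0(\overline{F}F)$ also uses $\mu_0\bigl(\fluxvfSf[G]\bigr)=0$ for the cross terms produced by normal ordering — again Haar invariance, which you should state.)
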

A sequence of holonomies and flux vector fields can always be brought into a normal ordered form --~flux vector fields to the right and holonomies to the left~-- by using the commutation relations. The resulting expression is by linearity of the states a sum of terms of the form used in the theorem. Furthermore, the measure that is used in the integration over the powers of $\G$ is just the product of the Haar measures for the individual copies of $\G$.   

The representation that arises from this state via the Gelfand-Naimark-Segal (GNS) construction is called AL representation \cite{Ashtekar:1994mh,Ashtekar:1994wa}. The Hilbert space is the space of square integrable functions over the space of generalized --~specifically distributional~-- connections:
\begin{equation}
\curlyH=\mathrm{L}^2\roundbrackets{\overline{\curlyA},\dx{\mu_\mathrm{AL}}}.
\end{equation}

The representation of holonomies and cylindrical functions and fluxes is similar to the ordinary Schr\"odinger representation, {\ie} as a multiplication operator and a derivative. For the cylindrical functions one sets 
\begin{equation}\label{eq:ALh}
\bigroundbrackets{\repopen{\mathrm{AL}}{F}\Psi}(A)=F(A)\Psi(A).
\end{equation}

The AL representation of the fluxes that act on cylindrical functions is the following:
\begin{equation}\label{eq:ALE}
\bigroundbrackets{\repopen{\mathrm{AL}}{\fluxSf}\Psi}(A)=\fluxvfacSf{\Psi}=\frac{\hbar k}{2}\SUM{v\in V(\gamma)}{}f^j(v)\SUM{e\in E(v)}{}\kappa(e,S)J^{(v,e)}_j\Psi(A).
\end{equation} 
The first sum runs over all vertices of the graph $\gamma$ underlying the cylindrical function $\Psi$. Here it is assumed that every intersection point between edges and the surface is a --~possibly trivial~-- vertex considered in $V(\gamma)$.  At these vertices one evaluates the smearing function of the flux vector fields. The second sum now runs over all edges $e$ that begin or end at the vertex $v$. The object $J^{(v,e)}_j$ encodes the left- and right-invariant vector fields acting on copies of $\G$ assigned to the specific edges of a vertex, for out- and ingoing edges, respectively \cite{Lewandowski:2005jk}.

This algebra arose from the consideration of the special case $\G=\SUtwo$ in LQG. It has the advantage that diffeomorphisms of $\cauchy$ act on the algebra in a very simple way. 

\subsection{Review of Weyl algebras}\label{se:weyl}
In the following we want to give a short introduction to the topic of Weyl algebras of canonical commutation relations and quasifree states. We base our discussion on \cite{Petz1990,Derezinski2013}.

A Weyl algebra is a \Cstar-algebra, solely constructed from a canonical commutation relation (CCR). It is sensible to present the construction in several steps. 

\begin{definition}
\label{def_CCRstarAlgebra}
The CCR \Star-algebra over the presymplectic space $(H,\sigma)$, denoted by $\CCR{H,\sigma}$, is the algebra generated by the Weyl elements $W(X)$, $X\in H$ that satisfy the Weyl relations 
\begin{equation}
W(X)W(Y)=e^{-\frac{i}{2}\symp{X}{Y}}W(X+Y)
\end{equation}
and are subject to the involution $\ast$, such that 
\begin{equation}
W(X)\AST=W(-X)=W(X)^{-1}.
\end{equation}
\end{definition}

The Weyl elements are unitary with respect to the involution, and hence a representation of this algebra has to be a unitary representation.

\begin{definition}
\label{def_CCRrep}
Let $\curlyH$ be a Hilbert space and let $\pi:H\rightarrow\curlyU(\curlyH)\subset\curlyB(\curlyH)$, $X\mapsto\pi(W(X))$ be a map from the real vector space $H$ into the unitary operators on $\curlyH$. Then $(\pi,\curlyH)$ is a representation of a canonical commutation relation in terms of a  presymplectic form or equivalently a CCR representation over the presymplectic space $(H,\sigma)$, if the so-called Weyl operators satisfy the Weyl relations 
\begin{equation}
\rep{W(X)}\rep{W(Y)}=e^{-\frac{i}{2}\symp{X}{Y}}\rep{W(X+Y)}
\end{equation}
and $\pi$ is a $\ast$-homomorphism, {\ie} $\rep{W(X)\AST}=\rep{W(X)}\DAGGER$.
\end{definition}

In principle, $\CCR{H,\sigma}$ is homomorphic to a \Cstar-algebra of bounded operators on a given representation Hilbert space, because of the unitarity of the Weyl elements. In order to promote the \Star-algebra to a \Cstar-algebra one exploits the fact that with $(\pi_{\ell^2},\ell^2(H))$ there is always a \Star-representation, which allows for the definition of a norm that is highly dependent on CCR \Star-representations. It is known as the minimal regular norm \cite{Petz1990}. 

\begin{definition}
\label{def_WeylAlgebra}
We equip the \Star-algebra $\CCR{H,\sigma}$ with the minimal regular norm 
\begin{equation}
\label{eq_minimal_regular_norm}
\norm{A}:=\sup\set{\norm{\rep{A}}_\curlyH}{(\pi,\curlyH)\text{~is a CCR \Star-representation}}.
\end{equation}
The \Cstar-algebra, obtained by the completion of $\CCR{H,\sigma}$ with respect to the minimal regular norm, is referred to as Weyl algebra or Weyl CCR algebra over the presymplectic space $(H,\sigma)$:
\begin{equation}
\label{eq_weyldef}
\CCRW{H,\sigma}:=\overline{\CCR{H,\sigma}}.
\end{equation}
\end{definition}
Equation \eqref{eq_minimal_regular_norm} is meaningful since there is always a representation of $\CCR{H,\sigma}$ by explicit construction on the Hilbert space of square summable sequences indexed by $H$. In the case of a nondegenerate $\sigma$, the corresponding Weyl algebra is also unique in the sense that any \Cstar-algebra generated by elements $W(X)$ as in definition \ref{def_CCRrep} is isomorphic to $\CCRW{H,\sigma}$ \cite{Slawny:1972iq}. On $\CCR{H,\sigma}$ there exists a tracial state $\tau$ with the property \cite{Slawny:1972iq,Petz1990}
\begin{equation}
\tau\left(\sum_{X\in H}f(X)W(X)\right)= f(0), 
\end{equation}
where it is understood that the function $f:H\rightarrow\complexnumbers$ is nonzero only in finitely many places. 

It is useful to note for later that there cannot be linear relations among the $\weyl{e}{S}$. 
\begin{lemma}
\label{le:linear_independence}
The $W(X), X\in H$ form a basis of $\CCRW{H,\sigma}$. 
\end{lemma}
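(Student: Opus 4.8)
The plan is to exhibit, for each finitely-supported $f:H\to\complexnumbers$ with $\sum_{X\in H}f(X)W(X)=0$, a functional on $\CCR{H,\sigma}$ that separates the Weyl generators and forces every coefficient $f(X)$ to vanish. The natural candidate is the tracial state $\tau$ recalled just before the lemma, which satisfies $\tau\bigl(\sum_X f(X)W(X)\bigr)=f(0)$. Applying $\tau$ to the relation immediately gives $f(0)=0$; to reach the other coefficients one translates. Concretely, for a fixed $Y\in H$ consider the element $W(-Y)\cdot\bigl(\sum_X f(X)W(X)\bigr)$, which is still zero. First I would use the Weyl relation $W(-Y)W(X)=e^{-\frac{i}{2}\symp{-Y}{X}}W(X-Y)=e^{\frac{i}{2}\symp{Y}{X}}W(X-Y)$ to rewrite this element as $\sum_X f(X)e^{\frac{i}{2}\symp{Y}{X}}W(X-Y)=\sum_{Z}f(Z+Y)e^{\frac{i}{2}\symp{Y}{Z+Y}}W(Z)$, reindexing by $Z=X-Y$. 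Then $\tau$ of this is the coefficient of $W(0)=\unitelement$, namely $f(Y)e^{\frac{i}{2}\symp{Y}{Y}}=f(Y)$, since $\symp{Y}{Y}=0$ by antisymmetry of the presymplectic form. Hence $f(Y)=0$ for all $Y\in H$, which is the linear independence statement.

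The remaining point is that the $W(X)$ also \emph{span} a dense subspace, which is essentially the definition: $\CCR{H,\sigma}$ is generated as a $\ast$-algebra by the $W(X)$, and by the Weyl relations any product $W(X_1)\cdots W(X_n)$ collapses (up to a phase) to a single $W(X_1+\dots+X_n)$, while involution sends $W(X)$ to $W(-X)$; so the finite linear combinations of Weyl elements already form a $\ast$-subalgebra, which must therefore be all of $\CCR{H,\sigma}$, and it is norm-dense in $\CCRW{H,\sigma}=\overline{\CCR{H,\sigma}}$ by Definition \ref{def_WeylAlgebra}. Combining this with the linear independence just shown, $\{W(X):X\in H\}$ is a (Hamel) basis of $\CCR{H,\sigma}$ and a topological/linear basis in the appropriate sense of $\CCRW{H,\sigma}$, as asserted. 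In the application of interest one sets $H$ to be the relevant space of pairs $(e,S)$ and $\sigma$ the oriented-intersection form, so there are in particular no linear relations among the $\weyl{e}{S}$.

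The main obstacle is not computational but definitional: one must make sure the tracial state $\tau$ is available on the bare $\ast$-algebra before completion (it is, by the cited results of Slawny and Petz, and does not require $\sigma$ to be nondegenerate), and one must be careful about what ``basis'' means once we pass to the $C^\ast$-completion — the $W(X)$ are a linear (Hamel) basis of the dense $\ast$-subalgebra $\CCR{H,\sigma}$, not a topological basis of the full Weyl algebra, and the statement should be read accordingly. Everything else — the reindexing, the use of $\symp{Y}{Y}=0$, the collapse of products — is routine manipulation of the Weyl relations.
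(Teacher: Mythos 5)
Your proof is correct and rests on the same key ingredient as the paper's, namely the tracial state $\tau$ combined with the Weyl relations: the paper extracts all coefficients at once via the estimate $\norm{a}^2\geq\tau(a^*a)=\sum_i|c_i|^2$, whereas you extract them one at a time by applying $\tau$ to the translate $W(-Y)a$ — a cosmetic difference. Your closing remarks on spanning and on reading ``basis'' as a Hamel basis of the dense $\ast$-subalgebra are also in line with how the lemma is used later, where only the absence of linear relations among the $\weyl{e}{S}$ is needed.
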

\begin{proof}
Consider any linear combination $a=\sum c_i W(X_i)$ of mutually distinct elements $W(X_i)$, where we assume $c_i\neq 0\;\forall\; i$ . Using the state $\tau$ we can estimate 
\begin{equation}
   \norm{a}^2\geq \tau(a^*a) = \sum_i |c_i|^2,   
\end{equation}
and so $a$ can never be 0. 
\end{proof}
The notion of quasifree states has been introduced by Araki in the 1960s \cite{Araki63,Araki:1970zza,Araki:1970zz}. This is the most general class of Gaussian states, {\ie} expectation values of Weyl elements are Gaussian functions, or $n$-point correlation functions of field operators separate into partitions of two-point correlation functions.

\begin{definition}
Let $\varphi$ be a state on $\CCRW{H,\sigma}$. It is called quasifree if it is of the form
\begin{equation}
\state{W(X)}=e^{-\frac{1}{2	}\cov{X}{X}}\qquad\forall X\in H,
\end{equation}
where $\alpha:H\times H\rightarrow \realnumbers$ is a positive symmetric  bilinear form on $H$, called covariance of the quasifree state.
\end{definition}

The existence of such quasifree states is tied to a condition on the covariance and the symplectic form, namely 
\begin{equation}\label{eq_inequality}
\sqrt{\cov{X}{X}}\sqrt{\cov{Y}{Y}}\geq \frac{1}{2}\absvalue{\symp{X}{Y}}\qquad \forall X,Y \in H.
\end{equation}
With this, the existence is reduced to finding a covariance that satisfies the determining inequality. 

We want to list some important features of quasifree states and refer to \cite{Petz1990,Derezinski2013} for details. Let $\GNS$ be the GNS representation of the quasifree state. 
First of all, quasifree states are regular.
Considering this, Stone's theorem applies and from the representation of the Weyl  elements one obtains self-adjoint field operators:
\begin{equation}
B_\pi(X)=\frac{1}{i}\derivative{\phantom{t}}{t}\rep{W(tX)}\big|_{t=0}=\frac{1}{i}\derivative{\phantom{t}}{t} e^{iB_\pi(X)t}\big|_{t=0}.
\end{equation}
These field operators are linear in their arguments and satisfy the following commutation relations on $\mathcal{D}=\operatorname{span} \{W(X)\Omega \,|\, X\in H \}$:
\begin{equation}\label{eq_commutator_Weyl_field}
\begin{aligned}
&\commutator{B_\pi(X)}{B_\pi(Y)}=i\symp{X}{Y},\\
&\commutator{\rep{W(X)}}{B_\pi(Y)}=-\symp{X}{Y}\rep{W(X)}.
\end{aligned}
\end{equation}
The two-point correlation functions only depend on the covariance and the symplectic form:
\begin{equation}
\biglangle{B_\pi(X)B_\pi(Y)}_\Omega=\cov{X}{Y}+\frac{i}{2}\symp{X}{Y}.
\end{equation} 

The separation of correlation functions into two-point functions follows from Stone's theorem since
\begin{equation}
\begin{aligned}
&\bigl\langle{B_\pi(X_1)B_\pi(X_2)\dots B_\pi(X_n)}\bigr\rangle_\Omega=\frac{1}{i^n}\partialderivative{^n}{t_1\partial t_2\dots\partial t_n}\state{W(t_1X_1)W(t_2X_2)\dots W(t_nX_n)}\big|_{t=0}\\
&=\frac{1}{i^n}\partialderivative{^n}{t_1\partial t_2\dots\partial t_n} 
\exp\curlybrackets{-\frac{1}{2} \SUM{j=1}{n} (t_j)^2 \cov{X_j}{X_j}-\SUM{j<k}{}t_jt_k\roundbrackets{\cov{X_j}{X_k}+\frac{i}{2}\symp{X_j}{X_k}}}
\Bigg|_{t=0}.
\end{aligned}
\end{equation} 
Clearly this vanishes for odd $n$. For $n$ even in contrast, there is still only one nonvanishing term, which carries a linear contribution for every single $t_k$. This is exactly the term that is separated into a two-point function. Summarizing, we have
\begin{equation}\label{eq_CCR_WickDecompFieldOperators}
\bigl\langle B_\pi(X_1)B_\pi(X_2)\dots B_\pi(X_n)\bigr\rangle_\Omega=
\left\{
\begin{array}{l l}
0  & n \text{~odd}\\
\underset{\text{partitions}}{\sum\prod} \bigl\langle B_\pi(X_i)B_\pi(X_j)\bigr\rangle_\Omega & n \text{~even}
\end{array}
\right. .
\end{equation}

\section{A general class of representations of the holonomy-flux algebra}\label{sec_generalresult}
To properly talk about different representations, we have to precisely define the algebra that we have informally described in Sec. \ref{se:lqg}. After giving this definition, we will state and prove a general result about its representations that will underlie the construction of the new states in later parts of this work. 

\subsection{The holonomy-flux algebra}\label{se:hfdef}
A precise definition of the HF algebra for general $D$ and $\G$ as a quotient of a free algebra can be found in \cite{Lewandowski:2005jk}, and a similar one for $D=3$ in \cite{Stottmeister:2013qra}. Both definitions take into account most of the relations that are present in the AL representation of holonomies and fluxes. In particular, the fluxes generate an infinite dimensional Lie algebra with additional relations among its elements. For example, it was pointed out in \cite{Stottmeister:2013qra} that in the case $D=3$ the remarkable identity
\begin{equation} \label{eq:triplecomm}
[\fluxSf,[\fluxSfopen{S}{f'},\fluxSfopen{S}{f''}]]= \frac{1}{4} \fluxSfopen{S}{[f,[f',f'']]}
\end{equation}
holds on certain states in the AL representation, and by extension in the Lie algebra generated by the fluxes. 

On the other hand, it was noted in \cite{Lewandowski:2005jk} that only a few of the relations are really necessary to show uniqueness of diffeomorphism-invariant representations. More generally, only a small subset of relations of a given classical (Poisson) algebra can be respected by the commutation relations in a quantum theory of the given system, as was first pointed out by Groenewold and Van Hove. Thus one can argue (see for example \cite{Sahlmann:2010hn}) that only the relations that are absolutely characteristic for the LQG approach should be taken over to the quantum algebra. This would mean requiring the basic commutator 
\begin{equation}\label{eq:fluxf}
    [\fluxSf,F]=\fluxvfSf[F]
\end{equation}
as a relation in the quantum algebra, but not a more complicated relation such as \eqref{eq:triplecomm}. 
Here, we will follow \cite{Sahlmann:2010hn} in taking fewer relations into account. Besides linear and adjointness relations, we require only the multiplicative structure among cylindrical functions and \eqref{eq:fluxf}. These relations imply further relations via the Jacobi identities. We will comment on this below. 

We will mathematically define the algebra by starting with a free algebra over a certain set of symbols and then dividing by the two-sided ideal generated by the relations listed above. This has the advantage that we automatically impose all the relations that are implied by those that we list explicitly.  

Let us first define the label set for the quantum algebra $\hfalgebra$.  On the one hand, we use the cylindrical functions $\cyl$ as labels. On the other hand,  consider elementary flux variables on phase space, i.e. functionals of the form $\fluxSf$ for a surface $S$ and a smearing field $f$. In particular, these variables satisfy
\begin{equation}
\label{eq_fluxlables}
    \fluxSfopen{S}{f+f'}=\fluxSf+\fluxSfopen{S}{f'}, \qquad \fluxSfopen{-S}{f}=\fluxSfopen{S}{-f},
\end{equation}
where $-S$ is the surface $S$ as a manifold, but with the opposite orientation.
Let $\flux$ be the vector space of phase space functions that are finite linear combinations of elementary flux variables $\fluxSf$. In particular, this vector space incorporates the relations \eqref{eq_fluxlables}.

Now we are in a position to define the free algebra we will be starting from. Take algebra $\afree$ of formal linear combinations of sequences of elements of $\cyl \cup \flux$, with multiplication and $*$ defined by
\begin{align}
    (a_1,a_2,\ldots, a_m)\cdot(b_1,b_2,\ldots,b_n)&:= (a_1,a_2,\ldots, a_m, b_1,b_2,\ldots,b_n),\\
    (a_1,a_2,\ldots, a_m)^*&:= (\overline{a_m},\overline{a_{m-1}},\ldots, \overline{a_1}),
\end{align}
and linear extension. Here all entries $a_1,b_1,\ldots$ are from $\cyl \cup \flux$ and $\overline{a}$ is complex conjugation of functions on phase space. 
The algebra $\hfalgebra$ that we will be working with is the algebra of equivalence classes under the relations defined by a certain ideal in $\afree$. Let 
\begin{equation}\label{eq:quotient}
    \hfalgebra= \afree/I
\end{equation}
where $I$ is the two-sided ideal generated by the following classes of elements: 
\begin{align}
     &\alpha (a)- (\alpha a),\label{eq:g1}\\
     &(a)+(b)-(a+b),\label{eq:g2}\\
     &(F_1,F_2)-(F_1F_2),\label{eq:g3}\\
     &(\fluxSf,F)-(F,\fluxSf) - (\fluxvfSf[F]).\label{eq:g4}
\end{align}
Here $\alpha\in \complexnumbers$, $a,b\in \cyl \cup \flux$, and  $F_1,F_2,F\in\cyl$. This means that elements of $\hfalgebra$ are equivalence classes of elements in $\afree$, with the product
\begin{equation}\label{eq_ProductEqivClass}
    [x][y]=[xy], \quad x,y\in \afree, 
\end{equation}
which is well-defined because $I$ is a two-sided ideal. $I$ is closed under the $*$-operation in $\afree$, so the $*$-operation defined by
\begin{equation}
    [x]^*:=[x^*], \quad x\in \afree 
\end{equation}
is well-defined on $\hfalgebra$. 

How can we define representations of $\hfalgebra$? A practical way to do this is to first specify a representation $\widetilde{\pi}$ of $\afree$, by specifying $\widetilde{\pi}$ on a basis of $\cyl$ and one of $\flux$. Then we have the following result.
\begin{lemma}\label{le:repa}
A representation $\widetilde{\pi}$ of $\afree$ defines a representation of $\hfalgebra$ by
\begin{equation}\label{eq:pidef}
    \pi([x]):=\widetilde{\pi}(x)
\end{equation}
if the generators \eqref{eq:g1}--\eqref{eq:g4} of the ideal $I$ are in the kernel of $\pi$. 
\end{lemma}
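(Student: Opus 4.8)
The plan is to invoke the universal property of a quotient by a two-sided $*$-ideal: a $*$-homomorphism out of $\afree$ descends to $\afree/I=\hfalgebra$ precisely when $I$ is contained in its kernel, and because $I$ is \emph{generated} by the four families \eqref{eq:g1}--\eqref{eq:g4}, it is enough to know that those generators alone are annihilated --- which is exactly the hypothesis. So the entire content is the standard factorization of a homomorphism through a quotient.

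Concretely, I would first note that since $\widetilde{\pi}$ is an algebra homomorphism of $\afree$ (acting, as needed for the unbounded fluxes, by operators on a common invariant dense domain in some Hilbert space), its kernel $\ker\widetilde{\pi}$ is automatically a two-sided ideal of $\afree$, and --- as $\widetilde{\pi}$ is a $*$-homomorphism --- closed under $*$. The assumption is that the generators \eqref{eq:g1}--\eqref{eq:g4} lie in $\ker\widetilde{\pi}$. Since $I$ is by construction the \emph{smallest} two-sided ideal of $\afree$ containing these generators, and $\ker\widetilde{\pi}$ is a two-sided ideal containing them, we conclude $I\subseteq\ker\widetilde{\pi}$.

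Next I would check that \eqref{eq:pidef} is well defined and a $*$-homomorphism of $\hfalgebra$. Well-definedness: if $[x]=[y]$ then $x-y\in I\subseteq\ker\widetilde{\pi}$, hence $\widetilde{\pi}(x)=\widetilde{\pi}(y)$, so $\pi([x])=\pi([y])$. Linearity is inherited from $\widetilde{\pi}$; multiplicativity follows from the definition \eqref{eq_ProductEqivClass} of the product on $\hfalgebra$, since
\begin{equation}
\pi([x][y])=\pi([xy])=\widetilde{\pi}(xy)=\widetilde{\pi}(x)\widetilde{\pi}(y)=\pi([x])\pi([y]);
\end{equation}
and the involution is preserved because $[x]^*=[x^*]$ and $\widetilde{\pi}$ is a $*$-homomorphism:
\begin{equation}
\pi\bigl([x]^*\bigr)=\pi\bigl([x^*]\bigr)=\widetilde{\pi}(x^*)=\widetilde{\pi}(x)\DAGGER=\pi([x])\DAGGER.
\end{equation}

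I do not expect a genuine obstacle here; the only points that deserve care are bookkeeping rather than substance. First, one should phrase everything at the level of the abstract $*$-algebra and a common dense invariant domain, since representations of $\hfalgebra$ typically involve unbounded operators (the flux vector fields act as derivations on $\cyl$); with that understood, the step ``$\ker\widetilde{\pi}$ is a two-sided ideal'' is purely algebraic and goes through unchanged. Second, one should be mindful that well-definedness of $\pi$ only uses that $\widetilde{\pi}$ is an algebra homomorphism, whereas preservation of $*$ genuinely requires $\widetilde{\pi}$ to be a $*$-homomorphism. It is also worth stressing in the proof that this lemma is exactly what makes the construction \eqref{eq:quotient} practical: to exhibit a representation of $\hfalgebra$ one only has to define $\widetilde{\pi}$ on the generating set $\cyl\cup\flux$ of $\afree$ and verify the four explicit relation families \eqref{eq:g1}--\eqref{eq:g4}, and all relations implied by them (via Jacobi identities, etc.) are then automatic.
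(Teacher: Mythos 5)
Your argument is correct and is exactly the route the paper intends: it cites the ``fundamental theorem on algebra homomorphisms'' (i.e.\ the universal property of the quotient by the two-sided $*$-ideal generated by \eqref{eq:g1}--\eqref{eq:g4}) and omits the proof, which is precisely what you have written out in detail. Your only additions --- noting that $\ker\widetilde{\pi}$ is a $*$-closed two-sided ideal and checking well-definedness, multiplicativity, and $*$-preservation --- are the standard bookkeeping the paper leaves implicit, so there is no substantive difference.
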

This is an immediate consequence of the fundamental theorem on algebra homomorphisms, and we omit the proof.

In \cite{Stottmeister:2013qra} some doubts regarding the consistency of the results of \cite{Sahlmann:2010hn} were expressed. These doubts also have relevance for the contents of the current work, so we would like to explicitly address them. The doubts have to do with the relations implied by those generating the ideal $I$. Indeed, it is not easy to enumerate the relations contained in $I$, and there are many. In particular, there are all the relations following from Jacobi identities for commutators, since the Jacobi identities are implied by the definition
\begin{equation}
    [[a],[b]]:= [a][b]-[b][a], \qquad a,b\in \afree.
\end{equation}
One can for example check that, as was pointed out in \cite{Stottmeister:2013qra}, for any $F\in \cyl$
\begin{equation}\label{eq_tripleCommutatorEF}
    [[\fluxSf,[\fluxSfopen{S}{f'},\fluxSfopen{S}{f''}]],F]= \frac{1}{4} [\fluxSfopen{S}{[f,[f',f'']]},F]
\end{equation}
in $\hfalgebra$, where here and in the following we drop the brackets $[\,\cdot\,]$ denoting equivalence classes for better readability.   
Note that in contrast the relation \eqref{eq:triplecomm} does \emph{not} hold in $\hfalgebra$ as defined above. 

We also note that there cannot be any inconsistency in the relations. The worst that can happen is that (i) $I=\afree$, and hence $\hfalgebra$ becomes trivial, or (ii) there does not exist nontrivial representations $\widetilde{\pi}$ that have the generators \eqref{eq:g1}--\eqref{eq:g4} of $I$ in the kernel. The existence of the AL representation \eqref{eq:ALh} and \eqref{eq:ALE} shows, however, that (i) does not happen, as the operators in that representation fulfill the relations \eqref{eq:g1}--\eqref{eq:g4} and are not trivial. Indeed, because of this the AL representation is a representation of $\hfalgebra$ in the literal sense, and hence also (ii) is not a problem. 

The representations of \cite{Sahlmann:2010hn} fulfill the assumptions of Lemma \ref{le:repa}. Therefore they are representations of $\hfalgebra$ as defined above. The same holds for the class of representations that we will propose below. One problem with \cite{Sahlmann:2010hn} is that the construction of the quotient algebra \eqref{eq:quotient} is invoked without giving much detail and hence inviting misunderstandings. 

\subsection{Representations shifting the fluxes}\label{sec_RepsExtFluxes}
In this section, we will make an observation regarding the representations of the holonomy-flux algebra $\hfalgebra$ as defined above: Since we only fix the commutators between holonomies and fluxes (via the ideal \eqref{eq:g4}), and not between the fluxes themselves, there is more freedom for defining representations. In particular, this allows for a shift in the representatives of the fluxes. The detailed statement is as follows. 
\begin{proposition}
\label{prop_flux_shift}
Let $\pi$ be a representation of $\hfalgebra$ on $\mathcal{H}$. Let
\begin{equation}
    \Xi:\mathfrak{F}\longrightarrow L(\mathcal{K})
\end{equation}
be a linear map from the set of linear combinations of elementary flux variables to operators on a Hilbert space $\mathcal{K}$. Then there is a new representation $\pi'$ on $\mathcal{H}\otimes\mathcal{K}$ in which
\begin{equation}
    \pi'(F)=\pi(F), \qquad \pi'(E_S(f)) = \pi(E_S(f))+\Xi(E_S(f))
\end{equation}
\end{proposition}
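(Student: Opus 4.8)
The plan is to reduce the claim to Lemma~\ref{le:repa}. That is, I would build a representation $\widetilde{\pi}'$ of the free algebra $\afree$ on $\mathcal{H}\otimes\mathcal{K}$ and then check that the generators \eqref{eq:g1}--\eqref{eq:g4} of the ideal $I$ lie in its kernel; Lemma~\ref{le:repa} then yields the representation $\pi'$ of $\hfalgebra$ with $\pi'([x]):=\widetilde{\pi}'(x)$. Since $\afree$ is freely generated, as a unital associative algebra, by the set $\cyl\cup\flux$, it is enough to prescribe $\widetilde{\pi}'$ on these generators and extend by the universal property: send the empty word to $\unitelement$, a word $(a_1,\dots,a_m)$ to $\widetilde{\pi}'(a_1)\cdots\widetilde{\pi}'(a_m)$, and extend linearly. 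The prescription is $\widetilde{\pi}'(F):=\pi(F)\otimes\unitelementof{\mathcal{K}}$ for $F\in\cyl$ and $\widetilde{\pi}'(\fluxSf):=\pi(\fluxSf)\otimes\unitelementof{\mathcal{K}}+\unitelementof{\mathcal{H}}\otimes\Xi(\fluxSf)$ for $\fluxSf\in\flux$ (all operators understood on a suitable common invariant domain, as is already necessary for $\pi$). The abbreviated formulas in the statement are read with these tensor-factor identifications.

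The verification splits along the four classes of generators. Relations \eqref{eq:g1} and \eqref{eq:g2} amount to linearity of $\widetilde{\pi}'$ on $\cyl$ and on $\flux$ separately, which follows from linearity of $\pi$ on those spaces (it already respects \eqref{eq:g1}, \eqref{eq:g2}) together with the assumed linearity of $\Xi$. Relation \eqref{eq:g3} follows at once from $(\pi(F_1)\otimes\unitelement)(\pi(F_2)\otimes\unitelement)=\pi(F_1F_2)\otimes\unitelement$, since $\pi$ respects \eqref{eq:g3}. The only relation carrying content is \eqref{eq:g4}, and here the key point --- which is exactly what makes the flux shift possible --- is twofold: the shifting term $\unitelementof{\mathcal{H}}\otimes\Xi(\fluxSf)$ acts on the second tensor factor and hence commutes with every $\pi(F)\otimes\unitelementof{\mathcal{K}}$, while the image $\fluxvfacSf{F}$ of the Hamiltonian vector field is again a cylindrical function and so is represented by $\widetilde{\pi}'$ \emph{without} any shift. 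Thus in $[\widetilde{\pi}'(\fluxSf),\widetilde{\pi}'(F)]$ the shift term drops out, leaving $[\pi(\fluxSf),\pi(F)]\otimes\unitelement=\pi(\fluxvfacSf{F})\otimes\unitelement=\widetilde{\pi}'(\fluxvfacSf{F})$, using that $\pi$ respects \eqref{eq:g4}. Finally, to obtain a $\ast$-representation one checks $\widetilde{\pi}'(x^\ast)=\widetilde{\pi}'(x)^{\dagger}$ on generators, which needs the additional (and, in all later applications, satisfied) requirement that $\Xi$ map formally self-adjoint flux variables to symmetric operators.

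I do not expect a genuine obstacle: the construction is essentially forced and the computation is short. The point worth emphasizing is a negative one --- there is \emph{nothing} to verify about commutators $[\fluxSf,\fluxSfpr]$ among fluxes, because the ideal $I$ was deliberately set up not to fix them (cf.\ the discussion around \eqref{eq:triplecomm}). This is precisely why a shift $\Xi$, whose image need not reproduce any particular flux--flux commutator, is harmless here; in the stricter algebras of \cite{Lewandowski:2005jk,Stottmeister:2013qra} the analogous construction would generically violate the extra relations. One also notes in passing that $\pi'$ is nontrivial whenever $\pi$ is, since it restricts to $\pi\otimes\unitelement$ on the subalgebra generated by the cylindrical functions.
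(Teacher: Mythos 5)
Your proposal is correct and takes essentially the same route as the paper's proof: one defines $\widetilde{\pi}'$ on the generators of $\afree$ by the tensor-product prescription and verifies that the ideal generators \eqref{eq:g1}--\eqref{eq:g4} are annihilated, the only substantive check being \eqref{eq:g4}, where the shift term $\unitelement\otimes\Xi(\fluxSf)$ commutes with $\pi(F)\otimes\unitelement$ and drops out of the commutator. Your additional remark on the $\ast$-structure (requiring $\Xi$ to send real flux labels to symmetric operators) is a detail the paper leaves implicit but does not change the argument.
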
{}
\begin{proof}
Assume $\pi$, $\Xi$ given as in the proposition. Note first that the converse of Lemma \ref{le:repa} holds true as well, in the sense that we can extend $\pi$ to a representation $\widetilde{\pi}$ of $\afree$ by declaring 
\begin{equation}
   \widetilde{\pi}(F):= \pi(F), \qquad \widetilde{\pi}(\fluxSfopen{S}{f}):= \pi(\fluxSfopen{S}{f})
\end{equation}
and representing sequences by products of these elementary ones. Then we can set 
\begin{equation}
   \widetilde{\pi}'(F):= \pi(F), \qquad \widetilde{\pi}'(\fluxSfopen{S}{f}):= \pi(\fluxSfopen{S}{f})\otimes \unitelement + \unitelement\otimes \Xi(\fluxSfopen{S}{f}).
\end{equation}
and extend to all of $\afree$ by representing sequences by products of the elementary representatives. This is a representation of $\afree$, since there are no nontrivial relations to fulfill.
Now we have to check that the ideals \eqref{eq:g1}--\eqref{eq:g4} are in the kernel of $\widetilde{\pi}'$. Equations \eqref{eq:g1} and \eqref{eq:g2} are in the kernel due to the linearity of $\Xi$. Equation \eqref{eq:g3} is in the kernel by construction. Only the commutation relations \eqref{eq:g4} require an explicit check:
\begin{align*}
    \widetilde{\pi}' \Big( &(\fluxSf,F)-(F,\fluxSf) - (\fluxvfSf[F]) \Big)& \\ &=\Bigl(\widetilde{\pi}(\fluxSf)\otimes\unitelement+\unitelement\otimes\Xi(\fluxSf)\Bigr)\widetilde{\pi}(F)\otimes\unitelement
    - \widetilde{\pi}(F)\otimes \unitelement\Bigl(\widetilde{\pi}(\fluxSf)\otimes\unitelement+\unitelement\otimes\Xi(\fluxSf)\Bigr) 
    - \widetilde{\pi}(\fluxvfSf[F])\otimes\unitelement\\
    &= \Bigl([\widetilde{\pi}(\fluxSf), \widetilde{\pi}(F)]-\widetilde{\pi}(\fluxvfSf[F]\Bigr) \otimes\unitelement + \widetilde{\pi}(F)\otimes\Xi(\fluxSf)-\widetilde{\pi}(F)\otimes\Xi(\fluxSf)\\
    &=0,
\end{align*}
where we have used that $\widetilde{\pi}$ comes from a representation of $\hfalgebra$ and hence annihilates the ideals. 
\end{proof}
We will use this freedom to define representations that mirror properties of quasifree representations in the following sections.  

\section{Relation between Weyl algebras and the holonomy-flux algebra for the structure group $\Uone$}\label{sec_WeylAlgUone}

The electromagnetic analog of loop quantum gravity is the situation where one considers the kinematics of a $\Uone$ Yang-Mills theory and quantizes it in the spirit of diffeomorphism invariant quantum gravity \cite{CorichiKrasnov98,Thiemann:1997rt}. We want to discuss the $\Uone$ case in general and look at a formulation in $D+1$ dimensions and hence a $D$-dimensional spatial manifold.

We consider a pair consisting of a $\Uone$ connection or vector potential $A_a(x)$ and an electric field $E^a(x)$, set coupling constants, including $\hbar$, to 1 and stipulate the following CCR:
\begin{equation}
\begin{aligned}\label{eq_WeylUone_CommutatorConnectionElectricfield}
&\commutator{A_a(x)}{E^b(y)}=i\delta^b_a\diracdeltad{D}{x,y},\\
&\commutator{A_a(x)}{A_b(y)\vphantom{E^b(y)}}=0=\commutator{E^a(x)}{E^b(y)}.
\end{aligned}
\end{equation}
The smeared connection is defined completely analogous to the $\SUtwo$ case. For the smeared flux, however, we go without an additional smearing function and consider only the integration over the surface.\footnote{This is possible because the $E$ field is gauge invariant and hence globally defined. Fluxes with smearing functions can be approximated by suitable linear combinations of fluxes without, so this set of functions also seems big enough. We chose it to simplify the formalism, but expect no difficulty in extending the results to the bigger space of fluxes.} Smeared connection and flux therefore are given by 
\begin{equation}
\begin{aligned}
A(e)&=\INTset{e}A=\INT{0}{1}\dx{t}A_a(e(t))\dot{e}^a(t),\\
E(S)&=\INTset{S}\hodgestar{E}=\INTset{S}\frac{1}{(D-1)!}E^a\epsilon_{aa_1a_2\dots a_{D-1}}\gradx{a_1}\wedge\gradx{a_2}\wedge\dots\wedge\gradx{a_{D-1}}.
\end{aligned}
\end{equation}  
The commutation relation of these variables is characterized by the oriented intersection number $\oisec{e}{S}$ of a surface $S$ and a path $e$. This object has already been looked at in the context of LQG in \cite{Ashtekar:1996bs,Ashtekar:1997rg} under the name Gauss linking number. For the CCR, we find
\begin{equation}
\commutator{A(e)}{E(S)}=i\orientedintersection{e}{S}.
\end{equation}
The actual commutation relation that is analogous to $\SUtwo$ considers holonomies.
For $\Uone$ there is no need for path ordering in the holonomies and we can simply look at the object 
\begin{equation}
\holonomy=e^{iA(e)}.
\end{equation}
As in the non-Abelian theory, the holonomies are elements of the structure group and for that reason elements of $\Uone$. 
The commutation relation for holonomies and fluxes becomes
\begin{equation}\label{eq_ComRelUoneHF}
\commutator{\holonomy}{E(S)}=-\orientedintersection{e}{S}\holonomy.
\end{equation}
This relation fully characterizes the $\Uone$ holonomy-flux algebra. Fluxes and holonomies commute, respectively.

In the following sections, we will not consider a strict specialization of the general result of section \ref{sec_generalresult}, as the Abelian structure group $\Uone$ allows for a richer structure. We will show that it is possible to construct a Weyl algebra $\CCRW{H,\sigma}$ for $\Uone$, which will give us representations of the $\Uone$ holonomy-flux algebra. The precise statement that we will establish is as follows.

\begin{proposition}\label{prop_UoneWeylRepHF}
Every representation of $\CCRW{H,\sigma}$ that is regular in the fluxes gives a representation of the $\Uone$ holonomy-flux algebra. 
\end{proposition}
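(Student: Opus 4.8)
The plan is to exhibit explicit operators for the holonomy-flux generators inside a regular representation of $\CCRW{H,\sigma}$ and check that the defining commutation relation \eqref{eq_ComRelUoneHF} holds. First I would make precise the presymplectic space: take $H$ to be the real vector space spanned by formal symbols $X_e$ attached to paths $e$ and $Y_S$ attached to surfaces $S$ (modulo the obvious linearity relations, including $Y_{-S}=-Y_S$), with $\sigma$ the bilinear form determined by $\sigma(X_e,Y_S)=\orientedintersection{e}{S}$, $\sigma(X_e,X_{e'})=0$, $\sigma(Y_S,Y_{S'})=0$. The key identification is then $\holonomy \leftrightarrow \weyl{e}{S}$-type elements: more precisely, given a regular CCR representation $\pi$, the Weyl operators $\rep{W(X_e)}$ are unitary, so they are natural candidates for $\rep{\holonomy}$, and by regularity in the fluxes Stone's theorem produces self-adjoint generators $B_\pi(Y_S)$, the natural candidates for $\rep{E(S)}$.

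The main steps, in order: (i) set $\rep{\holonomy}:=\rep{W(X_e)}$ and $\rep{E(S)}:=B_\pi(Y_S)$, where $B_\pi(Y_S)$ is the self-adjoint generator of the one-parameter unitary group $t\mapsto \rep{W(tY_S)}$, which exists precisely because the representation is assumed regular in the fluxes; (ii) verify that holonomies commute among themselves and fluxes commute among themselves, which is immediate from $\sigma(X_e,X_{e'})=0$ and $\sigma(Y_S,Y_{S'})=0$ together with the Weyl relations (for the fluxes one uses that $[B_\pi(Y_S),B_\pi(Y_{S'})]=i\sigma(Y_S,Y_{S'})=0$ on the appropriate dense domain, cf. \eqref{eq_commutator_Weyl_field}); (iii) verify the mixed relation: from the second line of \eqref{eq_commutator_Weyl_field}, $[\rep{W(X_e)},B_\pi(Y_S)]=-\sigma(X_e,Y_S)\rep{W(X_e)}=-\orientedintersection{e}{S}\rep{\holonomy}$, which is exactly \eqref{eq_ComRelUoneHF}; (iv) note that $\rep{\holonomy}$ is unitary (hence a legitimate representative of the group-valued holonomy) and that $\rep{E(S)}$ is self-adjoint, and that the assignment respects the linearity relations $E(S)+E(S')$ etc. because $Y\mapsto B_\pi(Y)$ is linear. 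Finally one extends multiplicatively to the full holonomy-flux algebra and invokes Lemma \ref{le:repa} (with the free algebra $\afree$ generated appropriately) to conclude that this is a genuine representation of the $\Uone$ HF algebra as defined in Sec. \ref{se:hfdef}: the only nontrivial relation to check is \eqref{eq:g4}, and for $\Uone$ that relation is precisely the exponentiated/infinitesimal content of \eqref{eq_ComRelUoneHF}.

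The step I expect to be the main obstacle is (iii) made rigorous, i.e. the domain question: the commutator identities \eqref{eq_commutator_Weyl_field} hold on $\mathcal{D}=\operatorname{span}\{W(X)\Omega\}$ in the GNS representation of a quasifree state, but here we are handed an \emph{arbitrary} representation regular in the fluxes, so one must argue that there is a common dense invariant domain (e.g. the G\r{a}rding-type domain of smooth vectors for the flux groups, or finite linear combinations of $\rep{W(X_e)}$ applied to such vectors) on which $B_\pi(Y_S)$ is essentially self-adjoint and on which both lines of \eqref{eq_commutator_Weyl_field} make sense simultaneously. A clean way around this is to avoid unbounded operators at the level of the commutator: work with the bounded relation $\rep{W(X_e)}\,\rep{W(tY_S)}=e^{-\frac{i}{2}t\,\orientedintersection{e}{S}}\rep{W(X_e+tY_S)}$ and its reflection, deduce $\rep{W(X_e)}\rep{W(tY_S)}\rep{W(X_e)}^{-1}=e^{-it\,\orientedintersection{e}{S}}\rep{W(tY_S)}$, and only then differentiate at $t=0$ using regularity in the fluxes to obtain $\rep{W(X_e)}B_\pi(Y_S)\rep{W(X_e)}^{-1}=B_\pi(Y_S)-\orientedintersection{e}{S}$, which rearranges to the claimed commutator; this keeps all differentiations inside a single regular one-parameter group and sidesteps the need for a joint domain. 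The remaining bookkeeping — that the linearity relations \eqref{eq_fluxlables}, the adjointness relations, and \eqref{eq:g3} are respected — is routine and I would state it without detailed computation.
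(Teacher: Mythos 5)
Your proposal is correct and follows essentially the same route as the paper's proof: identify $\rep{\holonomy}$ with $\rep{\weyl{e}{0}}$, obtain $\rep{E(S)}$ from regularity via Stone's theorem applied to $t\mapsto\rep{\weyl{0}{tS}}$, read off the three commutators from the Weyl relations, and conclude via the kernel criterion of Lemma \ref{le:repa}. Your variant of step (iii) — deriving the conjugation relation $\rep{W(X_e)}\rep{W(tY_S)}\rep{W(X_e)}^{-1}=e^{-it\,\orientedintersection{e}{S}}\rep{W(tY_S)}$ at the bounded level before differentiating — is just a more domain-careful phrasing of the paper's direct differentiation of $\commutator{\rep{\weyl{e}{0}}}{\rep{\weyl{0}{tS}}}$, not a different argument.
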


Because of this relation to the HF algebra, we will refer to $\CCRW{H,\sigma}$ also as the HF Weyl algebra. 

\subsection{Distributional form factors for edges and surfaces}
\label{sec_distributionalformfactors}
For the Weyl algebra, we need at least a presymplectic space and hence an explicit vector space. This will not be an ordinary vector space but a certain space of edges and surfaces. The formulation that turns out to be the most suitable for such a vector space is in terms of distributional objects, which we want to call form factors, describing edges and surfaces. 
A similar notion for edges has already been used in \cite{Ashtekar:1996bs,Ashtekar:1997rg}, although the treatment of surfaces is somewhat different, since they are described by their closed boundary curves. 

\begin{definition}
\label{def_formfactor}
\begin{enumerate}\item[]
\item Let $e:[0,1]\rightarrow\sigma$ be an embedding of an analytic, oriented path into the spatial $D$-manifold $\sigma$. Then 
\begin{equation}
\eformfactor{e}{a}{x}=\INT{0}{1}\dx{t}\dot{e}^a(t)\diracdeltad{D}{x,e(t)}
\end{equation}
is a called distributional form factor for the edge $e$. The dot indicates the derivative with respect to the curve parameter $t$. 
This can be used to smear one-forms along this specific edge.
\item Let $S:[0,1]^{D-1}\rightarrow\sigma$ be an embedding of an analytic, oriented surface into $\sigma$. Then
\begin{equation}
\Sformfactor{S}{a}{x}=\INT{0}{1}\dx{t_1}\INT{0}{1}\dx{t_2}\dots\INT{0}{1}\dx{t_{D-1}} \epsilon_{a a_1 a_2 \dots a_{D-1}}S^{a_1}_{,t_1}(t) S^{a_2}_{,t_2}(t) \dots S_{,t_{D-1}}^{a_{D-1}}(t)\diracdeltad{D}{x,S(t)}
\end{equation} 
is called the distributional form factor for the surface $S$. Here, $t=(t_1,t_2,\dots,t_{D-1})$ is a  right-handed parametrization of $S$ and the comma indicates the derivative with respect to the parametrization, {\ie} $S^a_{,t_i}=\partialderivative{S^a}{t_i}$. This can be used to smear vector densities over this specific surface.
\end{enumerate}
\end{definition}

In terms of form factors for edges, the smeared connection is
\begin{equation}
A(e)={\int\!}_\sigma\dxd{D}{x} A_a(x)\eformfactor{e}{a}{x}.
\end{equation}
The same can be done with fluxes:
\begin{equation}
E(S)={\int\!}_\sigma\dxd{D}{x} E^a(x) \Sformfactor{S}{a}{x}.
\end{equation}

The oriented intersection number can be deduced from the commutator of holonomy and flux, expressed via form factors. This yields
\begin{equation}
\orientedintersection{e}{S}=\frac{1}{i}\bigcommutator{A(e)}{\fluxS}={\int\!}_\sigma\dxd{D}{x} \eformfactor{e}{a}{x}\Sformfactor{S}{a}{x}
\end{equation}

Having established the above notion of distributional form factors, we want to construct a vector space. In principle, each surface or edge is described by a $D$-tuple of individual form factors, to which we will refer to by omitting the index, {\ie} $F_e$ or $F_S$. 
Since we can add distributions and multiply them by scalars, it is trivial to see that they possess a vector space structure. In this sense we can look at finite linear combinations of form factors, such as
\begin{equation}
F:= n_1F_{e_1}+n_2F_{e_2}+\dots+n_kF_{e_k}, \qquad n_k\in\integernumbers
\end{equation} 
for some edges. The index structure of the form factors now tells us that it is not very reasonable to add form factors for edges and surfaces. We can only add form factors of the same kind if we want to integrate them against objects that have an opposite index structure. 

In principle, we would like to consider only linear combinations with respect to integer scalars as indicated above, since they could be interpreted as representation labels of $\Uone$. However, when doing so, we run into a problem. The integer numbers $\integernumbers$ are not a field since they lack a multiplicative inverse. As a consequence it is not possible to have a vector space over the integers. 
When we want to work at the level of integers, the structure we have at hand is a $\integernumbers$-module. In order to be able to work with such an object when considering Weyl algebras and quasifree states we would have to generalize Weyl algebras to symplectic $\integernumbers$-modules, which we do not want to do here.
Therefore, we generalize the linear combinations to coefficients in $\realnumbers$. 
The price to pay is that we actually cannot work with $\Uone$ anymore but rather have to deal with the Bohr compactification of the real line, $\realnumbers_\mathrm{Bohr}$. However, the integer linear combinations of form factors are part of the symplectic vector space corresponding to $\realnumbers_\mathrm{Bohr}$, and hence it is at this point convenient to work in this setup.
Furthermore, a possible Weyl algebra constructed subsequently will be a $\realnumbers_\mathrm{Bohr}$ HF Weyl algebra, and hence a $\Uone$ HF Weyl algebra will be embedded into this larger algebra. Every representation of the $\realnumbers_\mathrm{Bohr}$ algebra gives rise to a representation of the $\Uone$ algebra. We will consider a $\Uone$ HF algebra that is embedded into a $\realnumbers_\mathrm{Bohr}$ HF algebra and consider only representations of the $\Uone$ algebra that come from the $\realnumbers_\mathrm{Bohr}$ representation. 

We define different vector spaces of distributional form factors as follows.
\begin{definition}
\label{def_VSformfactors}
Let $F_e$ and $F_S$ be distributional form factors for edges and surfaces.
\begin{enumerate}
\item The vector space of distributional form factors of paths is denoted by
\begin{equation}
H_\text{edge}=\set{\SUM{k=1}{m}\lambda_k F_{e_k}}{m<\infty,~\lambda_k\in\realnumbers}.
\end{equation}
\item The vector space of distributional form factors of surfaces is denoted by
\begin{equation}
H_\text{surface}=\set{\SUM{k=1}{m}\lambda_k F_{S_k}}{m<\infty,~\lambda_k\in\realnumbers}.
\end{equation}
\item The vector space of distributional form factors of paths and surfaces is then
\begin{equation}
H=H_\text{edge}\oplus H_\text{surface}.
\end{equation}
The elements in $H$ are denoted by $(e,S)$ and stand for the pair of linear combinations of form factors $F_e\in H_\text{edge}$ and $F_S\in H_\text{surface}$, which correspond to collections of edges and surfaces, respectively. 
\end{enumerate}
\end{definition}

By the addition of  form factor tuples such as
\begin{equation}
(e_1,S_1)+(e_2,S_2)=(e_1+e_2,S_1+S_2)
\end{equation}
we actually mean the addition of the form factors for the individual distributions for the edges and surfaces. The neutral element of the addition, needed for the vector space structure, is inherently given by the form factor that is constantly zero everywhere, because there is no edge or surface, denoted by $(0,0)$. 
Scalar multiplication will be interpreted as multiplying the individual form factors by a scalar and is denoted as 
\begin{equation}
\lambda(e,S)=(\lambda e,\lambda S).
\end{equation} 
What we also need for a vector space is the inverse of addition. This can simply be set to 
\begin{equation}
(-e,-S)=-(e,S).
\end{equation}
The appearing minus sign has a slightly more convenient interpretation, which can be seen when looking for the form factor for the inverse of the generic edge $e$,
\begin{equation}
\eformfactor{e\INV}{a}{x}=-\eformfactor{e}{a}{x}
\end{equation}
and hence the additive inverse for edges only is given by the form factor for the inverse edge.

The same holds true for surfaces. Multiplying the form factor by a minus sign can be interpreted as a change of orientation. 

Given this vector space structure, we can look at the oriented intersection number that becomes a map
\begin{equation}
I:H_\text{edge}\oplus H_\text{surface}\rightarrow \realnumbers,\qquad (e,S)\mapsto \orientedintersection{e}{S}.
\end{equation}
As can easily be seen from the integral form of the intersection number, it is linear in both edges and surfaces in the following sense: 
\begin{equation}
\orientedintersection{e+\lambda e\pr}{S+\mu S\pr}
=\orientedintersection{e}{S}
+\mu\,\orientedintersection{e}{S\pr}
+\lambda\,\orientedintersection{e\pr}{S}
+\lambda\mu\,\orientedintersection{e\pr}{S\pr}\vphantom{\INTset{\sigma}\dxd{D}{x}},
\end{equation}
with $\lambda, \mu \in \realnumbers$. 

For concatenated edges of the form $e=e_2\circ e_1$, the form factors separate according to
\begin{equation}
\eformfactor{e}{a}{x}=\eformfactor{e_1}{a}{x}+\eformfactor{e_2}{a}{x}\vphantom{\INT{0}{1}}.
\end{equation}
Consequently, the smeared connections break up in parts belonging to the segments, {\ie} 
\begin{equation}
A(e)=A(e_1)+A(e_2).
\end{equation}
This reproduces the decomposition property for holonomies as 
\begin{equation}
\holonomy=e^{iA(e)}=e^{iA(e_2)+iA(e_1)}=e^{iA(e_2)}e^{iA(e_1)}=\holonomyopen{e_2}\holonomyopen{e_1}.
\end{equation}
For surfaces a similar behavior arises. A sum of form factors, such that the surfaces consist of a set of faces $\{S_k\}$, decomposes and for the electric fluxes we find
\begin{equation}
E(S)=\SUM{k}{} E(S_k).
\end{equation} 

\subsection{Construction of the Weyl algebra}
\label{subsec_Uone_ConstrWeylAlg}
We want to construct a Weyl algebra for the structure group $\Uone$. 
Some of the results we present here were first obtained for the three-dimensional case in \cite{Nekovar14}.

\begin{proposition}\label{prop_nondeg}
The presymplectic form 
\begin{equation}
\symp{\espd{}}{\espu{\prime}}=\oisec{e}{S\pr}-\oisec{e\pr}{S}
\end{equation}
is nondegenerate, {\ie}
\begin{equation}\label{eq_Uone_PropNonDegtwo}
\symp{\espd{}}{\espu{\prime}}=0\qquad\forall \espu{\prime}\in H 
\end{equation}
implies that $\espu{}=(0,0)$.
\end{proposition}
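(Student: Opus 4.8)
The plan is to exploit the block structure of $\sigma$. Since $\symp{\espd{}}{\espu{\prime}}=\oisec{e}{S'}-\oisec{e'}{S}$ pairs an edge form factor in one slot only against a surface form factor in the other, the two vanishing conditions decouple. Evaluating the hypothesis on pairs $\epu{\prime}=(e',0)$ gives $\oisec{e'}{S}=0$ for every $F_{e'}\in H_\text{edge}$, and evaluating it on pairs $\spu{\prime}=(0,S')$ gives $\oisec{e}{S'}=0$ for every $F_{S'}\in H_\text{surface}$. It therefore suffices to prove two separation statements: (a) if $\oisec{e'}{S}=0$ for all edges $e'$, then $F_S=0$; and (b) if $\oisec{e}{S'}=0$ for all surfaces $S'$, then $F_e=0$. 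Granting these, $(e,S)=(0,0)$ is immediate.

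For (a), write $F_S=\sum_{k=1}^{m}\lambda_k F_{S_k}$ with all $\lambda_k\neq 0$. First I would pass to a common refinement of the $S_k$: using analyticity, subdivide the $S_k$ into finitely many elementary analytic faces with pairwise disjoint relative interiors, so that $F_S=\sum_j c_j F_{f_j}$ with each $c_j$ an integer combination of the $\lambda_k$. Form factors of faces with disjoint interiors are linearly independent as distributions (test $F_{f_j}$ against a one-form supported near the relative interior of $f_j$), so $F_S\neq0$ forces $c_{j_0}\neq0$ for some $j_0$. Now choose a relative-interior point $p$ of $f_{j_0}$ lying on none of the other faces; such points are generic, because the loci where the other faces meet $f_{j_0}$ are contained in the union of the boundaries $\partial f_j$ and hence in a subset of $f_{j_0}$ of dimension at most $D-2$. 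Pick a small $D$-ball around $p$ meeting only $f_{j_0}$, and let $e'$ be a short path crossing $f_{j_0}$ transversally exactly once inside it. Then $\oisec{e'}{f_j}=0$ for $j\neq j_0$ while $\oisec{e'}{f_{j_0}}=\pm1$, so $\oisec{e'}{S}=\pm c_{j_0}\neq0$, contradicting the hypothesis. Hence $F_S=0$.

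For (b) the argument is the mirror image: a finite family of embedded analytic paths can be refined into finitely many elementary analytic segments with pairwise disjoint interiors (two analytic arcs either coincide along a segment or meet in finitely many points), so $F_e=\sum_i c_i F_{e_i}$ with the $F_{e_i}$ linearly independent; $F_e\neq0$ forces some $c_{i_0}\neq0$; one then threads a small $(D-1)$-surface $S'$ through a generic interior point of $e_{i_0}$, transversal to $e_{i_0}$ and disjoint from every other $e_i$, so that $\oisec{e}{S'}=\pm c_{i_0}\neq0$. This yields $F_e=0$, and combining with (a) gives $(e,S)=(0,0)$.

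The step I expect to be the main obstacle is the refinement in (a), and to a lesser degree in (b): turning a finite collection of embedded analytic hypersurfaces into a common refinement by analytic faces with pairwise disjoint relative interiors, together with the bookkeeping of the resulting $0,\pm1$ orientation coefficients. This is exactly where analyticity is indispensable: it guarantees that any two of the $S_k$ either share a relatively open piece or meet in a set of lower dimension, which rules out the pathological partial overlaps that could otherwise let $F_S$ vanish without any face being exposed to a transversal test edge. Everything downstream is elementary: a single transversal crossing contributes $\pm1$ to the oriented intersection number, and distributions supported on disjoint pieces are linearly independent. In the write-up I would either cite the standard treatment of analytic graphs and surfaces used elsewhere in loop quantum gravity for this refinement, or establish it by induction on $m$.
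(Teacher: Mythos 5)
Your proof is correct, and it is worth noting that it does not just reproduce the paper's argument but tightens it. The paper also proceeds by testing the hypothesis against specially chosen elements, but with two differences: it first makes a case distinction on whether $\mathrm{I}(e,S)$ vanishes, disposing of the case $\mathrm{I}(e,S)\neq 0$ with the antisymmetry trick $(e',S')=(-e,S)$ --- a step that your decoupling onto test elements $(e',0)$ and $(0,S')$ renders unnecessary --- and then, at the decisive point, it simply asserts that ``there is certainly an (infinitesimal) edge $e'$ that intersects the nontrivial surface $S$'' (and dually for a nontrivial edge), i.e. it takes your separation statements (a) and (b) for granted. Those statements are not completely innocent, because elements of $H$ are real linear combinations of form factors and partial overlaps could in principle cancel; your common-refinement argument --- using analyticity to decompose the finite family of surfaces (respectively paths) into faces (segments) with pairwise disjoint relative interiors, so that a nonzero distributional form factor exposes at least one piece with nonzero net coefficient to a single transversal test edge (surface) contributing $\pm 1$ --- is exactly what is needed to make the paper's plausibility claim robust, and it is the standard analytic-decomposition technique of the loop quantum gravity literature, so citing it or proving it by induction as you propose is legitimate. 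In short, your route buys rigor precisely where the paper is informal, at the modest cost of invoking the refinement lemma; the paper's version is shorter but leaves the separation step unproven and carries a redundant case distinction.
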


\begin{proof}
We want to argue by contradiction and thus suppose that Eq. \eqref{eq_Uone_PropNonDegtwo} holds true and $(e,S)\neq (0,0)$.
Since the presymplectic form is nondegenerate, it must hold that 
\begin{equation}\label{eq_NonDeg_Oisec}
\oisec{e}{S\pr}=\oisec{e\pr}{S}
\end{equation}
for all pairs $\espu{\prime}$.

We have to distinguish two cases. Let us suppose that $\oisec{e}{S}\neq 0$. The choice of the pair of edge and surface that leads into a contradiction is
\begin{equation}
\espu{\prime}=(-e,S)\quad\text{or}\quad \espu{\prime}=(e,-S)
\end{equation}
which is just the original element, but with {\eg} inverse orientation for the edge. 
Equation \eqref{eq_NonDeg_Oisec} hence turns into
\begin{equation}
\oisec{e}{S}=-\oisec{e}{S}.
\end{equation}
This is a contradiction, because we required the intersection number to be nonvanishing. 

Second, the situation in which the intersection number of $e$ and $S$ is vanishing has to be considered: $\oisec{e}{S}=0$. Including the case that $e=0$, we take a look at the   
situation 
\begin{equation}
\espu{\prime}=(e\pr,S),
\end{equation}
where we keep $e\pr$ generic but fix the surface part to $S\pr=S\neq 0$. This results in 
\begin{equation}
0=\oisec{e}{S}=\oisec{e\pr}{S}\qquad\forall e\pr.
\end{equation}
There is certainly an (infinitesimal) edge $e\pr$ that intersect the nontrivial surface $S$. This contradicts our assumption. If we want to include a vanishing surface $S=0$, we fix $e\pr=e\neq 0$ with generic $S\pr$:
\begin{equation}
\espu{\prime}=(e,S\pr).
\end{equation}
Then we chose an (infinitesimal) surface $S\pr$ intersected by $e$, s.\,t. 
\begin{equation}
0=\oisec{e}{S}=\oisec{e}{S\pr}\qquad\forall S\pr
\end{equation}
yields a contradiction. 

After all, every assumption that $\espd{}$ is nontrivial resulted in a contradiction, which implies that 
\begin{equation}
\espd{}=0.
\end{equation}
\end{proof}

We showed that the presymplectic form $\sigma$ is in fact a symplectic form. Therefore, the corresponding Weyl algebra will be uniquely defined; see section \ref{se:weyl}.
Now we can state a definition of the Weyl algebra corresponding to the canonical commutation relation \eqref{eq_WeylUone_CommutatorConnectionElectricfield}.

\begin{definition}
\label{def_UoneHolFluxWeylAlg}
The CCR algebra over the presymplectic space $(H,\sigma)$, with $H$ the vector space of pairs of distributional form factors for edges and surfaces and $\symp{(e_1,S_1)}{(e_2,S_2)}=\orientedintersection{e_1}{S_2}-\orientedintersection{e_2}{S_1}$, generated by Weyl elements $\weyl{e}{S}$ that obey the Weyl relations 
\begin{equation}
\begin{aligned}
\weyl{e_1}{S_1}\weyl{e_2}{S_2}&=e^{-\frac{i}{2}\symp{(e_1,S_1)}{(e_2,S_2)}}\weyl{e_1+e_2}{S_1+S_2},\\
\weyl{e}{S}\DAGGER&=\weyl{-e}{-S},
\end{aligned}
\end{equation}
is called $\Uone$ HF Weyl algebra. It is denoted as $\CCRW{H,\sigma}$.
\end{definition}
It should be pointed out that according to the discussion preceding definition \ref{def_VSformfactors} we consider only Weyl elements that belong to the $\integernumbers$-module contained in $H$ as physically relevant.

Now we want to prove our statement from the beginning of this section.
\begin{proof}[Proof of proposition \ref{prop_UoneWeylRepHF}]
Let $\varphi$ be a state on $\CCRW{H,\sigma}$ with GNS representation $\GNS$ and let it be regular in the surfaces. That is, the field operator
\begin{equation}
    \rep{E(S)}:=\frac{1}{i}\derivative{}{t}\rep{\weyl{0}{tS}}\big|_{t=0}
\end{equation}
exists for all $(0,S)\in H$. Furthermore, we set
\begin{equation}
    \rep{\holonomy}:=\rep{\weyl{e}{0}}.
\end{equation}
In this way, $\pi$ becomes a representation of the algebra $\afree$. We compute the respective commutators: 
\begin{equation}\label{eq_CommRepHolHol}
    \commutator{\rep{\holonomyopen{e}}}{\rep{\holonomyopen{e\pr}}}=\commutator{\rep{\weyl{e}{0}}}{\rep{\weyl{e\pr}{0}}}=0.
\end{equation}
For two surface field operators we find 
\begin{equation}\label{eq_CommRepFlFl}
    \commutator{\rep{E(S)}}{\rep{E(S\pr)}}=\frac{1}{i^2}\derivative{^2}{t \mathrm{d}t\pr}\commutator{\rep{\weyl{0}{tS}}}{\rep{\weyl{0}{t\pr S\pr}}}\big|_{t,t\pr=0}=0.
\end{equation}
The last commutator is 
\begin{equation}\label{eq_CommRepHolFl}
\begin{aligned}
    \commutator{\rep{\holonomy}}{\rep{E(S)}}&=\frac{1}{i}\derivative{}{t}\commutator{\rep{\weyl{e}{0}}}{\rep{\weyl{0}{tS}}}\big|_{t=0}\\
    &=\frac{1}{i}\derivative{}{t}\roundbrackets{e^{-\frac{i}{2}\oisec{e}{S}t}-e^{\frac{i}{2}\oisec{e}{S}t}}\rep{\weyl{e}{tS}}\big|_{t=0}\\
    &=-\oisec{e}{S}\rep{\holonomy}.
\end{aligned}    
\end{equation}
Equations \eqref{eq_CommRepHolHol} and \eqref{eq_CommRepHolFl}, together with the linearity of $\pi$  establish that the ideals \eqref{eq:g1}--\eqref{eq:g4} are in the kernel of $\pi$, and thus $\pi$ is a representation of the HF algebra $\hfalgebra$ for $G=\Uone$. 
\end{proof}

This result suggests to identify Weyl elements with vanishing surface contribution with holonomies and the field operators corresponding to the surfaces with fluxes. We can decompose a general Weyl element as follows:
\begin{equation}
    \rep{\weyl{e}{S}}=e^{\frac{i}{2}\oisec{e}{S}}\rep{\weyl{e}{0}}\rep{\weyl{0}{S}}=e^{\frac{i}{2}\oisec{e}{S}}\rep{\holonomy}e^{i\pi(E(S))}.
\end{equation}
The Weyl elements combine holonomies with exponentiated fluxes. 

For full loop quantum gravity, we have to consider the structure group $\SUtwo$. The main difference to the $\Uone$ case is the noncommutativity of fluxes and that holonomies are elements of $\SUtwo$. For holonomies it is only possible to combine two of them in a single exponential if they represent parts of concatenated edges. Exponentiated fluxes can in general not be combined in a single exponential describing a sum of fluxes, since the commutator of fluxes does not yield a flux again. 

Another problem is the localization of the commutation relation. In the $\Uone$ case all information about the intersection behavior of edge and surface, coming from the commutation relation, is encoded in the exponential of the oriented intersection number and can be separated from the actual holonomy and flux. This is not possible for $\SUtwo$. Here, the commutator inserts into the holonomy $\sutwo$-valued functions at the intersection point with the surface. The exponentiated intersection number becomes a complicated combination of segments of the holonomy and $\SUtwo$ elements at the intersection points. 

As a result, it is impossible to extract the symplectic form from this kind of expressions. It is equally not manageable to directly construct a general symplectic form from the commutation relation. Therefore, it seems implausible that a Weyl algebra formulation of LQG exists. These problems notwithstanding, in section \ref{se:sutwo} we will find a \emph{representation} of the HF algebra of LQG which resembles a representation of an almost quasifree state, using the general result from Sec. \ref{sec_generalresult}.

\section{Almost quasifree states and representations}\label{se:aqfree}
After having established this Weyl algebra, we want to introduce a new state on it that gives rise to representations of the holonomy-flux algebra in accordance with Sec. \ref{sec_generalresult}.

What distinguishes the CCR of $\Uone$ LQG from other field theories is its purely topological nature. Holonomies and fluxes do not commute if there is an intersection point between the corresponding edge and surface. 
The inequality \eqref{eq_inequality} turns into
\begin{equation}\label{eq_inequality_eS}
\sqrt{\cov{\espu{}}{\espu{}}}\sqrt{\cov{\espu{\prime}}{\espu{\prime}}}\geq \frac{1}{2}\lvert{\symp{\espu{}}{\espu{\prime}}}\rvert\qquad \forall \espu{},\espu{\prime} \in H.
\end{equation}
It seems unlikely that it is possible to find a quasifree state for the $\Uone$ HF Weyl algebra, and LQG in general, for two reasons: 
First of all, the right-hand side of \eqref{eq_inequality_eS} is diffeomorphism invariant while the left-hand side cannot be.\footnote{Indeed, assume a diffeomorphism invariant $\alpha$ and consider $S=0,e'=0$. Then 
\begin{equation*}
\frac{1}{2}\lvert{\oisec{\varphi(e)}{S'}}\rvert=\frac{1}{2}\lvert{\symp{(\varphi(e),0)}{(0,S')}}\rvert \leq \sqrt{\cov{(\varphi(e),0)}{(\varphi(e),0)}}\sqrt{\cov{(0,S')}{(0,S')}}
=\sqrt{\cov{(e,0)}{(e,0)}}\sqrt{\cov{(0,S')}{(0,S')}}
\end{equation*}
where $\varphi$ is any diffeomorphism. One can construct $\varphi$ that makes the left-hand side arbitrarily large, leading to a contradiction.} This means that \eqref{eq_inequality_eS} implies many further inequalities and is hence more restrictive than it might appear. Second, the right-hand side contains intersection numbers between $S$ and $e'$, and vice versa, between $S'$ and $e$. In contrast, the left-hand side depends on $\alpha$ evaluated on $(e,S)$ alone and on $(e',S')$ alone, respectively. Thus one factor on the left-hand side does not know about the argument of the other, yet together they have to bound a quantity depending on the relative position of the two arguments.  

In the framework of projective loop quantum gravity \cite{Lanery:2014bca,Lanery:2015mxa}, there is a result about exactly the type of inequalities needed for quasifree states. It states that there cannot be a nonsingular covariance that satisfies an inequality similar to \eqref{eq_inequality}, and hence there cannot exist such states. This seems to be a feature that comes from the structure of the underlying algebra itself. 

Nevertheless, it is possible to construct  a new type of state for the $\Uone$ HF Weyl algebra, which can be interpreted as a hybrid of the Ashtekar-Lewandowki state for the HF algebra and a quasifree state for Weyl algebras.  

\subsection{Almost quasifree states for the $\Uone$ holonomy-flux Weyl algebra}

In addition to the Gaussian fluctuations of quasifree states, we want to include a controllable peak position into the new state. This leads, alongside the fluctuations given by two-point correlation functions, to nonvanishing one-point correlation functions, {\ie} a condensate contribution.

\begin{proposition}
\label{def_AlmostQfreeStates}
For a symmetric, positive semidefinite bilinear form (covariance) $\mathbf{\alpha}:H_\text{surface}\times H_\text{surface} \rightarrow \realnumbers$ and a linear function (condensate contribution) $\mathbf{\beta}:H_\text{surface} \rightarrow\realnumbers$ with $\mathbf{\beta}(0)=0$, the linear functional 
\begin{equation}\label{eq_UoneAlmostQfreeSurf}
\aqfsstate{\weyl{e}{S}}=\delta_{e,0}e^{-\frac{1}{2}\covs{S}{S}}e^{i\cons{S}}
\end{equation}
is a state on $\CCRW{H,\sigma}$, which is called almost quasifree.
\end{proposition}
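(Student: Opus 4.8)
The plan is to verify the three defining conditions of a state on the $C^\ast$-algebra $\CCRW{H,\sigma}$: normalization $\state{\unitelement}=1$, linearity (which is built into the definition of $\aqfss$ as a linear functional), and positivity $\state{a^\ast a}\geq 0$ for all $a\in\CCRW{H,\sigma}$. Normalization is immediate since $\unitelement=\weyl{0}{0}$ and $\delta_{0,0}=1$, $\covs{0}{0}=0$, $\cons{0}=0$ by the assumption $\conssym(0)=0$. By continuity of the functionals involved and density of $\CCR{H,\sigma}$ in $\CCRW{H,\sigma}$, it suffices to check positivity on finite linear combinations $a=\sum_{k}c_k\weyl{e_k}{S_k}$.

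The key computation is to expand $a^\ast a=\sum_{j,k}\overline{c_j}c_k\,\weyl{-e_j}{-S_j}\weyl{e_k}{S_k}$ using the Weyl relations, giving $\weyl{-e_j}{-S_j}\weyl{e_k}{S_k}=e^{-\frac{i}{2}\symp{(-e_j,-S_j)}{(e_k,S_k)}}\weyl{e_k-e_j}{S_k-S_j}$, and then apply $\aqfss$. The factor $\delta_{e_k-e_j,0}$ forces $e_j=e_k$, so the double sum reduces to blocks with equal edge parts; within such a block the phase $\symp{(-e,-S_j)}{(e,S_k)}=\oisec{-e}{S_k}-\oisec{e}{-S_j}=-\oisec{e}{S_k}-\oisec{e}{S_j}$, which is a sum of a term in $j$ and a term in $k$ and therefore gets absorbed into redefined coefficients $\tilde c_k:=c_k\,e^{\frac{i}{2}\oisec{e}{S_k}}\,e^{i\cons{S_k}}$ together with the condensate phase. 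Thus $\state{a^\ast a}$ becomes a sum over edge-blocks of expressions of the form $\sum_{j,k}\overline{\tilde c_j}\tilde c_k\,e^{-\frac{1}{2}\covs{S_k-S_j}{S_k-S_j}}$. Expanding the bilinear form, $\covs{S_k-S_j}{S_k-S_j}=\covs{S_j}{S_j}+\covs{S_k}{S_k}-2\covs{S_j}{S_k}$, so after absorbing $e^{-\frac{1}{2}\covs{S_k}{S_k}}$ into $\tilde c_k$ once more we are left with $\sum_{j,k}\overline{\hat c_j}\hat c_k\,e^{\covs{S_j}{S_k}}$, and positivity follows from the fact that $(S,S')\mapsto e^{\covs{S}{S'}}$ is a positive-definite kernel whenever $\covssym$ is a positive semidefinite symmetric bilinear form — since the exponential of a positive semidefinite kernel is positive semidefinite (Schur product theorem applied termwise to the power series $\sum_n \frac{1}{n!}\covssym^{\odot n}$).

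The main obstacle is the bookkeeping: one must be careful that the blocks with distinct edge parts do not interfere (they don't, because $\aqfss$ annihilates every Weyl element with nonzero edge part, and after the Weyl product the edge part is $e_k-e_j$), and that all the phase factors genuinely factorize as $(\text{function of }j)\times(\text{function of }k)$ so they can be moved into the coefficients without spoiling the structure $\sum\overline{\hat c_j}\hat c_k(\cdots)$. The condensate term $e^{i\cons{S}}$ is harmless precisely because $\conssym$ is linear, so $e^{i\cons{S_k-S_j}}=e^{i\cons{S_k}}e^{-i\cons{S_j}}$ splits cleanly; the symplectic phase splits because $\oisec{e}{\slotdot}$ is linear in the surface. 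Once these factorizations are in place, positivity is reduced to the standard positive-definiteness of the Gaussian kernel, and linearity and normalization are trivial, completing the proof.
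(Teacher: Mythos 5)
Your proof is correct and follows essentially the same route as the paper's: reduce to finite linear combinations of Weyl elements, let the Kronecker delta from the Weyl relations restrict to blocks of equal edge parts, absorb the symplectic and condensate phases into redefined coefficients, and conclude positivity from the fact that the entrywise exponential of a positive semidefinite kernel is positive semidefinite (Schur product theorem) — indeed your uniform treatment of blocks sharing a common nonzero edge is, if anything, slightly more careful than the paper's three-case split, which collapses that case to the diagonal. One small slip: since $-\oisec{e}{-S_j}=+\oisec{e}{S_j}$, the phase in a block is $-\oisec{e}{S_k}+\oisec{e}{S_j}$ rather than $-\oisec{e}{S_k}-\oisec{e}{S_j}$; it is precisely this relative sign that makes the phase factor hermitianly as $\overline{u_j}\,u_k$, and your redefinition $\tilde c_k=c_k\,e^{\frac{i}{2}\oisec{e}{S_k}}e^{i\mathbf{\beta}(S_k)}$ is the correct one, so the conclusion is unaffected.
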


\begin{proof}
As a first step, note that the Weyl elements form a basis of $\CCRW{H,\sigma}$; see Lemma \ref{le:linear_independence}. Thus 
\eqref{eq_UoneAlmostQfreeSurf} defines $\aqfsstate{\cdot}$ as a linear form on all of $\CCRW{H,\sigma}$. 

In order to show that $\aqfss$ is a state, we show that it is normalized and positive. We have 
\begin{equation}
\aqfsstate{W(0,0)}=\delta_{0,0}e^{-\frac{1}{2}\covs{0}{0}}e^{i\mathbf{\beta}(0)}=1.
\end{equation}
For algebra elements $x=\SUM{k=1}{n}b_k\weyl{e_k}{S_k}$, with $n\in\mathbb{N}$ and $b_k\in\complexnumbers$, we have that 
\begin{equation}
\begin{aligned}
    \aqfss(xx\AST)&=\SUM{j,k=1}{n}b_j\bar{b_k}\aqfsstate{\weyl{e_j}{S_j}\weyl{e_k}{S_k}}\\
    &=\SUM{j,k=1}{n}b_j\bar{b_k}\aqfsstate{\weyl{e_j-e_k}{S_j-S_k}}e^{-\frac{i }{2}\symp{\espd{j}}{\espd{k}}}\\
    &=\SUM{j,k=1}{n}b_j\bar{b_k}\delta_{e_j,e_k}e^{-\frac{1}{2}\covs{S_j-S_k}{S_j-S_k}}e^{i\mathbf{\beta}(S_j-S_k)}e^{-\frac{i}{2}\symp{\espd{j}}{\espd{k}}}\\
    &=\SUM{j,k=1}{n}c_j\bar{c_k}\delta_{e_j,e_k}e^{-\frac{1}{2}\covs{S_j-S_k}{S_j-S_k}}e^{-\frac{i}{2}\symp{\espd{j}}{\espd{k}}}
\end{aligned}    
\end{equation}
and introduced in the last line the coefficients $c_j=b_je^{i\mathbb{\beta}(S)}$. The resulting sum can be split into three different contributions, depending on the configuration of edges. 

The sum can run over pairs of nontrivial edges, so all $e_j\neq0$. 
With this 
\begin{equation}
\begin{aligned}
&\SUM{j,k=1}{n}c_j\bar{c_k}\delta_{e_j,e_k}e^{-\frac{1}{2}\covs{S_j-S_k}{S_j-S_k}}e^{-\frac{i}{2}\symp{\espd{j}}{\espd{k}}}=\\
&=\SUM{j=1}{n}c_j\bar{c_j}e^{-\frac{1}{2}\covs{S_j-S_j}{S_j-S_j}}e^{-\frac{i}{2}\symp{\espd{j}}{\espd{j}}}=\SUM{j}{}\absvalue{c_j}^2\geq0,
\end{aligned}    
\end{equation}
where we used the Kronecker delta to collapse the sum over $k$, the symplectic form is antisymmetric and hence its diagonal elements vanish. 

The second contribution are terms where nontrivial and trivial edges meet in the Kronecker delta. These terms vanish immediately. 

For the third contribution, we have to consider terms with only trivial edges. Hence, the Kronecker delta gives one and the sums do not collapse. Realizing that the symplectic form does also vanish, this yields 
\begin{equation}
\SUM{j,k=1}{n}c_j\bar{c}_k 
e^{-\frac{1}{2}\bigroundbrackets{\covs{S_j}{S_j}+\covs{S_k}{S_k}-2\covs{S_j}{S_k}}}=\SUM{j,k=1}{n}d_j\bar{d}_k 
e^{\covs{S_j}{S_k}}.
\end{equation}
Here, we introduced $d_j=c_je^{-\frac{1}{2}\covs{S_j}{S_j}}$. The Hadamard product, as well as the sum, of two positive semidefinite $n\times n$ matrices is also positive semidefinite. Therefore the matrix obtained by exponentiating the components $\covs{S_j}{S_k}$ of the covariance is positive semidefinite as well. Hence,
\begin{equation}
    \SUM{j,k=1}{n}d_j\bar{d}_k 
e^{\covs{S_j}{S_k}}\geq0
\end{equation}
and the positivity of the state, {\ie} $\aqfsstate{xx\AST}\geq0$, follows.
\end{proof}

The almost quasifree state $\aqfss$ behaves as the AL state for holonomies, {\ie} is only nonvanishing if there are none, but shows a different behavior concerning the fluxes. The flux part is a Gaussian function. Being differentiable only in the surface variable, there are only field operators corresponding to fluxes. Similar to the AL representation, there cannot be field operators for the connection. 

Given the GNS representation $\GNS$ of the almost quasifree state $\varphi$, the representation of fluxes is determined by Stone's theorem:
\begin{equation}\label{eq_UoneFluxClassRep}
\rep{E(S)}=\frac{1}{i}\derivative{}{t}\rep{\weyl{0}{tS}}\Big|_{t=0}.
\end{equation}
The holonomies can directly be represented by the Weyl elements:
\begin{equation}
\rep{\holonomy}=\rep{\weyl{e}{0}}.
\end{equation}
Similar to the situation for quasifree states, the commutator of holonomies and fluxes is given by the symplectic form. Hence, we find that (cf.~\eqref{eq_commutator_Weyl_field})
\begin{equation}
\commutator{\rep{\holonomy}}{\rep{E(S)}}=-\oisec{e}{S}\rep{\holonomy},
\end{equation}
such that this is in fact a representation of $\CCRW{H,\sigma}$. 

While the vacuum expectation values containing holonomies still vanish, the one- and two-point functions of fluxes do not anymore. 
Using the representation \eqref{eq_UoneFluxClassRep} and the state \eqref{eq_UoneAlmostQfreeSurf}, the peak for fluxes lies at 
\begin{equation}
\vacexp{\rep{E(S)}}=\frac{1}{i}\derivative{}{t}\state{\weyl{0}{tS}}\Big|_{t=0}=\cons{S},  
\end{equation}
{\ie} the condensate contribution $\conssym$. The two-point functions, describing the fluctuations include the surface covariance,
\begin{equation}
    \vacexp{\rep{E(S_1)}\rep{E(S_2)}}=\frac{1}{i^2}\derivative{^2}{t_1\dx{t_2}}\statewrt{S}{\weyl{0}{t_1S_1+t_2S_2}}\Big|_{t_1,t_2=0}=\covs{S_1}{S_2}+\cons{S_1}\cons{S_2}.
\end{equation}
Similarly, all higher $n$-point functions decompose into products of covariances and condensate contributions. For vanishing $\conssym$, only the two-point functions contribute and reproduce the behavior of quasifree states.

Because of the GNS construction, each choice of $\covssym$ and $\conssym$ yields, up to isomorphisms, a different almost quasifree representation of the holonomy-flux Weyl algebra. 
Since there is no default object like the oriented intersection number $H_\text{surface}$ that allows for the definition of an inner product,
one has to use additional structures.

\subsection{Almost quasifree representation with respect to a Fock space}\label{se:aqfree_fock}
\label{subsec_Uone_FockRepAqf}
In this and the following section we want to introduce almost quasifree representations of $\CCRW{H,\sigma}$ and the the $\Uone$ HF algebra. Similar to the nature of the almost quasifree state, which resembles in some aspects the AL state, the representations will be constructed as an augmented version of the AL representation. 

We look at a bosonic Fock space
\begin{equation}
\curlyF=\bigotimes\limits_{k=0}^\infty \curlyh^{\otimes k}.
\end{equation}
Here, $\curlyh$ is any Hilbert space. It plays the role of one-particle Hilbert space.
For $f\in\curlyh$, the creation and annihilation operators of this field theory are subject to the commutation relations 
\begin{equation}
\begin{aligned}
&\commutator{\anniF{f}}{\creaF{g}}=\innerproductof{\curlyh}{f}{g}\unitelementof{\curlyF},\\
&\commutator{\anniF{f}}{\anniF{g}\vphantom{\creaF{f}}}=0=\commutator{\creaF{f}}{\creaF{g}}.
\end{aligned}
\end{equation}
With this, the self-adjoint field operators of the theory are denoted by
\begin{equation}
\label{eq_scalarfield}
\sfield{f}=\creaF{f}+\anniF{f}.
\end{equation}
The commutation relations for creation and annihilation operators extend to $\sfield{f}$, {\ie} 
\begin{equation}
\commutator{\sfield{f}}{\sfield{g}}=2i\mathrm{Im}\roundbrackets{\innerproductof{\curlyh}{f}{g}}. 
\end{equation} 
For general one-particle Hilbert space elements $f$ and $g$, this is not forced to  vanish. However, for well-defined field operators that are smeared with real test functions the commutator vanishes identically.  
The two-point functions with respect to the Fock vacuum $\vac{0}$ are 
\begin{equation}
\vacexpof{0}{\sfield{f}\sfield{g}}=\innerproductof{\curlyh}{f}{g}.
\end{equation}

The total Hilbert space we want to consider for almost quasifree states is now the tensor product space 
\begin{equation}
\HSFock=\ALHS\otimes\curlyF
\end{equation}
with the cyclic vacuum state 
\begin{equation}
\vac{\curlyF}=\ALvac\otimes\vac{0}.
\end{equation}

Instead of giving a representation of the Weyl elements right from the beginning, we present the representations of holonomies and fluxes, and show that this gives rise to the desired representation. 
Let us suppose we have a real and linear map
\begin{equation}
\Gamma: H_\text{surface}\rightarrow\curlyh,
\end{equation}
which allows one to relate a surface and hence its form factor to an element of the one-particle Hilbert space. The linearity is necessary in order to have $\Gamma(S+S\pr)=\Gamma(S)+\Gamma(S\pr)$, such that the field operators are linear in the surfaces and behave similar to the AL flux operators. We want to consider the following representation of holonomies and fluxes on the tensor product Hilbert space:
\begin{equation}
\begin{aligned}
\repF{\holonomy}&=\holonomy\otimes \unitelementof{\curlyF},\\
\repF{E(S)}&=X_S\otimes\unitelementof{\curlyF}+\unitelementof{\AL}\otimes\left(\sfield{\Gamma(S)}+\cfluxS\right).
\end{aligned}
\end{equation}
The representation is characterized by the flux through a classical background electric field $E^{(0)}$
\begin{equation}
\cfluxS=\INTset{\sigma}\dxd{D}{x}E^{(0)a}(x)\Sformfactor{S}{a}{x},
\end{equation}
which gives the peak position of the Gaussian part of the almost quasifree state.

In order to check if this is a representation of the HF algebra we consider the commutation relations of the object defined above. For the fluxes we find
\begin{equation}
\commutator{\repF{E(S)}}{\repF{E(S\pr)}}
=\commutator{X_{S}}{X_{S\pr}}\otimes\unitelementof{\curlyF}+\unitelementof{\AL}\otimes\commutator{\sfield{\Gamma(S)}}{\sfield{\Gamma(S\pr)}}
=0,
\end{equation}    
since the flux operators still commute and for real $\Gamma$ the commutator of field operators vanishes, too. Also the commutator of holonomies still vanishes:
\begin{equation}
\commutator{\repF{\holonomy}}{\repF{\holonomyopen{e\pr}}}=\commutator{\holonomy}{\holonomyopen{e\pr}\vphantom{\repF{\holonomy}}}\otimes\unitelementof{\curlyF}=0.
\end{equation} 
The commutator of holonomy and flux representations is also recovered as
\begin{equation}
\commutator{\repF{\holonomy}}{\repF{E(S)}}=\commutator{\vphantom{\repF{\holonomy}}\holonomy}{X_S}\otimes\unitelementof{\curlyF}=\repF{\commutator{\holonomy}{E(S)}}.
\end{equation} 
We indeed found a representation of the HF algebra on $\HSFock=\ALHS\otimes\curlyF$. 

We find that 
\begin{equation}
\FflexpS:=e^{i\repF{E(S)}}=e^{iX_S}\otimes e^{i\sfield{\Gamma(S)}+i\cfluxS}
\end{equation}
and set
\begin{equation}
\Fhol=\holonomy\otimes\unitelementof{\curlyF}.
\end{equation} 
With this, we can determine commutation relations for $\Fhol$ and $\FflexpS$ and hence define Weyl elements
\begin{equation}\label{eq_Uone_FWeylOperator}
\weylF{e}{S}=e^{\frac{i}{2}\orientedintersection{e}{S}}\Fhol\FflexpS=e^{\frac{i}{2}\orientedintersection{e}{S}} \holonomy e^{iX_S}\otimes e^{i\sfield{\Gamma(S}+i\cfluxS},
\end{equation}
which satisfy the Weyl relations 
\begin{equation}
\begin{aligned}
\weylF{e_1}{S_1}\weylF{e_2}{S_2}
&= 
e^{-\frac{i}{2}\bigroundbrackets{\orientedintersection{e_1}{S_2}-\orientedintersection{e_2}{S_1}}}  \weylF{e_1+e_2}{S_1+S_2} 
,\\ \weylF{e}{S}\DAGGER&=\weylF{-e}{-S}.
\end{aligned}
\end{equation}
 
When considering the corresponding vacuum expectation value, the inner product splits into two contributions and hence the vacuum expectation value splits into
\begin{equation}
\vacexpof{\curlyF}{\weylF{e}{S}}=\vacexpof{\AL}{e^{i\oisec{e}{S}}\holonomy e^{iX_S}}\vacexpof{0}{e^{i\sfield{\Gamma(S)}+i\cfluxS}}.
\end{equation}  
The vacuum expectation value on $\ALHS$ is given by 
\begin{equation}
\vacexpof{\AL}{e^{i\oisec{e}{S}}\holonomy e^{iX_S}}=e^{i\oisec{e}{S}}\delta_{e,0}=\delta_{e,0}.
\end{equation} 

Using the Baker-Campbell-Hausdorff decomposition of the annihilation and creation parts of $\sfield{\Gamma(S)}$, we find 
\begin{equation}
\vacexpof{0}{e^{i\sfield{\Gamma(S)}+i\cfluxS}}=e^{-\frac{1}{2}\innerproductof{\curlyh}{\Gamma(S)}{\Gamma(S)}}e^{i\cfluxS}.
\end{equation} 
Consequently, when combining the contributions, we again end up with an expression that is singular in the fluxes but Gaussian in the surfaces:
\begin{equation}
\vacexpof{\curlyF}{\weylF{e}{S}}=\delta_{e,0}e^{-\frac{1}{2}\innerproductof{\curlyh}{\Gamma(S)}{\Gamma(S)}}e^{i\cfluxS}.
\end{equation} 
 
This motivates the interpretation of the just presented representation of the HF algebra as an almost quasifree representation of the HF Weyl algebra determined by the covariance 
\begin{equation}
\covs{S}{S\pr}=\innerproductof{\curlyh}{\Gamma(S)}{\Gamma(S\pr)}
\end{equation} 
and the condensate contribution $\cfluxS$.

\subsection{A specific example of almost quasifree states}
\label{subsec_Uone_FirstRepAqf}
We want to give a more detailed example of the representation that was introduced in the previous section. 
The Hilbert space for this representation is 
\begin{equation}
\HScf=\ALHS\otimes\ltwo,
\end{equation}
{\ie} the AL Hilbert space times the Hilbert space of square summable, complex sequences, which is the Hilbert space of the harmonic oscillator. 
The orthonormal basis of $\ltwo$ is denoted by $\curlybrackets{\vac{n}}$, such that $\innerproduct{\vac{i}}{\vac{j}}=\delta_{ij}$.
The fluctuations and hence the covariance of the representation are determined by functions $f^a(x)$ integrated against a surface form factor, giving rise to the fluxlike object
\begin{equation}\label{eq_fluctuationfunction}
\flucS=\INTset{\sigma}\dxd{D}{x}f^a(x)\Sformfactor{S}{a}{x}.
\end{equation}

As the almost quasifree state behaves like the AL state for holonomies, we define the representation of holonomies accordingly:
\begin{equation}
\repcf{\holonomy}=\holonomy\otimes\unitelementof{\ltwo}.
\end{equation} 
Holonomies act only on the AL part of the Hilbert space. 
The flux representation on this Hilbert space is given by
\begin{equation}\label{eq_Uone_HOflux}
\repcf{E(S)}=\fluxvfS\otimes\unitelementof{\ltwo}+\unitelementof{\AL}\otimes\left(\flucS(\anni+\crea)+\cfluxS\right),
\end{equation}
with $a\equiv a(1), a^\dagger\equiv a^\dagger(1)$ the usual annihilation and creation operators of the harmonic oscillator, {\ie} $\commutator{\anni}{\crea}=1$.

The vacuum of $\HScf$ is the product of the vacua of the individual Hilbert spaces, which is
\begin{equation}
\vac{\curlyF}= \vac{\AL}\otimes\vac{0},
\end{equation} 
where $\vac{0}$ is the cyclic vector of the harmonic oscillator Hilbert space. 

In order to determine the vacuum expectation value of the Weyl operators we have to determine their action on $\vac{\curlyE}$. This is furthermore very instructive to see, since it tells us how the states generated by $\weylcf{e}{S}$ actually look like. Because of the tensor product structure we can look at the individual factors separately. We begin with the AL part. Every cylindrical function can be expressed in terms of holonomies, so it is sufficient to look only at a single holonomy that acts on the AL vacuum in order to understand the structure of Weyl operator generated states. It holds that 
\begin{equation}\label{eq_WeylOperatorALpart}
e^{\frac{i}{2}\oisec{e}{S}}\holonomy e^{iX_S}\vac{\AL}
=e^{\frac{i}{2}\oisec{e}{S}}\holonomy.
\end{equation}
This is the case because the AL vacuum state is basically the constant function $\vac{\AL}=1$ and hence is killed by acting on it with the derivative operator $X_S$. 
There is an additional phase factor, which is a remnant of the flux and knows about the intersection structure of $e$ and $S$.  
Therefore the AL-part of the Weyl operator creates cylindrical functions from the vacuum that have additional information about intersections with surfaces. 

For the harmonic oscillator part we make use of the notion of coherent states.
We realize that 
\begin{equation}
e^{i\flucS(\crea+\anni)}=e^{i\flucS\crea-\overline{i\flucS}\anni},
\end{equation}
which is the coherent state operator for the harmonic oscillator for a purely imaginary coherent state parameter $i\flucS$. 
Hence the action of this on the harmonic oscillator vacuum generates a coherent state with respect to the function $\flucS$, describing the fluctuations, {\ie}
\begin{equation}\label{eq_Uone_CScflux}
e^{i\flucS(\crea+\anni)}\vac{0}=e^{-\frac{1}{2}\flucS^2}\SUM{n=0}{\infty}\frac{(i\flucS)^n}{\sqrt{n!}}\bigroundbrackets{\crea}^n\vac{n}=:\vac{i\flucS}.
\end{equation} 

Putting both tensor factors together and including the condensate contribution yields the action of the Weyl operators on the vacuum:
\begin{equation}
\weylcf{e}{S}\vac{\curlyE}=e^{\frac{i}{2}\oisec{e}{S}}e^{i\cfluxS} \left(\holonomy\otimes\vac{i\flucS}\right).
\end{equation}
This is in fact a product of a holonomy encoding the edge $e$ and coherent state of the harmonic oscillator that knows about the classical flux through the surface $S$ and the corresponding fluctuations. Additionally the phase factor at the beginning is aware of the intersection structure of $e$ and $S$. 

Finally we can consider the vacuum expectation value. The inner product on the Hilbert space $\HScf=\ALHS\otimes \ltwo$ is given by the product of the ones of the individual Hilbert spaces, since there is no entanglement:
\begin{equation}
\innerproductof{\HScf}{\slotdot}{\slotdot}=\innerproductof{\ALHS}{\slotdot}{\slotdot}\innerproductof{\ltwo}{\slotdot}{\slotdot}.
\end{equation} 
Again, we look at the contributions separately. For holonomies and exponentiated fluxes we have 
\begin{equation}
\innerproductof{\ALHS}{\ALvac}{e^{\frac{i}{2}\oisec{e}{S}}\holonomy e^{iX_S}\ALvac}=\delta_{e,0}.
\end{equation}
Although there is a remnant of the surface in the vacuum expectation value, it does not contribute to it, since the expectation value is only nontrivial if the edge is trivial.  
The vacuum expectation value of harmonic oscillator coherent state operator is 
\begin{equation}
\innerproductof{\ltwo}{\vac{0}}{e^{i\flucS(\crea+\anni)}\vac{0}}
=e^{-\frac{1}{2}\flucS^2}.
\end{equation}
Putting things together yields the desired result. We find
\begin{equation}
\innerproductof{\HScf}{\vac{\curlyF}}{\weylcf{e}{S}\vac{\curlyF}}=\delta_{e,0}e^{-\frac{1}{2}\flucS^2}e^{i\cfluxS}.
\end{equation}
The vacuum expectation value of the Weyl operators hence is highly peaked at trivial holonomies and is a Gaussian for fluxes, peaked on the classical flux through the surface. 

Since the fluctuation function $\flucS$ is  linear in the form factors (see \eqref{eq_fluctuationfunction}),  we can set 
\begin{equation}\label{eq_Uone_cfluxCov}
\covs{S}{S\pr}=\flucS\flucSpr,
\end{equation}
which is bilinear and symmetric. 
Further, we can identify the classical flux $\cfluxS$ with a condensate contribution $\cons{S}$.

The almost quasifree state, corresponding to this covariance and condensate, is  given by
\begin{equation}\label{eq_Uone_AQfreeStatecflux}
\aqfsstate{\weyl{e}{S}}=\delta_{e,0}e^{-\frac{1}{2}\covs{S}{S}}e^{i\cons{S}}=\delta_{e,0}e^{-\frac{1}{2}\flucS^2}e^{i\cfluxS},
\end{equation} 
which matches the vacuum expectation value of the Weyl operators in the above representation. 
Therefore, by means of the GNS construction, $\GNSopen{\curlyF}$ is unitarily equivalent to the GNS representation of the almost quasifree state \eqref{eq_Uone_AQfreeStatecflux}. It can be interpreted as a representation of the almost quasifree state \eqref{eq_Uone_AQfreeStatecflux}. 

There is in fact a certain similarity to the Koslowski-Sahlmann (KS) representation and actually the considerations in \cite{Koslowski:2007kh,Sahlmann:2010hn}. There, the extension of the AL representation by a classical flux is considered. However, everything takes place on $\ALHS$ only. This allows for a shift of the flux peak, not for fluctuations in the fluxes. 

The representation we present here is an interesting example for the results about the representation theory of the HF algebra developed in \cite{Sahlmann:2002xu}. 
There, the main result is that a representation $(\curlyH,\pi)$ can be split up into a direct sum of representations
\begin{equation}
\begin{aligned}
\curlyH&\cong\bigoplus\limits_{\nu}\curlyH_\nu,\\
\pi&\cong\bigotimes\limits_\nu \pi_\nu,
\end{aligned}
\end{equation}
where the individual Hilbert spaces are $\curlyH_\nu\cong L_2(\overline{\curlyA},\grad{\mu_\nu})$. Denote the inclusion map $I_\nu:\curlyH_\nu\hookrightarrow\curlyH$. Under some rather mild assumptions it is possible to extend the representation such that --~adopted for our $\Uone$ considerations~-- one finds
\begin{equation}
\label{eq:generalrep}
\rep{E(S)}I_\nu(F)=I_\nu(X_S F)+\SUM{\iota}{}I_\iota(F_{\iota\nu}(S)).
\end{equation}
Here $F$ is a cylindrical function whose representation is $\repAL{F}=F$ and $F_{\iota\nu}(S)$ are functions that inherit certain properties from classical fluxes. It was always a bit mysterious what the physical meaning of a general representation in this class is, as the examples studied in the literature mostly have $\curlyH=L_2(\overline{\curlyA},\grad{\mu})$, i.e., no nontrivial direct sum. However, the example described above has this more complicated structure. Indeed, 
\begin{equation}
   \HScf=\ALHS\otimes\ltwo\simeq\bigoplus_{\mathbb{N}_0}\mathcal{H}_{\AL},
\end{equation}
and under this identification $\repcf{E(S)}$ takes the form \eqref{eq:generalrep} with
\begin{equation}
    F_{mn}= \cfluxS\, \delta_{m,n}+ \flucS\left(\delta_{m,n+1}+\delta_{m,n-1}\right), \qquad m,n\in \mathbb{N}_0. 
\end{equation}
This shows that such a nontrivial direct sum structure of a representation can have interesting physical significance. It seems to encode the fluctuations of the flux variables. 

\section{Almost quasifree representations for non-Abelian structure groups}\label{se:sutwo}
In this section we will consider representations of the HF algebra for non-Abelian $G$ that mimic the structure of the almost quasifree representations for $G=\Uone$. In particular, we will consider an example for the case most relevant for LQG, that of $G=\SUtwo$. It is straightforward to extend this construction to other structure groups but we will refrain from treating the general case. Rather, we will explore the changes to the area operator that are induced by the new representation. 

\subsection{An example for the structure group SU(2)}\label{se:sutwoqfree}
To obtain a representation of the HF algebra which mimicks the properties of an almost quasifree state, we will use proposition \ref{prop_flux_shift}, applied to the Hilbert space $\mathcal{K}=\curlyF$, where $\curlyF\equiv\curlyF(\curlyh)$ is the bosonic Fock space of a scalar field exactly as in Sec. \ref{se:aqfree_fock}. The representation to be modified is the 
AL representation; thus the resulting representation $\pi_\curlyF$ will act on the Hilbert space 
\begin{equation}
\HSFock=\ALHS\otimes\curlyF.
\end{equation} 
with the cyclic vector
\begin{equation}
\vac{\curlyF}=\ALvac\otimes\vac{0}.
\end{equation}
The AL vacuum is still the constant identity function and $\vac{0}$ is the Fock vacuum.

To construct the map $\Xi$ from Proposition \ref{prop_flux_shift} we define a classical background flux
\begin{equation}
    \cfluxSf=\INTset{S}\frac{1}{2}f^j(x){E^{(0)}}_j^a(x)\epsilon_{abc}\gradx{b}\wedge\gradx{c}
\end{equation}
which serves as the condensate contribution as in the $\Uone$ case. We define 
\begin{equation}\label{eq_SUtwofluxrep}
    \Xi(\fluxSf):= \sfield{\Gamma(S,f)}+\cfluxSf, 
\end{equation}
where $\phi$ is a scalar field as in \eqref{eq_scalarfield} and we require the map $\pair{S}{f}\mapsto \Gamma\pair{S}{f}$ again to be real and now linear in both, $S$ and $f$.
The representation of elementary algebra elements thus reads
\begin{equation}\label{eq_SUtwo_FockRep}
\begin{aligned}
\repF{\fluxSf}&=\fluxvfSf\otimes\unitelementof{\curlyF}+\unitelementof{\AL}\otimes\left(\sfield{\Gamma(S,f)}+\cfluxSf\right),\\
\repF{F}&=F\otimes\unitelementof{\curlyF}, 
\end{aligned}
\end{equation}
where $F$ is a cylindrical function. 

One of the characteristic features of the almost quasifree representations was factorization of flux operator $n$-point functions. This can also be directly recovered here. 
The one-point correlation function of flux operators is given by
\begin{equation}
\vacexpof{\curlyF}{\repF{\fluxSf}}=\cfluxSf,
\end{equation}
since the field one-point functions vanish.
Considering a product of two flux representations, {\ie}  
\begin{equation}
\begin{aligned}
&\repF{\fluxSf}\repF{\fluxSfopen{S\pr}{f\pr}}=\\
&=\roundbrackets{\FfluxSf}\roundbrackets{\FfluxSfpr}\\
&=\fluxvfSf\fluxvfSfpr\otimes\unitelementof{\curlyF}
+\fluxvfSf\otimes\left(\sfield{\Gamma\pair{S\pr}{f\pr}}+\cfluxSfpr\right)
+\fluxvfSfpr\otimes\left(\sfield{\Gamma\pair{S}{f}}+\cfluxSf\right)\\
&\quad+\unitelementof{\AL}\otimes \left(\sfield{\Gamma\pair{S}{f}}+\cfluxSf\right)\left(\sfield{\Gamma\pair{S\pr}{f\pr}}+\cfluxSfpr\right),
\end{aligned}
\end{equation}
we realize that, upon taking the vacuum expectation value, only the last term survives, since there is no operator that annihilates the AL vacuum. This yields the two-point correlation function
\begin{equation}
\vacexpof{\curlyF}{\repF{\fluxSf}\repF{\fluxSfpr}}=\innerproductof{\curlyh}{\Gamma\pair{S}{f}}{\Gamma\pair{S\pr}{f\pr}}+\cfluxSf\cfluxSfpr.
\end{equation}
 
For cylindrical functions only, there is no significant change to the pure AL representation, exactly as in the $\Uone$ case. 

Finally, to demonstrate that there are possible maps $\Gamma$, we will be specific about the scalar field that we added to the formalism. We consider the one-particle Hilbert space  
\begin{equation}
\curlyh=L^2(\realnumbers^3,\grad{^3x}).
\end{equation}  
The augmentations we added to the flux representation, hence, can then be interpreted as a smeared scalar field on $\realnumbers^3$. The smeared scalar field operator is of the form 
\begin{equation}
\sfield{g}=\INTset{\realnumbers^3}\dxd{3}{x}g(x)\sfield{x},
\end{equation}
with $\sfield{x}$ being an operator valued distribution and $g(x)$ a test function that is used to cast $\sfield{x}$ into a well-defined operator on the Fock space. 
Considering the pair $\pair{S}{f}$, we have to keep in mind that the Lie algebra valued smearing function only has support on the surface. Therefore we can work with $f$ actually only when we integrate it over the intrinsic parametrization of $S$. Furthermore we need an embedding of the surface into $\realnumbers^3$ in order to be able to integrate it against the scalar field. A possible definition is thus
\begin{equation}\label{eq_SUtwo_sfield}
\sfield{\Gamma\pair{S}{f}}=\INTset{\realnumbers^3}\dxd{3}{x}\Gamma\pair{S}{f}\sfield{x}=\INTset{\realnumbers^3}\dxd{3}{x}\INTset{S}\dxd{2}{u} \kernelS{_i}{x,u}f^i(u)\sfield{x}.
\end{equation} 
In this, $\pair{u_1}{u_2}$ is an intrinsic parametrization of $S$, and $\kernelS{_i}{x,u}$ is an integral kernel that relates points on the surface to points in $\realnumbers^3$. As an explicit example for $\Gamma(S,f)$ one might consider
\begin{equation}
\Gamma(S,f)=\Big(e^\Delta \sum_i f^i F_S\Big)(x),
\end{equation}
where $\Delta=\partial_i\partial^i$ is the negative-definite Laplacian on $\realnumbers^3$.

At the end of the day we have a representation of the holonomy flux algebra --~underlying loop quantum gravity~-- at disposal which is probably as close as possible to an actual almost quasifree representation of a theoretical Weyl algebra in the sense of definition \ref{def_WeylAlgebra}. 
The presumably most important characteristic of this representation is the fact that it is Gaussian for the fluxes in a nonextremal way. 
 
\subsection{Revisiting the area operator}\label{se:area}

In this section we consider the derivation of the area operator of loop quantum gravity for the representation introduced in the previous section.

In the context of the KS representation in \cite{Sahlmann:2010hn}, there is a similar augmentation to the flux representation of a classical flux. 
Also in \cite{Sahlmann:2010hn}, there is an analysis of the area operator in terms of the extended representation. So it is possible to compare the area operator for the almost quasifree representation to both the AL and the KS representation. 
It turns out that, in fact, the similarities to the representations in \cite{Sahlmann:2010hn} are sufficient to adopt their procedure and especially some crucial details to our situation. 
For reasons of simplicity, we do not consider the full augmented flux representation in \eqref{eq_SUtwo_FockRep}, but drop the classical flux part. Since this contribution to the AL representation is thoroughly dealt with in \cite{Sahlmann:2010hn}, we expect a similar behavior here. 

For the derivation we follow closely the steps in both, \cite{Ashtekar:1996eg,Sahlmann:2010hn}. Regarding the details of the original derivation of the area operator we refer to \cite{Ashtekar:1996eg}.
The area operator is a quantization of the classical area functional
\begin{equation}
A_S=\INTset{S}\dxd{2}{u}\sqrt{E_i(u)E^i(u)}. 
\end{equation}
In accordance with \cite{Ashtekar:1996eg} we want to use a  two-dimensional parametrization $u$ of the surface. The single component of the electric field we have to integrate over is $E_i^{\vphantom{3}} \equiv E_i^3$. 
The main issue of the derivation of the area operator is now to carefully regularize this integral and replace the classical expressions by their quantum counterparts.  
For a regularization one considers a family of non-negative densities $\regf{u,v}$, which, upon taking away the regularization parameter $\epsilon$, become Dirac deltas, {\ie}
\begin{equation}
\lim\limits_{\epsilon\rightarrow 0} \regf{u,w}=\diracdeltad{2}{u,w}.
\end{equation}
In the original derivation, this regularization allows for a point splitting of the area operator into a sum of operators that only acts at the intersection point of the graph with respect to which we want to determine the area of the surface. The same is considered here. 

In every representation at hand we do not have direct access to operator valued distributions for the electric field, but only to operators for fluxes. These can nevertheless be evaluated at certain points of the surface using a regularization as introduced above. One considers an $\SUtwo$ smearing function of the form 
$f\equiv \regf{u,w}\tau_i$ and determines the flux operators with respect to this. 
They can be denoted by 
\begin{equation}
\regflux{_i}{u}:=\fluxvfSfopen{S}{\regf{u,\slotdot}\tau_i},
\end{equation}
where the second argument of $\regf{u,w}$ is omitted because it is subject to some internal integration.\footnote{For the detailed form of the operators we refer to \cite{Ashtekar:1996eg} and the notation therein.} The surface is also omitted since the considerations refer to only a single surface.
The AL area operator then turns out to be the object 
\begin{equation}\label{eq_AreaOperatorALpart}
A_\AL(S)=\lim\limits_{\epsilon\rightarrow 0} \INTset{S}\dxd{2}{u}\sqrt{\regflux{_i}{u}\regflux{^i}{u}}= \sum\limits_{v\in V(\gamma)} A_{\AL,\gamma}(S),
\end{equation}
splitting into an expression of local area operators at the individual vertices of $\gamma$. Following \cite{Ashtekar:1996eg}, we raise and lower indices with $-\frac{1}{2}$ times the Cartan-Killing metric of $\sutwo$. The graph $\gamma$ is assumed to be adapted to the surface in the sense that there is a vertex at every intersection point. 

As shown in \cite{Ashtekar:1996eg}, the regularized fluxes act on a cylindrical function $\Psi_\gamma$  as
\begin{equation}
\regflux{_i}{u}\Psi_\gamma= 4\pi\lplanck^2\sum\limits_{v\in V(\gamma)}
\regf{x,v}\sum\limits_{J_v}\kappa_{J_v} X_{J_vi}\Psi_\gamma.
\end{equation}
Here $J_v$ is a label for the edges beginning or ending at vertex $v$ of $\gamma$ and $\lplanck=\sqrt{\hbar G}$. The operators $X_{J_vi}$ act for ingoing\,/\,outgoing edges as right-\,/\,left-invariant derivatives with respect to $e_{J_v}$ and $\kappa_{J_v}\equiv\keS{e_{J_v}}{S}$.
A straightforward calculation shows that at each vertex 
\begin{equation}\label{eq_CommFlux}
\commutator{X^i_{I_v}}{X^j_{J_v}}=i \delta_{I_vJ_v}\tensor{f}{^i^j_k}X^k_{J_v},
\end{equation}
where $\tensor{f}{^i^j_k}$ are the structure constants of $\sutwo$.
This implies that the left-\,/\,right-invariant vector fields $X^i_{J_v}$ individually satisfy --~at a fixed vertex and for each edge~-- the algebra relation of $\sutwo$ and hence can be treated like spin operators. Operators for different edges commute. 
The sum over the edges at a vertex splits, via the sign of $\kappa_{J_v}$, into a total contribution of type up and total contribution of type down.  

We now have to evaluate the scalar field operator at exactly the same smearing function. Equation \eqref{eq_SUtwo_sfield} therefore turns into  
\begin{equation}
\regsfield{_i}{u}:=\sfield{\Gamma\pair{S}{\regf{u,\slotdot}\tau_i}}=
\INTset{\realnumbers^3}\dxd{3}{x}\INTset{S}\dxd{2}{w} \kernelS{_i}{x,w}\regf{u,w} \sfield{x}.
\end{equation}
The full augmented representation of fluxes hence is 
\begin{equation}
\regFflux{_i}{u}:=\regflux{_i}{u}\otimes\unitelementof{\curlyF}+\unitelementof{\AL}\otimes\regsfield{_i}{u}.
\end{equation}
These are the objects we want to rederive the area operator for. 

Analogously to \cite{Ashtekar:1996eg,Sahlmann:2010hn} we consider the area operator as the limit of the regularized object 
\begin{equation}\label{eq_AreaOpRegularized}
A_{S,\epsilon}=\INTset{S}\dxd{2}{u}\sqrt{\regFflux{_i}{u}\regFflux{^i}{u}}.
\end{equation}
We analyze the object under the square root:
\begin{align}
g_{S,\epsilon}(u):=&\regFflux{_i}{u}\regFflux{^i}{u}\nonumber \\
=& \regflux{_i}{u}\regflux{^i}{u}\otimes\unitelementof{\curlyF}		\label{eq_gSeps_AL}\\
&+ \unitelementof{\AL}\otimes\regsfield{_i}{u}\regsfield{^i}{u}		\label{eq_gSeps_sfield}\\
&+ 2 \regflux{^i}{u}\otimes\regsfield{_i}{u}.		\label{eq_gSeps_mix}
\end{align}
The first term \eqref{eq_gSeps_AL} is exactly the term leading to the  AL area operator, while the second term \eqref{eq_gSeps_sfield} is of the same form but  depends only on the scalar field. Only in the third contribution \eqref{eq_gSeps_mix} is there an interaction between the AL and the scalar field parts of the flux representation. 

Following the ideas of \cite{Sahlmann:2010hn}, we realize that the individual contributions \eqref{eq_gSeps_AL}, \eqref{eq_gSeps_sfield}, and \eqref{eq_gSeps_mix}
mutually commute. This is obvious for considering the first and the second contribution. For the second and the third contribution it follows from the fact that $\commutator{\sfield{x}}{\sfield{y}}=0$. 
Finally, for the first and the third contribution the vanishing commutator follows from \eqref{eq_CommFlux}. As a consequence, there is a complete set of states that are eigenstates of all three terms. Having established this, we can now analyze the contributions individually.  

\textit{AL contribution:} We consider an arbitrary graph $\gamma$. Without loss of generality, we can refine it such that all transversal intersections of $\gamma$ with $S$ are vertices, and denote the result again with $\gamma$. We chose the parameter $\epsilon$ small enough such that for all $v,v'\in V(\gamma)$ the smearing function $f_\epsilon(v,v')$ is nonzero only for $v=v'$. Then the result of the analysis in \cite{Ashtekar:1996eg} for the piece $\regflux{_i}{u}\regflux{^i}{u}$ when acting on an eigenstate $\Psi_\gamma$ with an underlying spin network $\gamma$ is 
\begin{align}
&\regflux{_i}{u}\regflux{^i}{u}\Psi_\gamma= \sum\limits_{v\in V(\gamma)} \roundbrackets{\regf{u,v}}^2 \roundbrackets{a_v}^2 \Psi_\gamma,\\
a_v&=4\pi\lplanck^2\sqrt{2j^\mathrm{u}_{v}\roundbrackets{j^\mathrm{u}_{v}+1}+2j^\mathrm{d}_{v}\roundbrackets{j^\mathrm{d}_{v}+1}-j^\mathrm{u+d}_{v}\roundbrackets{j^\mathrm{u+d}_{v}+1}}.
\end{align}
The form of the eigenvalue $a_v$ comes from the fact that the eigenstates couple all spins of edges of type \emph{up} to a total spin $j^\mathrm{u}_{v}$ and all spins of edges of type \emph{down} to a total spin $j^\mathrm{d}_{v}$. The third contribution arises from coupling the up and down contributions. 

\textit{Scalar field contribution:}
The treatment of this is a little tricky. 
In the first place we are dealing with operator valued distributions integrated against test functions and need to consider eigenstates of such objects. 
As the operator valued distribution $\sfield{x}$ is the quantum field theoretical analog of the position operator in quantum mechanics it is clear that there are no proper eigenstates of $\sfield{x}$. One consequently has to go over to a formulation in terms of generalized eigenstates. 
For quantum mechanics this can even be formulated mathematically precise when considering the framework of rigged Hilbert spaces. Here one works with a triple $\curlyS\subset\curlyH\subset\curlyS^\prime$, which consists of the actual Hilbert space $H=\Ltwo{\realnumbers^3,\grad{^3x}}$, the space $\curlyS$ of test functions on $\realnumbers^3$ and its dual $\curlyS\pr$, the space of tempered distributions.  
The idea is now to transfer this procedure to the scalar field on a Fock space in order to be able to work with states that satisfy an eigenvalue equation of the form
\begin{equation}\label{eq_genFockES}
\sfield{f}|F)=F(f)|F),
\end{equation}
with the generalized eigenstate with respect to $F\in\curlyS\pr$ denoted by $|F)$ and $F(f)$ interpreted in a distributional sense, {\ie}
\begin{equation}
F(f)=\INTset{\realnumbers^3}\dxd{3}{x}F(x)f(x).
\end{equation}
A technical argument which we present in the Appendix, shows that it is reasonable to assume that there is a sufficiently large class of real valued functions $F$ that give rise to a tempered distribution and allow one to span the Hilbert space $\curlyH$. 
In a distributional sense Eq. \eqref{eq_genFockES} can be seen as
\begin{equation}
\sfield{x}|F)=F(x)|F).
\end{equation} 

With this at hand we can start to analyze the piece $\regsfield{_i}{u}\regsfield{^i}{u}$ by acting with a single field on a generalized eigenstate:
\begin{equation}
\begin{aligned}
\regsfield{_i}{u}|F)&=\INTset{\realnumbers^3}\dxd{3}{x}\INTset{S}\dxd{2}{w} \kernelS{_i}{x,w}\regf{u,w} \sfield{x}|F)\\
&=\INTset{\realnumbers^3}\dxd{3}{x}\INTset{S}\dxd{2}{w} \kernelS{_i}{x,w}\regf{u,w}F(x)|F)\\
&=:\reggenEV{_i}{u}|F).
\end{aligned}
\end{equation}
The whole contribution then is of course 
\begin{equation}
\regsfield{_i}{u}\regsfield{^i}{u}|F)=\reggenEV{_i}{u}\reggenEV{^i}{u}|F).
\end{equation}

\textit{Mixed contribution:}
For the final piece, we have to consider a state $\Psi_\gamma\otimes|F)$ where $\Psi_\gamma$ is an eigenstate of area at the vertex, and $|F)$ is an eigenstate of $\regsfield{_i}{u}$. Acting on this yields
\begin{equation}
\begin{aligned}
&\roundbrackets{2\regflux{^i}{u}\otimes\regsfield{_i}{u}}\Psi_\gamma\otimes|F)
=\Bigroundbrackets{8 \pi\lplanck^2 \sum\limits_{v\in V(\gamma)}
\regf{u,v} \roundbrackets{{X^{\mathrm{u}\,i}_v}-{X^{\mathrm{d}\,i}_v}}\otimes \reggenEV{_i}{u} }\Psi_\gamma\otimes|F), 
\end{aligned}
\end{equation} 
where we already split up the contributions for edges of type up and down. Note that in our choice of $\Psi_\gamma$ we have already specified the total spins $j^\mathrm{u}, j^\mathrm{d},j^\mathrm{u+d}$, but not the eigenvalue of one of their components. Now we will assume in addition that 
\begin{equation}
    X_v^{\mathrm{u}\,i}\, \reggenEV{_i}{u}\, \Psi_\gamma
    \equiv X_v^{\mathrm{u}\,i}\, \widehat{F}_{\epsilon i}(u)\, |F_\epsilon|(u)\,\Psi_\gamma
    = m_v^\mathrm{u}\, |F_\epsilon|(u)\,\Psi_\gamma,
\end{equation}
where $|F_\epsilon|(u)=\sqrt{\reggenEV{_i}{u}\reggenEV{^i}{u}}$ and $\widehat{F}_{\epsilon i}(u)= \reggenEV{_i}{u}/|F_\epsilon|(u)$. $m_v^\mathrm{u}$ is the eigenvalue of $X_v^{\mathrm{u}} \cdot \widehat{F}_{\epsilon}(u)$, i.e. the magnetic quantum number in the direction $\widehat{F}_{\epsilon}(u)$. We make the same assumption for $X_v^\mathrm{d}$, with the eigenvalue $m_v^\mathrm{d}$. Then
\begin{equation}
\roundbrackets{\roundbrackets{X^{\mathrm{u}\,i}_v-X^{\mathrm{d}\,i}_v}\otimes \reggenEV{_i}{u}} \Psi_\gamma\otimes|F)
=\roundbrackets{m_v\sqrt{\reggenEV{_i}{u}\reggenEV{^i}{u}}} \Psi_\gamma\otimes|F)
\end{equation}
and $m_v=m_v^\mathrm{u}-m_v^\mathrm{d}$ denotes the difference of total magnetic quantum numbers of up and down contributions. In total we have
\begin{equation}
\begin{aligned}
\roundbrackets{2\regflux{^i}{u}\otimes\regsfield{_i}{u}}\Psi_\gamma\otimes|F)
=\Bigroundbrackets{8\pi\lplanck^2 \sum\limits_{v\in V(\gamma)}
\regf{u,v} m_v\sqrt{\reggenEV{_i}{u}\reggenEV{^i}{u}}} \Psi_\gamma\otimes|F).
\end{aligned}
\end{equation}

Putting everything together, we find the following as the action of $g_{S,\epsilon}$ on the common eigenvector: 
\begin{equation}
\begin{aligned}
&g_{S,\epsilon}(u)\Psi_\gamma\otimes|F)=\\
&=\Bigroundbrackets{
\sum\limits_{v\in V(\gamma)} \roundbrackets{\regf{u,v}}^2 \roundbrackets{a_v}^2
+\reggenEV{_i}{u}\reggenEV{^i}{u}
+8\pi\lplanck^2  \sqrt{\reggenEV{_i}{u}\reggenEV{^i}{u}}  \sum\limits_{v\in V(\gamma)}
\regf{u,v} m_v
}\Psi_\gamma\otimes|F). 
\end{aligned}
\end{equation}
Following \cite{Sahlmann:2010hn}, we want to perform a completion of the square with respect to the first two terms in the bracket. That is, we will write the first two terms as a square minus a correction.\footnote{Since we combine two quadratic terms $a^2$, $b^2$, we can choose whether we want to obtain $(a+b)^2$ or $(a-b)^2$ by adding or subtracting a correction term, respectively. For the argument here we chose the first case. The second case yields the same result, however with an argument that is slightly more complicated.} 
To this end we have to have a look at the first term. At first we want to calculate the following:
\begin{equation}
\Bigroundbrackets{
\sum\limits_{v\in V(\gamma)} \regf{u,v} a_v
}^2
=
\sum\limits_{v,v\pr\in V(\gamma)} \regf{u,v}\regf{u,v\pr} a_v a_{v\pr}
=
\sum\limits_{v\in V(\gamma)} \roundbrackets{\regf{u,v}}^2 \roundbrackets{a_v}^2.
\end{equation}
This is in fact possible since we can choose $\epsilon$ to be so small that 
$\regf{u,v}\regf{u,v\pr}=0$ if the vertices do not coincide. 
Hence we rewrite 
\begin{equation}\label{eq_gSepsEV}
\begin{aligned}
g_{S,\epsilon}(u)\Psi_\gamma\otimes|F)=
\begin{aligned}[t]
\bigg(&\bigg(\sum\limits_{v\in V(\gamma)} \regf{u,v} a_v+ \sqrt{\reggenEV{_i}{u}\reggenEV{^i}{u}}\bigg)^2\\
&+\sum\limits_{v\in V(\gamma)} \regf{u,v} \sqrt{\reggenEV{_i}{u}\reggenEV{^i}{u}}
\roundbrackets{8\pi\lplanck^2 m_v-2 a_v}\bigg)\Psi_\gamma\otimes|F)
\end{aligned}
\end{aligned}
\end{equation}
and have found the eigenvalue of $g_{S,\epsilon}(u)$. 

However, the above eigenvalue is not the desired result. For this we have to take the square root and integrate over the surface. 
Again, we follow the descriptions in \cite{Sahlmann:2010hn}. We consider two real, positive variables $a\geq b$. As a matter of fact the variables satisfy the inequalities 
\begin{equation}
\begin{aligned}
\sqrt{a} &\leq \sqrt{a+b} \leq \sqrt{a}+\sqrt{b},\\
\sqrt{a}-\sqrt{b}&\leq \sqrt{a-b} \leq \sqrt{a}.
\end{aligned}
\end{equation}
The quadratic part of \eqref{eq_gSepsEV}, which we identify with $a$, is manifestly positive, and the absolute value of the linear part, which we identify with $b$, has to be smaller compared to the quadratic part in order to ensure positivity of the operator $g_{S,\epsilon}$. Hence, the inequalities are employable: the first one in the case where the linear part of \eqref{eq_gSepsEV} is positive, and the second in the case where it is negative.
The advantage of this is that $a$ and $b$ depend on the regulator $\epsilon$. 
If we can show that 
\begin{equation}\label{eq_SquarerootConditions}
\lim\limits_{\epsilon\rightarrow 0} \int \sqrt{b}=0,\quad \text{and}\quad \lim\limits_{\epsilon\rightarrow 0} \int \sqrt{a} \quad \text{exists},
\end{equation}
it holds furthermore, by means of the inequalities, that 
\begin{equation}
\lim\limits_{\epsilon\rightarrow 0} \int \sqrt{a\pm b}=\lim\limits_{\epsilon\rightarrow 0} \int \sqrt{a}.
\end{equation} 

Before we consider the square root of the  second term in \eqref{eq_gSepsEV} we take a closer look at 
\begin{equation}
\begin{aligned}
\reggenEV{_i}{u}\reggenEV{^i}{u}&=
\INTset{\realnumbers^3}\dxd{3}{x}\INTset{\realnumbers^3}\dxd{3}{y}
\INTset{S}\dxd{2}{v}\INTset{S}\dxd{2}{w}
\kernelS{_i}{x,v}\kernelS{^i}{y,w}
\regf{u,v}F(x)\regf{u,w}F(y)\\
\overset{\epsilon \rightarrow 0}{\longrightarrow}&
\INTset{\realnumbers^3}\dxd{3}{x}\INTset{\realnumbers^3}\dxd{3}{y}
\INTset{S}\dxd{2}{v}\INTset{S}\dxd{2}{w}
\kernelS{_i}{x,v}\kernelS{^i}{y,w}
\diracdeltad{2}{u,v}F(x)\diracdeltad{2}{u,w}F(y)=\\
&=
\INTset{\realnumbers^3}\dxd{3}{x}\INTset{\realnumbers^3}\dxd{3}{y}
\kernelS{_i}{x,u}\kernelS{^i}{y,u}
F(x)F(y)=:F_\phi(u)
\end{aligned}
\end{equation} 
and realize that in the limit where we remove the regulator this and hence also its square root are bounded functions on the surface, by means of the integral kernel and the generalized eigenvalue being test functions. 
Due to this we want to remove the regulator for this object now.

The final thing to realize is that if $\regf{u}$ is a density that converges to the delta function, its square root converges to zero. 
Now we are able to encounter the considered square root. We already take along the limit and the integration:
\begin{equation}
\begin{aligned}
&\lim\limits_{\epsilon\rightarrow 0}\INTset{S}\dxd{2}{u}\sqrt{
\left|
\sum\limits_{v\in V(\gamma)} \regf{u,v} \sqrt{F_\phi(u)}
\roundbrackets{8\pi\lplanck^2 m_v-2 a_v}\right|}=\\
&=\lim\limits_{\epsilon\rightarrow 0}\INTset{S}\dxd{2}{u}
\sum\limits_{v\in V(\gamma)} \sqrt{ \regf{u,v}} \sqrt[4]{F_\phi(u)}
\sqrt{|8\pi\lplanck^2 m_v-2   a_v|}\\
&=0
\end{aligned}
\end{equation}
since we can choose $\epsilon$ to be small enough to perform the summation outside of the square root and the absolute value.

According to this result, we finally can consider taking the square root of $g_{S,\epsilon}$, perform the integration, and remove the regulator:
\begin{equation}
\begin{aligned}
\lim\limits_{\epsilon\rightarrow 0}\INTset{S}\dxd{2}{u} \sqrt{g_{S,\epsilon}}\Psi_\gamma\otimes|F) 
&=\lim\limits_{\epsilon\rightarrow 0}\INTset{S}\dxd{2}{u} 
\Big|\sum\limits_{v\in V(\gamma)} \regf{u,v} a_v+ \sqrt{\reggenEV{_i}{u}\reggenEV{^i}{u}}\Big|\Psi_\gamma\otimes|F).
\end{aligned}
\end{equation}  
Both terms in the absolute value are manifestly positive. 
Hence, 
\begin{equation}
\begin{aligned}
\lim\limits_{\epsilon\rightarrow 0}\INTset{S}\dxd{2}{u} \sqrt{g_{S,\epsilon}}\Psi_\gamma\otimes|F) 
&=\biggroundbrackets{\lim\limits_{\epsilon\rightarrow 0}\INTset{S}\dxd{2}{u} 
\sum\limits_{v\in V(\gamma)} \regf{u,v} a_v+ \lim\limits_{\epsilon\rightarrow 0}\INTset{S}\dxd{2}{u} \sqrt{\reggenEV{_i}{u}\reggenEV{^i}{u}}}\Psi_\gamma\otimes|F)\\
&=\biggroundbrackets{ 
\sum\limits_{v\in V(\gamma)} a_v+ \INTset{S}\dxd{2}{u}\sqrt{F_\phi(u)}}\Psi_\gamma\otimes|F).
\end{aligned}
\end{equation}  
In the limit the integral takes away the Dirac delta and the first term of the above equation is just the original area operator. Replacing the eigenvalues again by the corresponding operators, the area operator for this almost quasifree representation reads 
\begin{equation}
A(S)=A_\AL(S)\otimes\unitelementof{\curlyF}+\unitelementof{\AL}\otimes\INTset{S}\dxd{2}{u}\sqrt{\phi_i(K_S(u))\phi^i(K_S(u))}. 
\end{equation}
Here we furthermore use some notation for the scalar fields once the regulator is removed:
\begin{equation}
\phi_i(K_S(u))=\INTset{\realnumbers^3}\dxd{3}{y}\kernelS{_i}{y,u}\sfield{y}.
\end{equation}

This result is in fact somewhat similar to \cite{Sahlmann:2010hn}. There, the original area operator is extended by the classical area of the surface. Another similarity is that there are only additive changes that increase
the quantum area. 
In any case, the addition here is an operator which carries its own quantum fluctuations.

\section{Conclusions and outlook} \label{se:outlook}
In this work, we presented a new type of vacuum and the corresponding representations of the HF algebra $\hfalgebra$. The key feature is a Gaussian vacuum expectation value for fluxes, encoding spatial geometry, which is characterized by a condensate contribution, e.\,g. a background flux, and fluctuations determined by a covariance of surfaces, here determined by e.\,g. a scalar field. 

In the case of $G=\Uone$ we found a precise relation between the representations of the HF algebra and those of a particular Weyl algebra. On it, we introduced a new class of almost quasifree states, which behaves as the AL state for holonomies and cylindrical functions, while it is Gaussian for fluxes. We worked out the representation in two examples. 

For the HF algebra $\hfalgebra$ defined in Sec. \ref{se:hfdef}, in the case $D=3$ and $G=\SUtwo$, we introduced a new class of representation, with the behavior described above. In particular, there are nonvanishing contributions for the $n$-point correlation functions. For a concrete example, we demonstrated that this change of representation leads to a significant change in the area spectrum of surfaces. 

It might be tempting to interpret the presence of a scalar field in the fluxes of the new representation as a toy model for matter coupling, especially because the geometric correlation functions are determined by the scalar field part. However, including the scalar field into the flux operators would lead to nonvanishing commutators between fluxes and the momentum conjugate to the scalar field. As geometric and matter variables have to have trivial commutation relations, the scalar field we introduced cannot be interpreted as a physical matter field. Rather, it only serves to introduce the Gaussianity in the representation.  

How can we extend and apply and extend the results contained in this work? 
\begin{itemize}

    \item[(i)] One area of application for  the new states is the quantum origin of the primordial perturbations. The current observations of the CMB suggest that primordial perturbations of the spatial metric (and matter density) are described well by a Gaussian random field with a certain covariance. Thus the states that we describe might be well suited to describe the quantum geometry of the early universe. In the standard picture, the covariance of the fluctuations is determined by following an initial quantum state through inflation. The new states allow one to think about a quantum gravitational origin of the fluctuations.

    \item[(ii)] What is the entanglement entropy between subsystems in the new class of states, and how does it compare to that of the class of states in \cite{Bianchi:2016tmw,Baytas:2018wjd,Bianchi:2018fmq}? In the U(1) case one can try to apply the techniques developed in \cite{Bianchi:2019pvv} to answer this question. A preliminary analysis indicates that the entanglement entropy of simple subsystems (generated by a flux and an intersecting holonomy) with the rest of the degrees of freedom is infinite. But this question should be studied further. 

    \item[(iii)] Are there states that are Gaussian in both variables? This question still stands and should eventually be resolved, either positively or negatively. There are some indications that it is not possible to find such states. In a slightly different framework, there is indeed a no-go result \cite{Lanery:2014bca,Lanery:2015mxa}. Also, for the $\Uone$ theory, there are some indications that no representations of that type can be found \cite{Nekovar14}. On the other hand, if such states do exist, then the almost quasifree states of the present work might be a stepping stone to reach them. 

\end{itemize}

\begin{acknowledgments}
The authors thank Eugenio Bianchi, Alexander Stottmeister, and Thomas Thiemann for helpful discussions, and Christian Fleischhack for many helpful comments that improved the manuscript substantially. RS thanks the Friedrich-Alexander-Universit\"at Erlangen-N\"urnberg (FAU) and the Evangelisches Studienwerk Villigst for financial support.
\end{acknowledgments}

\appendix*

\section*{Appendix: Eigenstates of a quantum field}
\renewcommand{\thesection}{A}
\label{app_EigenstatesQuantumfield}

In this appendix we will show that under reasonable assumptions there is a sufficiently large class of real valued functions $F$ such that there are generalized eigenstates $|F)$ for a scalar field that fulfill
\begin{equation}
\sfield{x}|F)=F(x)|F).
\end{equation} 

A rigged Hilbert space is a Hilbert space $\mathcal{H}$, together with a dense, continuously embedded topological vector space $\Phi\subset\mathcal{H}$. As a consequence, $\mathcal{H}$ is contained in the topological dual of $\Phi$, $\mathcal{H}\subset \Phi'$. The improper eigenstates of self-adjoint operators can find their home in such duals.
\begin{theorem}[\cite{em,Rigged}]
\label{thm_1}
Let $\Phi, \mathcal{H}, \Phi'$ be as above, with the additional assumption that $\mathcal{H}$ is separable. Any self-adjoint operator $ A $ mapping $\Phi$ continuously (in the topology of $ \Phi $) onto itself possesses a complete system of generalized eigenfunctions $ (F_{\alpha}) $, i.e. elements $ F_{\alpha} \in \Phi' $ such that for any $ \phi \in \Phi $,
\begin{equation}
{F_{\alpha}}(A \phi) = \lambda_{\alpha} {F_{\alpha}}(\phi), \qquad \alpha \in \mathfrak{A},
\end{equation}
where the set of values of the function $ \alpha \mapsto \lambda_{\alpha} $, $ \alpha \in \mathfrak{A} $, is contained in the spectrum of $A$ and has full measure with respect to the spectral measure $ {\sigma_{f}}(\lambda) $, of any element $f \in \mathcal{H} $. The completeness of the system means that $ {F_{\alpha}}(\phi) \neq 0 $ for any $ \phi \in \Phi $, $ \phi \neq 0 $, for at least one $ \alpha \in \mathfrak{A} $.
\end{theorem}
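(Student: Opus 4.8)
The statement is the Gelfand--Maurin \emph{nuclear spectral theorem}, and the plan is to follow the classical route in three stages: diagonalize $A$ by the ordinary spectral theorem as a direct integral over its spectrum, use the nuclear structure of $\Phi$ to make sense of the ``fibre evaluation maps'' $\phi\mapsto\phi(\lambda)$ for $\mu$-almost every $\lambda$, and then read off the generalized eigenfunctions as the adjoints of these fibre maps. I would at the outset make explicit the one ingredient that is really needed beyond a bare rigging and that is implicit in the setup, namely that $\Phi$ be \emph{nuclear} --- this holds for the Schwartz-type test-function spaces relevant to the scalar-field application of Sec.~\ref{se:area}, and without it the conclusion genuinely fails.

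First I would apply the spectral theorem to the self-adjoint $A$ on the separable $\mathcal{H}$: this yields a basic scalar spectral measure $\mu$ supported on $\mathrm{spec}(A)\subset\mathbb{R}$, a measurable field of Hilbert spaces $\lambda\mapsto\mathcal{H}_\lambda$, and a unitary $U\colon\mathcal{H}\to\int^{\oplus}\mathcal{H}_\lambda\,d\mu(\lambda)$ intertwining $A$ with multiplication by $\lambda$. Since the spectral measure $\sigma_f$ of every $f\in\mathcal{H}$ is absolutely continuous with respect to $\mu$, the notions ``full $\mu$-measure'' and ``full $\sigma_f$-measure for all $f\in\mathcal{H}$'' coincide, which is exactly what the theorem requires of the set of eigenvalues.

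Next, using nuclearity, I would insert a Hilbert space $\mathcal{H}_+$ with continuous dense inclusions $\Phi\subset\mathcal{H}_+\subset\mathcal{H}$ such that $\mathcal{H}_+\hookrightarrow\mathcal{H}$ is Hilbert--Schmidt, and by iterating build a chain at each level of which $A$ still acts continuously. Composing the Hilbert--Schmidt inclusion with $U$ and disintegrating it produces a measurable family of bounded operators $j_\lambda\colon\mathcal{H}_+\to\mathcal{H}_\lambda$, defined for $\mu$-a.e.\ $\lambda$, with $(U\psi)(\lambda)=j_\lambda\psi$; restricting $j_\lambda$ to $\Phi$ gives the fibre maps. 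For $\mu$-a.e.\ $\lambda$ and each $\xi\in\mathcal{H}_\lambda$ I would set $F_{\lambda,\xi}(\phi):=\langle j_\lambda\phi,\xi\rangle_{\mathcal{H}_\lambda}$; this is continuous in the topology of $\Phi$ and satisfies $F_{\lambda,\xi}(A\phi)=\lambda\,F_{\lambda,\xi}(\phi)$ because $U$ intertwines $A$ with multiplication by $\lambda$. Completeness then follows readily: if $\phi\neq0$ then $U\phi\neq0$, so $j_\lambda\phi\neq0$ on a set of positive $\mu$-measure, whence $F_{\lambda,\xi}(\phi)\neq0$ for a suitable $\xi$, and one checks that the $\lambda$ which actually occur carry full $\mu$-measure.

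The hard part will be the middle stage: converting the nuclearity of $\Phi$ into an honest Hilbert--Schmidt rigging that is \emph{compatible with $A$}, and then controlling the measure-theoretic bookkeeping --- choosing measurable representatives of the $j_\lambda$ so that $\Phi$-continuity, the eigenvalue identity (which, being linear in $\phi$, need only be verified on a countable dense subset of $\Phi$ and then extended by continuity), and the non-vanishing all hold off a \emph{single} $\mu$-null set. This is classical but delicate, so in the paper itself I would simply cite the references \cite{em,Rigged} for the precise regularity hypotheses rather than reproduce the full argument.
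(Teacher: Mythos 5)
The paper does not prove this theorem at all: it is quoted from the cited sources \cite{em,Rigged} (the Gelfand--Maurin nuclear spectral theorem), so there is no in-paper argument to compare against, and your sketch follows exactly the standard route those sources take --- direct-integral diagonalization of $A$, a Hilbert--Schmidt rigging supplied by nuclearity, fibre evaluation maps $j_\lambda$ whose adjoints give the $F_\alpha$, and the single-null-set measurable bookkeeping. You are also right to flag that nuclearity of $\Phi$ (built into the references' definition of a rigged Hilbert space, though not stated in the paper's) is an essential hypothesis without which the conclusion fails, and deferring the delicate measurable-selection details to \cite{em,Rigged} is consistent with the paper's own treatment.
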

We will now show how this could be applied to a quantum field. For definiteness, we work with a scalar field on Minkowski space. We write
\begin{align}
\phi(x,t)&=\int\frac{d^3p}{(2\pi)^3}\frac{1}{\sqrt{2\omega_p}} \left(\anni_pe^{ip\cdot x-\omega_p t}+\crea_pe^{-ip\cdot x+\omega_p t}\right)\\
&=:\frac{1}{\sqrt{2}} \left(\phi_-(x,t)+\phi_+(x,t)\right). 
\end{align}
$\omega_p$ are the eigenvalues
\begin{equation}
\omega_p= \sqrt{p^2+m^2}
\end{equation}
of the operator 
\begin{equation}
\label{eq_energy_plane_wave}
E=\sqrt{-\Delta+m^2}, 
\end{equation}
and $a_p$, $\crea _p$ are standard momentum space annihilation/creation operators with 
\begin{equation}
[\anni_p,\crea_q]=(2\pi)^3\delta(p,q). 
\end{equation}
In the following, we will set $t=0$ and drop the time argument from all the functions. To simplify notation, we also define the operator 
\begin{equation}
    D=E^{\frac{1}{4}}. 
\end{equation}
Consequently, 
\begin{equation}
\label{eq_aadagger}
\anni(x):=(D^2\phi_-)(x), \qquad \crea(x):=(D^2\phi_+)(x)
\end{equation}
are standard momentum space annihilation/creation operators with 
\begin{equation}
\label{eq_ann}
[a(f_1),\crea (f_2)]=\scpr{f_1}{f_2}_{h}\unitelement
\end{equation}
over the Fock space 
\begin{equation}
\label{eq:fock_minkowski}
\mathcal{H}=\mathcal{F}(\curlyh), \qquad \curlyh=\mathcal{L}^2(\mathbb{R}^3, d^3x).
\end{equation}
Thus, both $\mathcal{H}$ and $\mathcal{h}$ are separable. 
The operators $\phi(f):= \int f(x)\phi(x)\,d^3x$ for smooth, real valued functions $f$ of compact support are mutually commuting and self-adjoint on $\mathcal{H}$. Therefore they must have a common set of generalized eigenstates. In the following, we want to investigate such states $|F)$ with the property 
\begin{equation}
\label{eq_eigen}
\phi(f)|F)= F(f)|F) 
\end{equation}
for a suitable class of real valued functions $F$. 
It will be useful to work with the dense domain $\mathcal{D}\subset \mathcal{H}$
\begin{equation}
\label{eq_nice}
\mathcal{D}=\text{span} \{ \prod_{\text{finite}} \crea(f_i)\ket{0}\;|\; f_1, f_2, \ldots\in \mathcal{S}(\mathbb{R}^3) \},
\end{equation}
where $\mathcal{S}(\mathbb{R}^3)$ denotes the Schwartz functions. $\mathcal{D}$ is contained in the domain of the $\phi(f)$. 
Consider the following operator: 
\begin{align}
O(F)&={\pi}^{-\frac{1}{4}}  
e^{\frac{1}{2}\scpr{D F}{DF}_h}
e^{-\frac{1}{2}\scpr{D\phi_+-\sqrt{2}DF }{D\phi_+-\sqrt{2}DF}}\\
&={\pi}^{-\frac{1}{4}}  
e^{-\frac{1}{2}\scpr{DF}{DF}_h}
e^{-\frac{1}{2}\int (D^{-1}\crea (x))^2\, d^3x}
e^{\sqrt{2} \crea (F)}. 
\label{eq_O}
\end{align}
This definition requires, at minimum that $F$ is such that $DF \in \curlyh$. We have the following.
\begin{lemma}
Formally, i.e. without consideration of domains, 
\begin{equation}
\phi(f)O(F)=F(f)O(F)+\frac{1}{\sqrt{2}}O(F) \anni (D^{-2} f).
\label{eq_eigen2}
\end{equation}
\end{lemma}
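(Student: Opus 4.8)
The plan is to prove \eqref{eq_eigen2} by a direct formal computation built on the canonical commutation relations \eqref{eq_ann} and on the observation that $O(F)$, written in its first form in \eqref{eq_O}, is (up to a $c$-number prefactor) the exponential of $-\tfrac12\scpr{D\phi_+-\sqrt2DF}{D\phi_+-\sqrt2DF}$, an operator that is a polynomial of degree two in \emph{creation} operators only. First I would split $\phi(f)=\int f(x)\phi(x)\,d^3x=\tfrac1{\sqrt2}\bigl(\phi_-(f)+\phi_+(f)\bigr)$ into its annihilation and creation parts; using \eqref{eq_aadagger} and the self-adjointness of $D$ this gives the smeared operators $\phi_+(f)=\crea(D^{-2}f)$ and $\phi_-(f)=\anni(D^{-2}f)$.

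Since the exponent of $O(F)$ contains only creation operators, the creation part commutes with $O(F)$, so $\phi_+(f)\,O(F)=O(F)\,\crea(D^{-2}f)$. The substance of the argument is then commuting the annihilation part through $O(F)$. By \eqref{eq_ann} the commutator of $\anni(D^{-2}f)$ with any creation operator is a $c$-number; hence its commutator with the quadratic exponent of $O(F)$ is linear in creation operators, and iterating once more produces only a $c$-number. Consequently the Baker--Campbell--Hausdorff series for $O(F)^{-1}\anni(D^{-2}f)\,O(F)$ truncates after the double commutator, yielding an expression of the form
\[
\anni(D^{-2}f)\,O(F)=O(F)\Bigl(\anni(D^{-2}f)+L(f)+c(f)\Bigr),
\]
with $L(f)$ linear in creation operators and $c(f)$ a $c$-number. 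The whole purpose of the ``$-\sqrt2DF$'' shift inside the exponent, together with the prefactor $e^{\frac12\scpr{DF}{DF}_h}$ and the identity $\scpr{DF}{D^{-1}\,\cdot\,}_h=\scpr{F}{\,\cdot\,}_h$, is to force $L(f)=-\crea(D^{-2}f)=-\phi_+(f)$ and $c(f)=\sqrt2\,F(f)$.

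Collecting the two contributions and multiplying by $\tfrac1{\sqrt2}$, the $\crea(D^{-2}f)$ coming from $\phi_+(f)\,O(F)$ cancels against $L(f)=-\crea(D^{-2}f)$, the $c$-number term combines with the $\tfrac1{\sqrt2}$ to give $F(f)\,O(F)$, and the residual annihilation operator gives $\tfrac1{\sqrt2}O(F)\,\anni(D^{-2}f)$; this is exactly \eqref{eq_eigen2}.

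The step I expect to be the main obstacle is the commutation through the Gaussian: one has to keep the operator orderings straight inside the exponentials and track the combinatorial coefficients of the truncated BCH expansion carefully, and in particular verify the two cancellations $L(f)=-\phi_+(f)$ and $c(f)=\sqrt2\,F(f)$. The second of these is the delicate one, since it requires all the $D$-dependence of the double-commutator term and of the prefactor to conspire so that the $c$-number reduces to the unweighted pairing $F(f)=\int F(x)f(x)\,d^3x$; this is precisely the property for which the ``completed square'' form of $O(F)$ was chosen.
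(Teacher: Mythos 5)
Your overall strategy coincides with the paper's: split $\phi(f)=\tfrac{1}{\sqrt{2}}\bigl(\anni(D^{-2}f)+\crea(D^{-2}f)\bigr)$, use that $O(F)$ is a functional of creation operators only (so the creation part commutes through it), and move the annihilation part through $O(F)$ by a commutator expansion that terminates because the exponent is a degree-two polynomial in $\crea$. The paper packages that last step as the functional-derivative identity $[\anni(f),\Gamma[\crea(\slotdot)]]=\int f(x)\,\delta\Gamma/\delta\crea(x)\,d^3x$ together with \eqref{eq_funct_deri_comm}, which is exactly your truncated adjoint action in different notation. One small correction to your mechanism: the $c$-number does not arise from the double commutator. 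The single commutator of $\anni(D^{-2}f)$ with the exponent already contains both the creation-linear piece (from the quadratic term) and the $c$-number (from the linear term $\sqrt{2}\,\crea(F)$), and the double commutator vanishes identically, since the first commutator is built from creation operators and $c$-numbers which commute with the exponent.

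The more substantive issue is that the two identities $L(f)=-\crea(D^{-2}f)$ and $c(f)=\sqrt{2}\,F(f)$ are the entire content of the lemma, and your proposal asserts them ``by design'' instead of computing them; the paper's proof consists precisely of this computation. Your instinct that this is the delicate point is correct, and it is not a formality: if you evaluate $[\anni(D^{-2}f),O(F)]$ literally from \eqref{eq_O}, the quadratic part of the exponent gives $-\crea(D^{-4}f)$ (not $-\crea(D^{-2}f)$) and the linear part gives $\sqrt{2}\,F(D^{-2}f)$ (not $\sqrt{2}\,F(f)$), because the identity \eqref{eq_funct_deri_comm} must be smeared against $D^{-2}f$, not against $f$. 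The stated conclusion \eqref{eq_eigen2} comes out only if the completed square is read with $D^{2}\phi_{+}$ and $D^{2}F$ (equivalently $e^{-\frac{1}{2}\int \crea(x)^2 d^3x}\,e^{\sqrt{2}\crea(D^{2}F)}$ up to the prefactor), which is the identification the paper's final display effectively makes. So to complete your argument you must actually carry out the deferred commutator and confront this bookkeeping of powers of $D$; appealing to the design of $O(F)$ does not settle it, and as written the claimed conspiracy of $D$-factors would fail for the formula \eqref{eq_O} taken at face value.
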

\begin{proof}
One can do a direct calculation, but it is easier to realize that 
\begin{equation}
[\anni(f), \Gamma[\crea (\slotdot)]]=\int f(x) \frac{\delta \Gamma[\crea (\slotdot)]}{\delta \crea (x)}\, d^3x
\end{equation}
where $\Gamma[\slotdot]$ is a functional which we assume to be differentiable. Then, noting 
\begin{equation}
\frac{\delta}{\delta \crea (x)} O(F)= -(D^{-2}\crea (x)-\sqrt{2}F(x))O(F),   
\label{eq_funct_deri_comm}
\end{equation}
one finds
\begin{equation}
\begin{aligned}
    \phi(f)O(F)&=\frac{1}{\sqrt{2}}(\anni(D^{-2}f)O(F)+\crea(D^{-2}f)O(F))\\
    &= \frac{1}{\sqrt{2}}([\anni(D^{-2}f),O(F)]+O(F)\anni(D^{-2}f)+O(F)\crea(D^{-2}f))\\
    &=  \frac{1}{\sqrt{2}}( -\crea (D^{-2}f)O(F)+\sqrt{2}F(f)O(F)+O(F)\anni(D^{-2}f)+O(F)\crea(D^{-2}f))\\
    &= F(f)O(F)+ \frac{1}{\sqrt{2}}O(F)\anni(D^{-2}f)
\end{aligned}
\end{equation}
as promised. For commuting $\crea(D^{-2}f)$ and $O(F)$ past each other we have appealed to the fact that $O(F)$ itself is defined entirely in terms of creation operators. 
\end{proof}
This lemma shows that $O(F)\ket{0}$ are formally the sought-for eigenstates \eqref{eq_eigen}. But these are obviously not normalizable, so in what sense do they even exist? 
\begin{lemma}
For $F\in \mathcal{S}'(\mathbb{R}^3),DF \in \curlyh $, the objects
\begin{equation}
|F)=O(F)\ket{0}
\end{equation}
define linear forms over the domain $\mathcal{D}$ \eqref{eq_nice}. 
\end{lemma}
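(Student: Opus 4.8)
The plan is to interpret the definition of $|F)$ weakly. For a test vector $\psi$ in the domain $\curlyD$ of \eqref{eq_nice} I would declare
\begin{equation}
  (F\vert\psi):=\scpr{0}{O(F)^\dagger\psi},
\end{equation}
where $O(F)^\dagger$ is the operator obtained from the normal-ordered expression \eqref{eq_O} by replacing every creation operator with the corresponding annihilation operator; the scalar prefactor $\pi^{-1/4}e^{-\frac12\scpr{DF}{DF}_h}$ is finite precisely because $DF\in\curlyh$. This right-hand side is manifestly linear in $\psi$, so the entire content of the lemma is that $O(F)^\dagger$ genuinely maps $\curlyD$ into $\curlyH$, i.e. that the pairing is a finite complex number; the linearity claimed in the statement is then automatic.

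First I would note that every element of $\curlyD$ has bounded particle number: it is a finite linear combination of vectors $\creaF{g_1}\cdots\creaF{g_N}\ket{0}$ with all $g_i\in\curlyS(\realnumbers^3)$. The operator $O(F)^\dagger$ is a product of two exponentials, $e^{\sqrt2\,\anniF{F}}$ and $e^{-\frac12\INTset{\realnumbers^3}\dxd{3}{x}(D^{-1}\anniF{x})^2}$, both built entirely from annihilation operators, which strictly lower the particle number, by one and by two respectively. Applied to a vector of particle number $N$, each of the two exponential series therefore terminates after finitely many terms, so $O(F)^\dagger\psi$ is a finite linear combination of Fock vectors, and pairing it with $\ket0$ extracts its zero-particle component, a finite sum of scalars.

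It remains to see that each of those scalars is itself finite. Acting on $\creaF{g_1}\cdots\creaF{g_N}\ket0$, the operator $\anniF{F}$ yields $\sum_i F(g_i)\,\creaF{g_1}\cdots\widehat{\creaF{g_i}}\cdots\creaF{g_N}\ket0$, and every coefficient $F(g_i)$ is finite since $F$ is a tempered distribution and each $g_i$ is a Schwartz function --- this is where the hypothesis $F\in\curlyS'(\realnumbers^3)$ is used. Likewise, $\INTset{\realnumbers^3}\dxd{3}{x}(D^{-1}\anniF{x})^2$ equals, as an operator, $\iint\dxd{3}{x}\dxd{3}{y}\,(E^{-1/2})(x,y)\,\anniF{x}\anniF{y}$, and it acts on such a state by contracting two of the $g_i$, producing coefficients $\scpr{g_i}{E^{-1/2}g_j}_h$, which are finite because $D^{-2}=E^{-1/2}$ maps Schwartz functions into $\curlyh$. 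Hence $O(F)^\dagger$ sends $\creaF{g_1}\cdots\creaF{g_N}\ket0$ to a finite linear combination of sub-products $\creaF{g_{i_1}}\cdots\creaF{g_{i_k}}\ket0$, $\{i_1,\dots,i_k\}\subseteq\{1,\dots,N\}$, each weighted by a finite product of numbers of the form $F(g_i)$ and $\scpr{g_i}{E^{-1/2}g_j}_h$. Pairing with $\ket0$ keeps only the $k=0$ term; the result is a finite number, linear in $\psi$, which proves the lemma and makes \eqref{eq_genFockES} literal on $\curlyD$.

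The point that needs the most care --- the main obstacle --- is the passage from the purely formal symbol $O(F)\ket0$, which is not a Fock vector (its would-be norm involves the non-Hilbert-Schmidt operator $E^{-1/2}$, together with a divergent one-particle contribution when $F\notin\curlyh$), to the honest linear form above. The resolution rests on two structural features that I would make explicit: $\curlyD$ consists of vectors of finite particle number, which forces the exponential series in $O(F)^\dagger$ to truncate; and $\curlyD$ is generated by Schwartz functions, against which the tempered distribution $F$ and the smoothing operator $E^{-1/2}$ produce only finite pairings. Nothing about convergence of the series for $O(F)\ket0$ as a vector is needed --- only its matrix elements against $\curlyD$ --- and it is exactly those matrix elements on which the formal identity \eqref{eq_eigen2} will subsequently be brought to bear.
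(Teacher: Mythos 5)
Your proof is correct and is essentially the paper's own argument: the paper likewise defines $\langle X|F)=\scpr{X}{O(F)\,0}$ weakly on $\mathcal{D}$, uses the finite particle number of elements of $\mathcal{D}$ to truncate the exponential series, and controls the quadratic piece by commuting $a_2=\int (D^{-1}\anni(x))^2\,d^3x$ through the creation operators to produce finite contractions $\scpr{f_l}{D^{-2}f_m}_h$ of Schwartz functions, with $DF\in\curlyh$ handling the scalar prefactor. Passing to $O(F)^\dagger$ acting on the test vector is just the adjoint formulation of the same computation (and you make slightly more explicit that the coefficients $F(g_i)$ are finite because $F\in\mathcal{S}'(\mathbb{R}^3)$ pairs against Schwartz functions), so there is no substantive difference.
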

\begin{proof}
We attempt to define the linear form
\begin{equation}
\langle X | F)= \sscpr{X}{O(F)}{0}, \qquad X\in \mathcal{D}
\end{equation}
by expanding the exponentials in a Taylor series and taking the limit. Since we assume $DF \in \curlyh$, the first exponential in \eqref{eq_O} is no problem, and since 
$\mathcal{D}$ only contains elements with finite particle number, the third exponential also represents no problem. But we have to consider the definition of the operator 
\begin{equation}
a_2=\int (D^{-1}\anni(x))^2\, d^3x
\end{equation}
and its adjoint which is used in the definition of $O(F)$. Using 
\begin{equation}
    a_2\, \crea(f)=2 \anni(D^{-2}f) +  \crea(f)\, a_2
\end{equation}
repeatedly to commute the annihilation operators to the right, one shows  
\begin{equation}
\scpr{\prod_k \crea (f_k)\Omega}{a_2^\dagger \Psi}\equiv
\scpr{a_2\prod_k \crea (f_k)\Omega}{\Psi}
= \sum_{(l,m), l \neq m} \scpr{f_l}{D^{-2} f_m}_h \scpr{\prod_{k\neq l,m} \crea (f_k)\Omega}{\Psi}. 
\end{equation}
Here $\Omega =\ket{0}$ the vacuum. Since Schwartz functions are also Schwartz after Fourier transform, it is easy to see that arbitrary positive and negative powers of $D$ leave $\mathcal{S}(\mathbb{R}^3)$ invariant. Consequently the inner products involving $D^{-2}$ are finite, and so are the products, and the sums. This shows that $a_2$ and its powers are well-defined on  $\mathcal{D}$ and, since $\mathcal{D}$ only contains elements with finite particle number, its exponential also represents no problem.
\end{proof}
Now note that the operators $\phi(f)$, $f\in \mathcal{S}(\mathbb{R}^3)$ map $\mathcal{D}$ into itself, because $D^{-1}$ maps  $\mathcal{S}(\mathbb{R}^3)$ into itself. We strongly suspect that $\mathcal{D}$ can be used to create a rigged Hilbert space, suitable for application of theorem \ref{thm_1}.
\begin{conjecture}
There is a topology on $\mathcal{D}$ that
\begin{enumerate}
\item is stronger than that induced from $\mathcal{H}$. 
\item is strong enough such that $|F)\in \mathcal{D}' \qquad \forall F: DF\in \curlyh$. 
\item is weak enough such that  $|F)\in \mathcal{D}' \qquad F: DF\in \curlyh$ comprise  all generalized eigenstates. 
\end{enumerate}
\end{conjecture}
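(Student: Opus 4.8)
\medskip\noindent\emph{Towards a proof.}\quad The plan is to topologize $\mathcal{D}$ sector by sector in the particle number. Writing a generic element of \eqref{eq_nice} as a finite linear combination of symmetrized tensor products identifies $\mathcal{D}$ linearly with the \emph{algebraic} symmetric Fock space $\bigoplus_{N\geq 0}\mathcal{S}_{\mathrm{sym}}(\mathbb{R}^{3N})$ over $\mathcal{S}(\mathbb{R}^3)$. I would give each summand its usual nuclear Fréchet (Schwartz) topology and $\mathcal{D}$ the locally convex direct-sum topology; this makes $\mathcal{D}$ nuclear and complete. Since a linear map out of a direct sum is continuous iff each of its restrictions to the summands is, and $\mathcal{S}_{\mathrm{sym}}(\mathbb{R}^{3N})\hookrightarrow L^2_{\mathrm{sym}}(\mathbb{R}^{3N})$ is continuous for every $N$, the inclusion $\mathcal{D}\hookrightarrow\curlyH$ is continuous, which is item (1); moreover $\mathcal{D}\subset\curlyH\subset\mathcal{D}'$ is then a genuine rigged Hilbert space, so Theorem \ref{thm_1} already guarantees a complete system of generalized eigenstates of the commuting self-adjoint family $\{\sfield{f}: f\in\mathcal{S}(\mathbb{R}^3)\text{ real}\}$ inside $\mathcal{D}'$. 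For item (2), work in the massive case $m>0$ so that $E$ and all its real powers preserve $\mathcal{S}(\mathbb{R}^3)$ and $\mathcal{S}'(\mathbb{R}^3)$; then $DF\in\curlyh$ forces $F\in\mathcal{S}'(\mathbb{R}^3)$, and \eqref{eq_O} shows that the $N$-particle component of $|F)$ is a \emph{finite} symmetrized sum of tensor products of $F$ with a fixed tempered ``pair kernel'' on $\mathbb{R}^6$ --- the diagonal restriction of the Schwartz kernel of $D^{-2}$. Each such component therefore lies in $\mathcal{S}'_{\mathrm{sym}}(\mathbb{R}^{3N})$, and since the dual of the locally convex direct sum is the \emph{unrestricted} product $\prod_N\mathcal{S}'_{\mathrm{sym}}(\mathbb{R}^{3N})$, one concludes $|F)\in\mathcal{D}'$, with $\langle X|F)$ for $X\in\mathcal{D}$ a finite sum of distributional pairings.

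The difficulty is item (3), and in fact the topology just described almost certainly fails it. Running the eigenvalue equation backwards: if $\Lambda\in\mathcal{D}'$ satisfies $\Lambda(\sfield{f}X)=\ell(f)\,\Lambda(X)$ for all $X\in\mathcal{D}$, additivity of $f\mapsto\sfield{f}$, together with continuity of $\Lambda$ and of $f\mapsto\Lambda(\sfield{f}X)$, forces $\ell(f)=F(f)$ for a unique real $F\in\mathcal{S}'(\mathbb{R}^3)$. Splitting $\sfield{f}=\tfrac1{\sqrt2}\bigl(\anni(D^{-2}f)+\crea(D^{-2}f)\bigr)$ then turns the equation into a recursion relating the components $\Lambda_{N-1}$, $\Lambda_{N+1}$ and $F(f)\Lambda_N$, whose solution is uniquely fixed by $F$ and by $\Lambda_0=\Lambda(\ket 0)$. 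But by \eqref{eq_eigen2} the vector $O(F)\ket 0$ \emph{stripped of its scalar prefactor} $e^{\tfrac12\scpr{DF}{DF}_h}$ --- i.e. $e^{\sqrt2\,\crea(F)-\tfrac12\int(D^{-1}\crea(x))^2\,\mathrm{d}^3x}\ket 0$, precisely the object that makes the original $O(F)$ meaningless when $DF\notin\curlyh$ --- solves the \emph{same} equation for \emph{any} $F\in\mathcal{S}'(\mathbb{R}^3)$, and its components are, by the same count, tempered distributions, hence it lies in $\prod_N\mathcal{S}'_{\mathrm{sym}}(\mathbb{R}^{3N})=\mathcal{D}'$. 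So for the symmetric-Fock topology there is a generalized eigenstate for every $F\in\mathcal{S}'(\mathbb{R}^3)$, not only for those with $DF\in\curlyh$, and (3) fails.

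The remedy I would pursue is to replace the symmetric-Fock topology by the \emph{coarsest} locally convex topology $\tau_*$ on $\mathcal{D}$ that is still finer than the $\curlyH$-topology and still makes every $|F)$ with $DF\in\curlyh$ continuous: concretely, the topology generated by $\norm{\cdot}_\curlyH$ together with the seminorms $X\mapsto\lvert\langle X|F)\rvert$ for $DF\in\curlyh$. Then (1) holds --- strictly, since the $|F)$ are not $\curlyH$-continuous --- and (2) holds, both by construction; and because $\sfield{f}$ maps $\mathcal{D}$ into itself, the eigenvalue equation still makes sense on $\mathcal{D}'$ even though $\sfield{f}$ need no longer be $\tau_*$-continuous. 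The recursion of the previous paragraph still shows that every $\tau_*$-continuous generalized eigenstate is a scalar multiple of $e^{-\tfrac12\scpr{DF}{DF}_h}|F)$ for some $F\in\mathcal{S}'(\mathbb{R}^3)$, so (3) reduces to the single assertion that such an eigenstate is $\tau_*$-continuous only when $DF\in\curlyh$. By the standard fact that a functional dominated by finitely many seminorms $\lvert\ell_i(\cdot)\rvert$ lies in the closed span of the $\ell_i$, this is in turn the statement that $|F)\notin\overline{\curlyH+\operatorname{span}_{\mathrm{fin}}\{|F_i):DF_i\in\curlyh\}}$ whenever $DF\notin\curlyh$.

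That closure statement is where essentially all the work lies, and I regard it as the main obstacle --- and as a real risk to the conjecture as stated. The natural attack is through the low particle-number components: the one-particle component of $|F)$ is proportional to $F$, that of a vector in $\curlyH$ is an $L^2$ function, and that of $|F_i)$ is proportional to $F_i\in D^{-1}\curlyh$; since $D=E^{1/4}$ is of order $\tfrac12$, so that $D L^2\not\subseteq\curlyh$, one would try to show that no $\curlyH$-convergent combination of these can reproduce an $F$ with $DF\notin\curlyh$, and then propagate the obstruction into the two-particle sector, where the $\scpr{DF}{DF}_h$-type quantities produced by commuting annihilation operators through $\crea(F)$ first enter. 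If this succeeds, items (1)--(3) hold for $\tau_*$; if it does not, the conjecture should be weakened --- for example to the statement that the $|F)$ with $DF\in\curlyh$ are dense among the generalized eigenstates, or that they exhaust those eigenstates which additionally lie in the closure of $\curlyH$ inside $\mathcal{D}'$.
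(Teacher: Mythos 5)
The paper does not prove this statement: it is deliberately left as a conjecture, supported only by the two lemmas preceding it (the formal eigenvalue relation for $O(F)$ and the fact that $|F)=O(F)\ket{0}$ defines a linear form on $\mathcal{D}$), so there is no proof of record to compare yours against. Judged on its own terms, your text is a useful exploration but not a proof, and you say so yourself: item (3) is never established. For your first candidate topology (the locally convex direct sum of sector-wise Schwartz topologies) you give a convincing formal argument that (3) actually \emph{fails}, since the prefactor-stripped vector $e^{\sqrt{2}\,\crea(F)-\frac{1}{2}\int(D^{-1}\crea(x))^2 d^3x}\ket{0}$ has tempered components and satisfies the eigenvalue equation for every real $F\in\mathcal{S}'(\mathbb{R}^3)$, not only for $DF\in\curlyh$; this is genuinely informative about how delicate item (3) is. For your second candidate $\tau_*$, items (1) and (2) hold by construction, but (3) is reduced to the domination/closure statement that no $|F)$ with $DF\notin\curlyh$ is bounded by $\norm{\cdot}_{\mathcal{H}}$ together with finitely many seminorms $\absvalue{\langle\,\cdot\,|F_i)}$ with $DF_i\in\curlyh$ --- and that statement, which is essentially the whole mathematical content of the conjecture, is left open. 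The proposal therefore relocates the gap rather than closing it.

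Two further cautions. First, the conjecture exists in the paper to feed Theorem \ref{thm_1} and the ensuing corollary; that theorem needs the operators $\sfield{f}$ to map the test space continuously into itself (and, in its standard Gelfand--Maurin form, nuclearity of that space). You concede that $\sfield{f}$ need not be $\tau_*$-continuous, so even a complete verification of (1)--(3) for $\tau_*$ would not by itself deliver the completeness used in item 3 of the corollary; a substitute argument would be needed, or the topology would have to be chosen so that the $\sfield{f}$ remain continuous --- which is in tension with making $\tau_*$ as coarse as possible. Second, a minor slip: the $N$-particle sector of $\mathcal{D}$ as defined in \eqref{eq_nice} is the algebraic symmetric tensor power of $\mathcal{S}(\mathbb{R}^3)$, a dense proper subspace of the Schwartz space on $\mathbb{R}^{3N}$, so the identification should be with the algebraic symmetric tensor algebra carrying the induced topology; this does not affect your continuity arguments for (1) and (2). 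Your closing suggestion that the conjecture may need weakening (density of the $|F)$ with $DF\in\curlyh$ among generalized eigenstates, or a restriction to eigenstates in a suitable closure of $\mathcal{H}$) is a reasonable hedge, but as it stands the decisive step remains unproven both in your proposal and in the paper.
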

If this conjecture is true, then 
\begin{corollary}
\begin{enumerate}
\item The joint eigenstates of $\phi(f)$, $f\in \mathcal{S}(\mathbb{R}^3)$ are 
\begin{equation}
|F)\quad \text{ with } F\in \mathcal{S}'(\mathbb{R}^3), \qquad DF\in L^2(\mathbb{R}^3)
\end{equation}
\item Functions of $\phi(f)$ have 
\begin{equation}
A(\phi(f)) |F)= A(F(f)) |F)
\end{equation}
\item If for two operators $A,B$ on $\Phi$
\begin{equation}
A |F) = B |F) \qquad \forall F:  F\in \mathcal{S}'(\mathbb{R}^3), \quad DF\in L^2(\mathbb{R}^3)
\end{equation} 
then $A=B$. 
\end{enumerate}
\end{corollary}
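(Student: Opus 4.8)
The plan is to obtain the three claims as direct consequences of the conjecture together with Theorem \ref{thm_1}. Granting the conjecture, property~1 — combined with the density of $\mathcal{D}$ in $\mathcal{H}$ and the separability of $\mathcal{H}$ established above — makes $\mathcal{D}\subset\mathcal{H}\subset\mathcal{D}'$ a rigged Hilbert space, and the operators $\phi(f)$, $f\in\mathcal{S}(\mathbb{R}^3)$, being mutually commuting, self-adjoint and mapping $\mathcal{D}$ continuously into itself, satisfy the hypotheses of Theorem \ref{thm_1}. Hence they admit a complete system of joint generalized eigenstates in $\mathcal{D}'$, and everything below is a matter of identifying this system and exploiting its completeness.

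For part~(1) I would first check that each $|F)$ with $DF\in\curlyh$ is such a joint generalized eigenstate. Well-definedness of $|F)=O(F)\ket{0}$ as an element of $\mathcal{D}'$ is precisely the second lemma of the Appendix. For the eigenvalue relation, the first lemma gives $\phi(f)O(F)\ket{0}=F(f)O(F)\ket{0}+\frac{1}{\sqrt{2}}O(F)\,a(D^{-2}f)\ket{0}$, whose last term vanishes because annihilation operators annihilate the Fock vacuum; evaluating the transpose of $\phi(f)$ on $|F)$ against an arbitrary $X\in\mathcal{D}$, and using self-adjointness of $\phi(f)$ inside the convergent dual pairing established in the second lemma, this identity yields $\langle X|\phi(f)|F)=F(f)\langle X|F)$, i.e. $\phi(f)|F)=F(f)|F)$ in $\mathcal{D}'$. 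Thus the $|F)$ are joint generalized eigenstates, and property~3 of the conjecture gives the reverse inclusion, so the joint eigenstates are exactly the stated family.

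For part~(2), polynomial $A$ is handled by iterating the eigenvalue relation: since $\phi(f)$, and therefore each of its powers, maps $\mathcal{D}$ into itself, the transpose acts on $\mathcal{D}'$ and $\phi(f)^{n}|F)=F(f)^{n}|F)$ follows by induction, whence $A(\phi(f))|F)=A(F(f))|F)$ by linearity; for more general $A$ one would approximate by polynomials — uniformly on the portion of the spectrum carrying the relevant spectral measures, via the spectral theorem — and pass to the limit after pairing with a fixed vector of $\mathcal{D}$. For part~(3), set $C=A-B$, an operator mapping $\mathcal{D}$ into itself. The assumption $A|F)=B|F)$ for all admissible $F$ means, via the transpose action, that $\langle CX\,|\,F)=0$ for every $X\in\mathcal{D}$ and every such $F$; were $CX\neq 0$ for some $X$, the completeness clause of Theorem \ref{thm_1}, together with the identification of the generalized eigenstates from part~(1), would supply an $F$ with $\langle CX\,|\,F)\neq 0$ — a contradiction. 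Hence $C=0$ on $\mathcal{D}$, and $A=B$ by density.

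The steps above are essentially bookkeeping once the conjecture is in hand; the delicate points lie elsewhere. Internally, one must be scrupulous about how the unbounded operators $\phi(f)$ and $A(\phi(f))$ act on the conjugate space: the transpose is defined only because these operators preserve $\mathcal{D}$, continuity in the conjecture's topology has to be used, and in part~(2) the admissible class of functions $A$ must be pinned down carefully since the joint spectral measure need not be compactly supported. The genuine obstacle, however, is upstream: the whole corollary is conditional on the conjecture, which we have only made plausible and not established.
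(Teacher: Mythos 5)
Your proposal is correct and follows essentially the route the paper intends: the corollary is stated there without proof as an immediate consequence of the conjecture, and your elaboration — eigenvalue relation for $|F)$ from the first lemma together with $a(D^{-2}f)\ket{0}=0$, well-definedness from the second lemma, exhaustiveness from item~3 of the conjecture, and the separation argument for parts~(2) and~(3) from the completeness clause of Theorem~\ref{thm_1} — is exactly the implicit argument, with the same conditional status on the unproven conjecture that the paper itself acknowledges.
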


\end{document}